
\documentclass[aps,pra,twocolumn,nofootinbib, superscriptaddress,10pt]{revtex4-2}


\usepackage{graphicx}
\usepackage{epstopdf}
\usepackage{amsmath}
\usepackage{amssymb}
\usepackage{mathrsfs}
\usepackage{amsthm}
\usepackage{bm}
\usepackage{url}
\usepackage[T1]{fontenc}
\usepackage{csquotes}
\MakeOuterQuote{"}
\usepackage{thmtools, thm-restate}

\usepackage{chngcntr} 


\newtheoremstyle{note}
{\topsep/2}               
{\topsep/2}               
{}                      
{\parindent}            
{\itshape}              
{.}                     
{5pt plus 1pt minus 1pt}
{}

\theoremstyle{note}
\newtheorem{thm}{Theorem}
\newtheorem{lem}{Lemma}

\newtheorem{corollary}{Corollary}
\newtheorem{proposition}{Proposition}

\theoremstyle{definition}

\theoremstyle{remark}

\newtheorem{thm*}{Theorem}

\makeatletter
\expandafter\newcommand\csname thethm*default\endcsname{\thethm*}
\newcommand{\thmstarnum}[1]{\expandafter\gdef\csname thethm*\endcsname{#1*}}
\thmstarnum{\thethm}
\expandafter\g@addto@macro\csname endthm*\endcsname{\thmstarnum{\thethm}}
\makeatother

\newtheorem{lem*}{Lemma}

\makeatletter
\expandafter\newcommand\csname thelem*default\endcsname{\thelem*}
\newcommand{\lemstarnum}[1]{\expandafter\gdef\csname thelem*\endcsname{#1*}}
\lemstarnum{\thelem}
\expandafter\g@addto@macro\csname endlem*\endcsname{\lemstarnum{\thelem}}
\makeatother


\def\vec#1{\bm{#1}} 

\newcommand{\tr}{\operatorname{tr}}

\newcommand{\rk}{\operatorname{rank}}
\newcommand{\spa}{\operatorname{span}}

\newcommand{\imply}{\mathrel{\Rightarrow}}


\newcommand{\rmw}{\mathrm{w}}

\newcommand{\rmL}{\mathrm{L}}

\newcommand{\rmR}{\mathrm{R}}
\newcommand{\rmS}{\mathrm{S}}
\newcommand{\rmT}{\mathrm{T}}

\newcommand{\eig}{\mathrm{eig}}

\newcommand{\caF}{\mathcal{F}}
\newcommand{\caH}{\mathcal{H}}

\newcommand{\caJ}{\mathcal{J}}

\newcommand{\caL}{\mathcal{L}}

\newcommand{\caR}{\mathcal{R}}
\newcommand{\caS}{\mathcal{S}}

\newcommand{\caV}{\mathcal{V}}

\newcommand{\bbR}{\mathbb{R}}

\newcommand{\supp}{\mathrm{supp}}

\newcommand{\scrA}{\mathscr{A}}
\newcommand{\scrB}{\mathscr{B}}

\newcommand{\scrD}{\mathscr{D}}

\newcommand{\scrP}{\mathscr{P}}

\newcommand{\lsp}{\hspace{0.1em}}



\newcommand{\be}{\begin{equation}}
\newcommand{\ee}{\end{equation}}
\newcommand{\ba}{\begin{align}}
\newcommand{\ea}{\end{align}}

\def\<{\langle}  
\def\>{\rangle}  

\newcommand{\dket}[1]{| #1\>\!\>}

\newcommand{\dbra}[1]{\<\!\< #1|}

\newcommand{\dinner}[2]{\<\!\< #1| #2\>\!\>}

\newcommand{\douter}[2]{| #1\>\!\>\<\!\< #2|}


\newcommand{\range}{\operatorname{range}}




\newcommand{\bid}{\bar{\mathbf{I}}}




\def\eqref#1{\textup{(\ref{#1})}}  
\newcommand{\eref}[1]{Eq.~\textup{(\ref{#1})}}
\newcommand{\Eref}[1]{Equation~\textup{(\ref{#1})}}

\newcommand{\esref}[2]{Eqs.~\textup{(\ref{#1})} and \textup{(\ref{#2})}}
\newcommand{\essref}[3]{Eqs.~\textup{(\ref{#1})}, \textup{(\ref{#2})}, and \textup{(\ref{#3})}}
\newcommand{\Esref}[2]{Equations~\textup{(\ref{#1})} and \textup{(\ref{#2})}}

\newcommand{\fref}[1]{Fig.~\ref{#1}}

\newcommand{\thref}[1]{Theorem~\ref{#1}}
\newcommand{\Thref}[1]{Theorem~\ref{#1}}
\newcommand{\thsref}[1]{Theorems~\ref{#1}}
\newcommand{\Thsref}[1]{Theorems~\ref{#1}}

\newcommand{\lref}[1]{Lemma~\ref{#1}}
\newcommand{\Lref}[1]{Lemma~\ref{#1}}
\newcommand{\lsref}[1]{Lemmas~\ref{#1}}
\newcommand{\Lsref}[1]{Lemmas~\ref{#1}}

\newcommand{\pref}[1]{Proposition~\ref{#1}}
\newcommand{\Pref}[1]{Proposition~\ref{#1}}

\newcommand{\crref}[1]{Corollary~\ref{#1}}

\newcommand{\cref}[1]{Conjecture~\ref{#1}}
\newcommand{\Cref}[1]{Conjecture~\ref{#1}}

\newcommand{\aref}[1]{Appendix~\ref{#1}}

\newcommand{\rcite}[1]{Ref.~\cite{#1}}


\begin{document}
\title{Information Theoretic Significance of Projective Measurements}

\author{Huangjun Zhu}

\email{zhuhuangjun@fudan.edu.cn}

\affiliation{State Key Laboratory of Surface Physics and Department of Physics, Fudan University, Shanghai 200433, China}

\affiliation{Institute for Nanoelectronic Devices and Quantum Computing, Fudan University, Shanghai 200433, China}

\affiliation{Center for Field Theory and Particle Physics, Fudan University, Shanghai 200433, China}

\begin{abstract}
Projective measurements in quantum theory have  a very simple algebraic definition, but their information theoretic significance is quite elusive. 
Here we introduce a simple order relation based on the concentration of Fisher information, which complements the familiar data-processing order. Under this order relation, the information theoretic significance of projective measurements stands out immediately. Notably, projective measurements are exactly those quantum measurements whose extracted Fisher information is as concentrated as possible, which we call Fisher-sharp measurements. We also introduce the concept of sharpness index and show that  it is completely determined by the finest projective measurement
among the coarse graining of a given measurement.  
\end{abstract}

\date{\today}
\maketitle

\emph{Introduction.}---Quantum measurements play a key role in quantum theory and in various tasks in quantum information processing \cite{Neum55,NielC10book,BuscLPY16}. Projective measurements, also known as sharp measurements, are the most fundamental quantum measurements  as discussed in most elementary textbooks. Although they have a very simple algebraic definition, their information theoretic significance is still quite elusive despite the intensive efforts of numerous researchers in the past century. For example, what is sharp in a projective measurement? Does it has a simple information theoretic interpretation.

Here we introduce a simple order relation on quantum measurements based on the concentration of Fisher information.  This order relation can manifest important features that are ignored in the familiar data-processing order \cite{MartM90,Zhu15IC,ZhuHC16,Zhu22,HeinJN22} and is  quite useful to understanding elementary quantum measurements. We then introduce the concept of  Fisher-sharp measurements as  those quantum measurements whose extracted Fisher information is as concentrated as possible. Surprisingly, a quantum measurement is Fisher sharp iff it is a projective measurement. This result establishes a satisfactory connection between the algebraic concept of sharp measurements and the information theoretic concept of Fisher sharp measurements and thereby  endow projective measurements with a simple information theoretic interpretation, which is missing for too long a time. Furthermore, we  introduce the  sharpness index and show that it is completely determined by the   projective index, which characterizes the finest projective measurement
among the coarse graining of a given measurement.  In our study, we derive a number of results on quantum estimation theory, which are of interest to foundational studies and practical applications, such as multiparameter quantum metrology.

\emph{Quantum measurements.}---Let $\caH$ be a  $d$-dimensional Hilbert space. Quantum states on $\caH$ are  represented by density operators, which are positive (semidefinite) operators of trace~1. Here we are interested in the set of positive definite density operators, which is denoted by $\scrD(\caH)$ henceforth. A general quantum measurement on $\caH$ is described by a positive operator-valued measure (POVM)~\cite{NielC10book}, which is composed of a family of positive operators that sum up to the identity operator, denoted by~1 henceforth. 
A POVM is rank~1 if all nonzero POVM elements are rank 1. A projective measurement is described by a projector-valued measure (PVM), which is a special POVM composed of projectors.


The set of POVMs has a natural order relation induced by data or information processing \cite{MartM90,Zhu15IC,ZhuHC16}. 
Given two POVMs  $\scrA=\{A_j\}_j$ and $\scrB=\{B_k\}_k$  on $\caH$,  $\scrA$ is a \emph{coarse graining} of $\scrB$, expressed as $\scrA\leq  \scrB$,  if  $A_j=\sum_k\Lambda_{jk}B_k$
with  a stochastic matrix $\Lambda$. 
 A \emph{grouping} of $\scrB$ is a special coarse graining in which all nonzero entries of $\Lambda$ equal~1. The POVMs $\scrA, \scrB$  are \emph{equivalent} if they are coarse graining of each other. A POVM  is \emph{simple} if every pair of POVM elements is linearly independent, so  no zero POVM element is allowed. Every POVM is equivalent to a simple POVM, and two simple POVMs are equivalent iff their  POVM elements are identical up to relabeling \cite{MartM90,Kura15,Zhu22}. So we can focus on simple POVMs.

The \emph{transition graph} of a POVM $\scrA$, denoted by $G(\scrA)$, is a graph whose vertices are in one-to-one correspondence with nonzero POVM elements in $\scrA$; two vertices are adjacent iff the corresponding POVM elements are not orthogonal. $\scrA$ is \emph{irreducible} if $G(\scrA)$ is connected and \emph{reducible} otherwise. The POVM elements tied to each connected component of $G(\scrA)$ form an \emph{irreducible component} of $\scrA$. The \emph{projective index} of $\scrA$, denoted by $\gamma(\scrA)$, is defined as the number of irreducible components, which is equal to the number of nonzero POVM elements for a PVM. Let $\scrA_j$ for $j=1,2,\ldots, \gamma(\scrA)$ be the irreducible components of $\scrA$. 
Define $P_j:=\sum_{A\in \scrA_j} A$; then 
$\scrP(\scrA):=\{P_j\}_{j=1}^{\gamma(\scrA)}$ is the finest PVM among the coarse graining of $\scrA$ by \pref{pro:P(A)} below.  All proofs are relegated to the Supplemental Material. 
\begin{proposition}\label{pro:P(A)}
	Let  $\scrA$ be a POVM on $\caH$. Then a PVM is a coarse graining of $\scrA$ iff it is a grouping of $\scrP(\scrA)$. In addition, $\gamma(\scrA)=\gamma(\scrP(\scrA))\leq d$, and the upper bound is saturated iff $\scrA$ is equivalent to a rank-1 PVM. 
\end{proposition}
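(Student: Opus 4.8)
The plan is to analyze the structure imposed by orthogonality relations among the POVM elements, working with the transition graph $G(\scrA)$. First I would establish that $\scrP(\scrA)=\{P_j\}$ is genuinely a PVM: each $P_j$ is positive as a sum of positive operators, the $P_j$ sum to $1$, and — this is the crucial point — distinct $P_j$ are orthogonal. Orthogonality should follow because if $A\in\scrA_i$ and $B\in\scrA_j$ with $i\neq j$, then $A$ and $B$ lie in different connected components of $G(\scrA)$, hence are orthogonal by definition of the transition graph; summing over the components gives $P_iP_j=0$. To conclude each $P_j$ is a projector, I would use $\sum_j P_j = 1$ together with pairwise orthogonality and positivity: orthogonal positive operators summing to the identity must each be projectors (expand $1 = (\sum_j P_j)^2 = \sum_j P_j^2$, and compare with $\sum_j P_j$, using that $P_j^2 \geq P_j$ fails unless... — more carefully, restrict to the range of each $P_j$ and note the ranges are orthogonal and span $\caH$, so $P_j$ restricted to its range must be the identity there).

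Next I would prove the "iff" characterization. The reverse direction (a grouping of $\scrP(\scrA)$ is a coarse graining of $\scrA$) is immediate since $\scrP(\scrA)$ is itself a coarse graining of $\scrA$ and coarse graining is transitive. For the forward direction, suppose $\scrB=\{B_k\}$ is a PVM with $\scrB\leq\scrA$, so $B_k=\sum_A \Lambda_{kA} A$ with $\Lambda$ stochastic. The key observation is that if two POVM elements $A, A'\in\scrA$ are non-orthogonal (adjacent in $G(\scrA)$), they cannot contribute to different projectors $B_k, B_{k'}$: since $B_k B_{k'}=0$ and all operators involved are positive, the supports must be orthogonal, which forces $\Lambda_{kA}$ and $\Lambda_{k'A'}$ to be "incompatible" whenever $A,A'$ overlap. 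Propagating this along paths in each connected component, I would conclude that all elements of a single irreducible component $\scrA_j$ feed into exactly one $B_k$; hence each $B_k$ is a sum of the $P_j$'s, i.e. a grouping of $\scrP(\scrA)$. I expect this propagation argument — showing the stochastic matrix $\Lambda$ must be block-constant on irreducible components — to be the main obstacle, requiring a careful support/positivity argument rather than just graph connectivity.

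Finally, for the quantitative claims: $\gamma(\scrA)=\gamma(\scrP(\scrA))$ holds because $\scrP(\scrA)$ has exactly $\gamma(\scrA)$ elements and, being a PVM with nonzero (pairwise orthogonal) elements, its transition graph has no edges, so its number of irreducible components equals its number of elements. The bound $\gamma(\scrA)\leq d$ follows since $\gamma(\scrA)$ distinct nonzero orthogonal projectors $P_j$ have ranges that are mutually orthogonal nonzero subspaces of the $d$-dimensional space $\caH$, so there can be at most $d$ of them. Equality $\gamma(\scrA)=d$ forces each $P_j$ to be rank $1$; then each irreducible component $\scrA_j$ consists of positive operators all supported on the one-dimensional range of $P_j$, so every element of $\scrA_j$ is a nonnegative multiple of the rank-1 projector $P_j$ — but a simple POVM forbids proportional elements, so each $\scrA_j=\{P_j\}$ and $\scrA=\scrP(\scrA)$ is a rank-1 PVM. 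Conversely a rank-1 PVM on $\caH$ manifestly has $d$ elements and $\gamma = d$. I would remark that throughout one should reduce to a simple POVM (permitted by the discussion preceding the proposition), since zero elements and relabelings are irrelevant to $\gamma$ and to the coarse-graining order.
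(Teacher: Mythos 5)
Your proposal is correct and follows essentially the same route as the paper's proof: orthogonality of the PVM elements forces the stochastic matrix to be a grouping that is constant on the connected components of $G(\scrA)$, the bound $\gamma(\scrA)\leq d$ comes from counting the mutually orthogonal nonzero projectors $P_j$, and saturation forces each $P_j$ (hence each element of the corresponding component) to be rank 1. The only difference is that you spell out explicitly that $\scrP(\scrA)$ is a PVM (e.g.\ via $P_i=P_i\sum_j P_j=P_i^2$), a point the paper leaves implicit.
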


\emph{Quantum estimation theory.}---Suppose $\rho(\theta)\in \scrD(\caH)$  is specified by $g=d^2-1$ parameters $\theta_1, \theta_2, \ldots, \theta_g$.  If the  POVM $\scrA=\{A_j\}_{j=1}^m$ is performed, then the
probability of obtaining outcome $j$ is
$p_j(\theta)=\tr[\rho(\theta)A_j]$. The resulting \emph{Fisher
	information matrix} $I(\theta)$ has entries \cite{Fish25}
\begin{equation}
I_{ab}(\theta,\scrA)=\sum_{j}\frac{1}{p_j}\frac{\partial p_j}{\partial \theta_a}\frac{\partial p_j}{\partial \theta_b},\quad a,b=1,2,\ldots, g.
\end{equation}
The importance of $I(\theta,\scrA)$ is embodied  in the famous Cram\'er-Rao bound \cite{Rao02book}: the MSE matrix  (or covariance matrix) of any unbiased estimator is bounded from below by the inverse Fisher information matrix.

The \emph{quantum Fisher information  matrix}  $J_\rmS(\theta)$ \cite{Hels76book, Hole82book,BrauC94,Haya05book,LiuYLW19} has matrix entries
\begin{equation}
(J_\rmS)_{ab}(\theta)=\frac{1}{2}\tr\bigl [\rho(L_a L_b+L_bL_a )\bigr ],
\end{equation}
where  $L_a $  is the Hermitian operator determined by  the equation $\frac{1}{2}(\rho
L_a+L_a\rho)=\rho_{,a}:=\partial
\rho(\theta)/\partial \theta_a$ and is called  the \emph{symmetric logarithmic derivative} (SLD) associated with $\theta_a$ (assuming that $\rho_{,a}\neq 0$). The significance of $J_\rmS(\theta)$ lies in the SLD bound
\begin{align}\label{eq:QCR}
I(\theta,\scrA)\leq J_\rmS(\theta), 
\end{align}
which implies the quantum Cram\'er-Rao bound: 
$J_\rmS^{-1}(\theta)$ is a lower bound for the  MSE matrix of any unbiased estimator \cite{Hels76book, Hole82book,BrauC94,Haya05book,LiuYLW19}. This bound can be saturated in the one-parameter setting, in which the quantum Fisher information determines the precision limit. By virtue of the right (left) logarithmic derivative, one can construct an alternative quantum Fisher information matrix $J_\rmR(\theta)$ ($J_\rmL(\theta)$) and an alternative bound \cite{YuenL73,Hole82book} (see \aref{app:RLD}).

In the multiparameter setting, the bound in \eref{eq:QCR} usually  cannot be saturated due to information tradeoff among different parameters, which is a manifestation of the complementarity principle \cite{Bohr28}. A prominent  tradeoff relation is 
the Gill-Massar inequality \cite{GillM00},
\begin{align}\label{eq:GMI}
\tr [J_\rmS^{-1}(\theta)I(\theta,\scrA)]\leq d-1,
\end{align}
which is saturated iff $\scrA$ is rank 1. Besides quantum estimation theory, such tradeoff relations are useful to studying quantum incompatibility \cite{Zhu15IC,HeinJN22}, quantum steering \cite{ZhuHC16}, optimal quantum state estimation \cite{Zhu12the,LiFGK16, ZhuH18U,HouTSZ18}, and multiparameter quantum metrology \cite{SzczBD16,PezzCSH17,GessPS18,LuW21,GoldSF21,ChenCY22,AlbaD22}.

\emph{Fisher eigenvalues and concentration order.}---To decode quantum measurements from the Fisher information matrix, we need to extract the key properties that are  parametrization independent. A good starting point is the \emph{metric-adjusted Fisher information matrix} proposed by the author \cite{Zhu15IC} (called Fisher concord in \rcite{LiL16}),
\begin{align}\label{eq:FIMma}
I_\rmS(\theta,\scrA):=J_\rmS^{-1/2}(\theta)I(\theta,\scrA)J_\rmS^{-1/2}(\theta), 
\end{align}
note that $J_\rmS(\theta)$  defines a  metric in the state space \cite{BrauC94,Petz96,PetzS96}.
$I_\rmL(\theta,\scrA)$ and $I_\rmR(\theta,\scrA)$ can be defined in a similar way (see \aref{app:RLD}). Then \esref{eq:QCR}{eq:GMI} means
\begin{align}\label{eq:FIMmaQCRGMI}
I_\rmS(\theta,\scrA)\leq 1,\quad \tr I_\rmS(\theta,\scrA)\leq d-1.
\end{align}
The eigenvalues of $I_\rmS(\theta,\scrA)$ or $J_\rmS^{-1}(\theta)I(\theta,\scrA)$ equivalently are called the \emph{Fisher eigenvalues} of $\scrA$ at $\rho(\theta)$.
At a given quantum state, they are independent of the parametrization because $I(\theta,\scrA)$ and $J(\theta)$ experience a same congruence transformation when the  parametrization is changed.
In addition,  all Fisher eigenvalues lie in the closed interval $[0,1]$ thanks to \eref{eq:FIMmaQCRGMI}. The maximum Fisher eigenvalue 1 is of special interest because it means certain parameter can be estimated with the same precision as in the one-parameter setting. In other words, the metrology precision is not affected by other parameters, which is quite unusual given the constraint  of information complementarity.
The \emph{Fisher spectrum} of $\scrA$ at $\rho$, denoted by $\eig(\rho,\scrA)$, is the vector of Fisher eigenvalues with multiplicities taken into account. The \emph{Fisher purity} of $\scrA$ at $\rho$, denoted by $\wp(\rho,\scrA)$,  is defined as the sum of squared Fisher eigenvalues divided by $d-1$, that is, $ \wp(\rho,\scrA):=\tr(I_\rmS^2)/(d-1)$, which satisfies  $ \wp(\rho,\scrA)\leq 1$.

By virtue of the Fisher spectrum and  submajorization relation \cite{Bhat97book,MarsOA11book} (see \aref{app:Majorization}), we can introduce a concentration order on POVMs, which focuses on the degree of concentration in formation extraction and complements the  data-processing order. 
Let $\scrA, \scrB$ be two POVMs on $\caH$; then $\scrA$ is \emph{submajorized} by $\scrB$ at~$\rho$, denoted as  $\scrA\overset{\rho}{\preceq}_\rmw \scrB$, if $\eig(\rho,\scrA)$ is \emph{submajorized} by $\eig(\rho,\scrB)$, that is, $\eig(\rho,\scrA)\preceq_\rmw \eig(\rho,\scrB)$. 
Similarly,   $\scrA$ is majorized by $\scrB$, denoted as  $\scrA\overset{\rho}{\preceq} \scrB$, if $\eig(\rho,\scrA)$ is majorized by $\eig(\rho,\scrB)$. For both variants,   $\rho$ can be omitted if the order relation holds for all states in $\scrD(\caH)$.

\emph{Fisher sharp measurements.}---The POVM $\scrA$ is \emph{Fisher sharp} at $\rho(\theta)$ if $I_\rmS(\theta,\scrA)$ is a projector, that is,
all nonzero Fisher eigenvalues are equal to~1. It is \emph{universally Fisher sharp} if $I_\rmS(\theta,\scrA)$ is a projector whenever $\rho$ has full rank. Such POVMs are of special interest because they are maximum with respect to the concentration order thanks to \eref{eq:FIMmaQCRGMI}, which means the information extracted is as concentrated as possible. 
Incidentally, a Fisher-symmetric POVM is a POVM in the opposite extreme for which $I_\rmS(\theta,\scrA)$ is proportional to the identity matrix \cite{LiFGK16,ZhuH18U}. 
Such a POVM is interesting for different reasons: notably, a rank-1 Fisher-symmetric POVM
has the lowest Fisher purity among rank-1 POVMs and can extract the Fisher information as uniformly as possible. 
The \emph{sharpness index} of $\scrA$ at $\rho(\theta)$, denoted by $\zeta_\rho(\scrA)$, is  the multiplicity of the eigenvalue 1 in $I_\rmS(\theta,\scrA)$. 
Both 
Fisher sharp measurements and sharpness index are parametrization independent. Surprisingly, they are also independent of the state $\rho$ as long as $\rho$ is nonsingular.

\begin{figure}[t]
	\includegraphics[width=7cm]{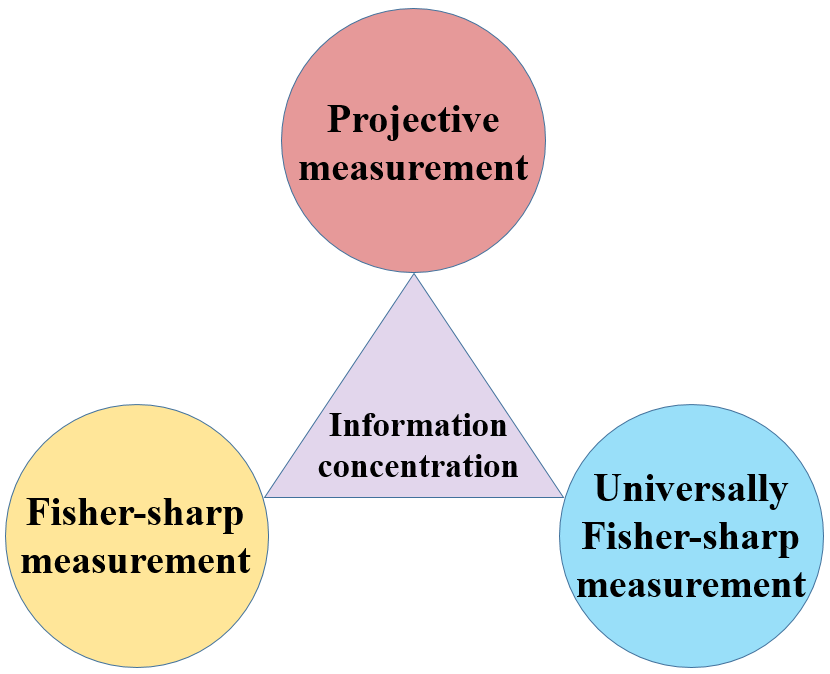}
	\caption{\label{fig:FisherSharp} 
		Equivalence of projective measurements, Fisher-sharp measurements, and universally  Fisher-sharp measurements based on the order of information concentration. 
	} 
\end{figure}

\begin{thm}\label{thm:SharpProjIndex}
	Suppose $\scrA$ is a simple POVM on $\caH$. Then 
	\begin{align}
	\zeta_\rho(\scrA)=\gamma(\scrA)-1\leq d-1 \quad \forall \rho\in \scrD(\caH);
	\end{align}	
the upper bound is saturated iff $\scrA$ is a rank-1 PVM.		
\end{thm}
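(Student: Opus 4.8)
The plan is to pin down exactly which tangent directions at $\rho$ carry a Fisher eigenvalue equal to $1$, and to recognize the subspace they span as $\operatorname{span}_\bbR\{P_1,\dots,P_{\gamma(\scrA)}\}$ restricted by a single linear constraint. First I would reduce to a faithful parametrization of $\scrD(\caH)$ (possible since $\dim\scrD(\caH)=g$, and harmless because $\zeta_\rho(\scrA)$ is a property of the Fisher spectrum at a fixed state, hence parametrization independent). Then $v\mapsto\rho_{,v}:=\sum_a v_a\rho_{,a}$ is a linear isomorphism from $\bbR^g$ onto the traceless Hermitian operators, and $v\mapsto L_v:=\sum_a v_aL_a$ — the SLD of the direction $v$, so that $\rho_{,v}=\tfrac12(\rho L_v+L_v\rho)$ — is a linear isomorphism from $\bbR^g$ onto $\{L=L^\dagger:\tr(\rho L)=0\}$ (using that $\rho$ has full rank, so $J_\rmS$ is nonsingular as well). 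Since $I_\rmS(\theta,\scrA)\le 1$ by \eref{eq:FIMmaQCRGMI}, its eigenspace for the eigenvalue $1$ is $\{w:w^{\top}I_\rmS(\theta,\scrA)w=\|w\|^2\}$; substituting $w=J_\rmS^{1/2}(\theta)v$ turns this into $V_1:=\{v:v^{\top}I(\theta,\scrA)v=v^{\top}J_\rmS(\theta)v\}$, that is, the set of directions in which the classical Fisher information $F(v):=\sum_j p_j^{-1}\bigl(\tr[\rho_{,v}A_j]\bigr)^2$ equals the SLD quantum Fisher information $F_Q(v):=\tr(\rho L_v^2)$. Thus $\zeta_\rho(\scrA)=\dim V_1$.

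Next I would carry out the Cauchy--Schwarz analysis of the nonnegative gap $F_Q(v)-F(v)$. Setting $X_j:=\rho^{1/2}A_j^{1/2}$ and $Y_j:=\rho^{1/2}L_vA_j^{1/2}$, one checks $p_j=\|X_j\|_{\hs}^2$, $\tr[\rho_{,v}A_j]=\operatorname{Re}\langle X_j,Y_j\rangle_{\hs}$, and $F_Q(v)=\sum_j\|Y_j\|_{\hs}^2$; so $F(v)=F_Q(v)$ forces, for every $j$, that $\langle X_j,Y_j\rangle_{\hs}$ be real and that $Y_j=\lambda_jX_j$, whence $\lambda_j\in\bbR$ and, since $\rho^{1/2}$ is invertible, $L_vA_j^{1/2}=\lambda_jA_j^{1/2}$; equivalently, $\range(A_j)$ lies in the $\lambda_j$-eigenspace of $L_v$.

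The step I expect to be the main obstacle is propagating this eigenspace condition through the transition graph $G(\scrA)$. If $A_j$ and $A_k$ are non-orthogonal, their ranges cannot lie in eigenspaces of $L_v$ for distinct eigenvalues, since distinct eigenspaces of a Hermitian operator are orthogonal and would force $A_jA_k=0$; hence $\lambda_j$ is constant — say equal to $\mu_i$ — on each irreducible component $\scrA_i$, and $\range(P_i)=\sum_{A\in\scrA_i}\range(A)$ lies in $\ker(L_v-\mu_i)$. Because $\scrP(\scrA)=\{P_i\}_{i=1}^{\gamma(\scrA)}$ is a PVM, $\caH=\bigoplus_i\range(P_i)$, and therefore $L_v=\sum_i\mu_iP_i$; conversely, one checks that every direction with $L_v$ of this form lies in $V_1$. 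So, under the isomorphism $v\mapsto L_v$, the subspace $V_1$ is carried onto $\operatorname{span}_\bbR\{P_1,\dots,P_{\gamma(\scrA)}\}\cap\{L:\tr(\rho L)=0\}$.

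Finally I would do the dimension count. The $\gamma(\scrA)$ mutually orthogonal nonzero projectors $P_i$ are linearly independent, so their real span has dimension $\gamma(\scrA)$; the functional $L\mapsto\tr(\rho L)$ restricts to it as $\sum_i\mu_i\tr(\rho P_i)$, which is not identically zero because $\tr(\rho P_i)>0$ for every $i$ (as $\rho$ has full rank and $P_i\neq0$). Hence it cuts the dimension down by exactly one, giving $\zeta_\rho(\scrA)=\dim V_1=\gamma(\scrA)-1$, manifestly independent of $\rho$. The inequality $\gamma(\scrA)-1\le d-1$ and the equality condition are then immediate from \pref{pro:P(A)}; and since equivalent simple POVMs coincide up to relabeling, a simple POVM equivalent to a rank-1 PVM must itself be one, so the bound is saturated iff $\scrA$ is a rank-1 PVM.
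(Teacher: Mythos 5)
Your argument is correct, but it takes a genuinely different route from the paper. The paper derives \thref{thm:SharpProjIndex} as a corollary of \thref{thm:SharpProjIndex2}, working entirely at the superoperator level: it decomposes $\caF_\rmS(\rho,\scrA)$ over the irreducible components of $\scrA$, invokes \lref{lem:FLRnorm} (whose proof contains the Cauchy--Schwarz saturation step, phrased as $BA_j\propto A_j$ for an eigenvalue-$1$ eigenvector, plus the direct-sum \lref{lem:ProjDirectSum}) to get $\mu(\caF_\rmS)=\gamma(\scrA)$, and then uses the multiplicity-shift \lref{lem:FisherEigMul} together with the identification of the nonzero spectra of $I_\rmS$ and $\bar{\caF}_\rmS$ to drop the count by one. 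You instead stay at the level of the Fisher information matrix in a fixed faithful parametrization: you identify the eigenvalue-$1$ eigenspace of $I_\rmS$ with the directions $v$ in which the classical Fisher information saturates the SLD quantum Fisher information, run the Braunstein--Caves-type Cauchy--Schwarz analysis to force $L_v A_j^{1/2}=\lambda_j A_j^{1/2}$, propagate $\lambda_j$ along the transition graph exactly as the paper does (your orthogonality-of-eigenspaces observation replaces \lref{lem:ProjDirectSum}), conclude $L_v\in\spa_{\bbR}\{P_1,\ldots,P_{\gamma(\scrA)}\}$, and obtain the ``$-1$'' transparently from the single constraint $\tr(\rho L_v)=0$ rather than from the $\caF$-versus-$\bar{\caF}$ shift of \lref{lem:FisherEigMul}. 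What the paper's route buys is parametrization-free machinery and the stronger \thref{thm:SharpProjIndex2}, which covers $\caF_\rmL,\caF_\rmR$ and the left/right logarithmic-derivative matrices as well; what your route buys is a self-contained, more elementary proof of exactly the SLD statement needed here, with a clear geometric picture of the sharp directions as $\{L_v=\sum_i\mu_i P_i,\ \tr(\rho L_v)=0\}$. The only points worth making explicit are that simplicity of $\scrA$ guarantees $A_j\neq 0$ (so $p_j>0$ and $X_j\neq 0$, which the Cauchy--Schwarz equality case needs) and that $\scrP(\scrA)$ being a PVM, which you use for $\caH=\bigoplus_i\range(P_i)$, is supplied by \pref{pro:P(A)} and the surrounding discussion; with those citations in place the proof is complete.
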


\begin{thm}\label{thm:FisherSharp}
	Suppose $\scrA$ is a simple POVM on $\caH$. Then the following statements are equivalent. 
	\begin{enumerate}
		\item  $\scrA$ is a PVM.	
		\item  $\scrA$ is  Fisher sharp at some $\rho\in \scrD(\caH)$.	
		\item $\scrA$ is universally Fisher sharp.
	\end{enumerate}
\end{thm}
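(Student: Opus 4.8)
\emph{Approach.} Write $\scrA=\{A_j\}$. The plan is to reduce \thref{thm:FisherSharp} to a linear-algebraic statement about the real span of the POVM elements, letting \thref{thm:SharpProjIndex} supply the multiplicity of the Fisher eigenvalue~$1$. Throughout I use the standing nondegeneracy assumption that $J_\rmS(\theta)$ is invertible, so that the SLDs $L_1,\dots,L_g$ are a basis of the real subspace $\caT_\rho:=\{X=X^\dagger:\tr(\rho X)=0\}$ (there are $g=d^2-1=\dim\caT_\rho$ of them). The first and main step is the rank formula $\rk I_\rmS(\theta,\scrA)=\rk I(\theta,\scrA)=\dim_\bbR\spa\{A_j\}-1$, valid for every $\rho\in\scrD(\caH)$ and in particular $\rho$-independent. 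The first equality holds because $J_\rmS^{-1/2}$ is invertible. For the second, write $p_j=\tr(\rho A_j)>0$ and note $I(\theta,\scrA)=\sum_j p_j^{-1}v_jv_j^{\top}$ with $v_j\in\bbR^g$, $(v_j)_a=\partial_a p_j=\tfrac12\tr[\rho\{L_a,A_j\}]=\langle L_a,A_j-p_j\id\rangle_\rho$, where $\langle X,Y\rangle_\rho:=\tfrac12\tr[\rho\{X,Y\}]$ and I used $\langle L_a,\id\rangle_\rho=\tr(\rho L_a)=0$. Since the weights $p_j^{-1}$ are positive, $\rk I(\theta,\scrA)=\dim\spa\{v_j\}$; since $X\mapsto(\langle L_a,X\rangle_\rho)_a$ is an isomorphism of $\caT_\rho$ onto $\bbR^g$ and each $A_j-p_j\id$ lies in $\caT_\rho$, this equals $\dim\spa\{A_j-p_j\id\}$; and $\spa\{A_j\}=\spa\{A_j-p_j\id\}\oplus\bbR\id$ because $\id=\sum_j A_j\in\spa\{A_j\}$ while $\id\notin\caT_\rho$, which yields the stated count.

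\emph{Reformulation via \thref{thm:SharpProjIndex}.} By definition $\scrA$ is Fisher sharp at $\rho$ iff $I_\rmS(\theta,\scrA)$ is a projector, i.e.\ iff its number of nonzero eigenvalues equals the multiplicity of the eigenvalue~$1$, which by \thref{thm:SharpProjIndex} is $\gamma(\scrA)-1$. Combined with the rank formula, $\scrA$ is Fisher sharp at $\rho$ iff $\dim_\bbR\spa\{A_j\}=\gamma(\scrA)$. Since this condition involves neither $\rho$ nor the parametrization, statements~(2) and~(3) are already equivalent, and it remains to prove that $\dim_\bbR\spa\{A_j\}=\gamma(\scrA)$ holds exactly when $\scrA$ is a PVM.

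\emph{The span condition characterizes PVMs.} Let $\scrP(\scrA)=\{P_1,\dots,P_\gamma\}$ with $\gamma=\gamma(\scrA)$ and $P_i=\sum_{A\in\scrA_i}A$ over the irreducible components, as in \pref{pro:P(A)}. The $P_i$ are mutually orthogonal nonzero projectors, hence linearly independent with $\dim\spa\{P_i\}=\gamma$, and $\spa\{P_i\}\subseteq\spa\{A_j\}$. If $\scrA$ is a PVM then $\scrA=\scrP(\scrA)$ and the condition is immediate. Conversely, if $\dim\spa\{A_j\}=\gamma$ then $\spa\{A_j\}=\spa\{P_i\}$, so every POVM element expands as $A_j=\sum_i c_iP_i$ with $c_i\in\bbR$; if $A_j$ belongs to the component $\scrA_{i_0}$ then $0\le A_j\le P_{i_0}$, hence $\range A_j\subseteq\range P_{i_0}$ and $P_{i_0}A_jP_{i_0}=A_j$, and substituting the expansion and using $P_{i_0}P_iP_{i_0}=\delta_{i,i_0}P_{i_0}$ gives $A_j=c_{i_0}P_{i_0}$. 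Thus every element of a component is a scalar multiple of its $P_{i_0}$; simplicity of $\scrA$ (no two elements linearly dependent) then forces each component to be a singleton, so $\scrA=\scrP(\scrA)$ is a PVM. This closes the cycle $(1)\Leftrightarrow(2)\Leftrightarrow(3)$.

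\emph{Main obstacle.} The conceptual content is concentrated in the rank formula of the first step; the care required is the bookkeeping that identifies $\partial_a p_j$ with the $\rho$-orthogonal component $A_j-p_j\id$, the verification that $X\mapsto(\langle L_a,X\rangle_\rho)_a$ restricts to an isomorphism of $\caT_\rho$, and pinpointing exactly where nondegeneracy of the parametrization enters (without it one gets only "$\le$" in the rank formula, and $\rho$-independence can fail). A minor point is the edge case $\gamma(\scrA)=1$ (so $\scrA=\{\id\}$ and $I_\rmS=0$), which should be checked against the convention that the zero operator counts as a projector; and the elementary implication $0\le A\le P\Rightarrow PAP=A$ for a projector $P$, used above, should be recorded.
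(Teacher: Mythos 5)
Your argument is correct, and it reaches \thref{thm:FisherSharp} by a leaner route than the paper, whose proof goes through the stronger \thref{thm:FisherSharp2}: there, PVM $\Rightarrow$ universally Fisher sharp is obtained from the trace identity of \lref{lem:GMgen} together with \thref{thm:SharpProjIndex2} and the bound $I_\rmS\leq 1$; the passage between the existential and universal versions is routed through \lref{lem:FIprojectors} (all nine operators/superoperators are projectors simultaneously); and the converse combines $\mu(\caF_\rmS)=\gamma(\scrA)$ with $\rk\caF_\rmS=\dim\spa(\scrA)$ from \lref{lem:FisherRank} before invoking simplicity. You collapse all of this into the single $\rho$-independent criterion $\dim\spa(\scrA)=\gamma(\scrA)$: your step 1 is in effect a direct SLD-coordinate proof of \lref{lem:FisherRank}, and together with the multiplicity count of \thref{thm:SharpProjIndex} (a legitimate external input, exactly parallel to the paper's reliance on \thref{thm:SharpProjIndex2}) it shows that Fisher sharpness at any one full-rank $\rho$ is equivalent to this criterion, which gives (2)$\Leftrightarrow$(3) with no superoperator bookkeeping; your $\scrP(\scrA)$-based equivalence of the criterion with the PVM property (via $0\leq A_j\leq P_{i_0}$, hence $P_{i_0}A_jP_{i_0}=A_j$, then simplicity) is a spelled-out version of the paper's terse final step in the implication $3\Rightarrow 1$, and in the forward direction it replaces the \lref{lem:GMgen} computation altogether. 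What the paper's heavier machinery buys is precisely the extra content of \thref{thm:FisherSharp2} (the statements involving $I_\rmL$, $I_\rmR$ and the $\caF$ superoperators), which your argument neither delivers nor needs. The only items to record explicitly are the ones you already flag: injectivity of $X\mapsto\bigl(\tfrac12\tr[\rho\{L_a,X\}]\bigr)_a$ on $\{X=X^\dagger:\tr(\rho X)=0\}$, which holds because the SLDs form a basis of that $(d^2-1)$-dimensional real space when $J_\rmS$ is invertible and $\rho>0$; the support argument behind $0\leq A\leq P\Rightarrow PAP=A$; and the trivial case $\scrA=\{1\}$, where $I_\rmS=0$ is the zero projector, consistent with $\gamma=1$.
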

\Thref{thm:SharpProjIndex} establishes a precise connection between the sharpness index and projective index, which are seemingly disparate. \Thref{thm:FisherSharp} further reveals an intriguing connection between Fisher sharp measurements and projective measurements as illustrated in \fref{fig:FisherSharp}, which highlights the special role played by projective measurements in information extraction. 
Given \pref{pro:P(A)}, \thsref{thm:SharpProjIndex} and \ref{thm:FisherSharp}   are simplified versions of \thsref{thm:SharpProjIndex2} and \ref{thm:FisherSharp2}  below, which further show that the same conclusions still hold if the quantum Fisher information matrix in  \eref{eq:FIMma} is replaced by the analog based on the left or right logarithmic derivative. The following result is a simple corollary of \eref{eq:FIMmaQCRGMI} and \thref{thm:FisherSharp}  as well as \lref{lem:FisherRank} below.

\begin{corollary}\label{cor:FisherSharp2}
	Suppose  $\rho(\theta)\in \scrD(H)$,  $\scrA$ is a simple POVM on $\caH$, $r=\dim(\spa\scrA)$, and $p\geq 1$; then 
	\begin{align}
	\tr I_\rmS^p(\theta,\scrA)\leq r-1, \quad \tr I_\rmS^p(\theta,\scrA)\leq d-1.
	\end{align}
	The first inequality is saturated iff $\scrA$ is a PVM. When $p>1$, the second one is saturated iff $\scrA$ is a rank-1 PVM.
\end{corollary}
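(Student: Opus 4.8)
The plan is to read off everything from the Fisher spectrum of $\scrA$ at $\rho(\theta)$ and combine the two constraints in \eref{eq:FIMmaQCRGMI} with \lref{lem:FisherRank} and \thref{thm:FisherSharp}. First I would list the Fisher eigenvalues $\lambda_1\geq\lambda_2\geq\cdots\geq\lambda_g$ with $g=d^2-1$, observing that $\lambda_i\in[0,1]$ for every $i$ by \eref{eq:FIMmaQCRGMI}, and that exactly $r-1$ of them are nonzero because $\rk I_\rmS(\theta,\scrA)=r-1$ by \lref{lem:FisherRank}. Note there is no degeneracy issue for $\rho\in\scrD(\caH)$: such $\rho$ is positive definite and $\scrA$ is simple, so all outcome probabilities $p_j=\tr[\rho A_j]$ are strictly positive.

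For the first inequality I would use $\lambda_i^p\leq 1$, valid since $p\geq 1$ and $\lambda_i\leq 1$, to get $\tr I_\rmS^p=\sum_i\lambda_i^p\leq r-1$, the sum effectively running over the $r-1$ nonzero eigenvalues. Equality forces every nonzero $\lambda_i$ to equal $1$, i.e.\ $I_\rmS(\theta,\scrA)$ is a projector, i.e.\ $\scrA$ is Fisher sharp at $\rho$, which by \thref{thm:FisherSharp} holds iff $\scrA$ is a PVM. Conversely, if $\scrA$ is a PVM then by \thref{thm:FisherSharp} it is universally Fisher sharp, so $I_\rmS(\theta,\scrA)$ is a projector and $\tr I_\rmS^p=\tr I_\rmS=\rk I_\rmS=r-1$ by \lref{lem:FisherRank}; this closes the saturation equivalence.

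For the second inequality I would instead use $\lambda_i^p\leq\lambda_i$, again valid since $p\geq 1$ and $\lambda_i\leq 1$, to get $\tr I_\rmS^p\leq\tr I_\rmS\leq d-1$ from the Gill-Massar half of \eref{eq:FIMmaQCRGMI}. When $p>1$, saturation requires equality in both steps: $\lambda_i^p=\lambda_i$ for all $i$, which for $p>1$ forces $\lambda_i\in\{0,1\}$, so $I_\rmS(\theta,\scrA)$ is a projector and $\scrA$ is a PVM by \thref{thm:FisherSharp}; and $\tr I_\rmS=d-1$, which by the saturation condition of the Gill-Massar inequality forces $\scrA$ to be rank~1. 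Hence $\scrA$ is a rank-1 PVM, and the converse is immediate since a rank-1 PVM is universally Fisher sharp with $\tr I_\rmS=d-1$, whence $\tr I_\rmS^p=\tr I_\rmS=d-1$.

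Once \lref{lem:FisherRank} and \thref{thm:FisherSharp} are in hand there is no real obstacle; the one point needing care is that inferring $\lambda_i\in\{0,1\}$ from $\lambda_i^p=\lambda_i$ requires $p>1$ strictly, which is exactly why the sharpened saturation statement for the second bound is claimed only for $p>1$ — for $p=1$ the second inequality is just the Gill-Massar inequality, whose equality condition yields a general rank-1 POVM rather than a rank-1 PVM.
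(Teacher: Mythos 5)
Your argument is correct and is exactly the route the paper intends: the paper states this result as an immediate consequence of \eref{eq:FIMmaQCRGMI} (eigenvalues in $[0,1]$ and the Gill--Massar trace bound with its rank-1 saturation condition), \lref{lem:FisherRank} (rank of $I_\rmS$ equals $r-1$), and \thref{thm:FisherSharp} (Fisher sharp iff PVM), which are precisely the ingredients you combine, eigenvalue by eigenvalue. Your handling of the $p>1$ versus $p=1$ distinction in the second saturation condition matches the paper's statement as well.
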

When $p=2$, \crref{cor:FisherSharp2} implies that
\begin{align}
\tr I_\rmS^2(\theta,\scrA)=\tr[J_\rmS^{-1}(\theta)I(\theta,\scrA)]^2\leq d-1,
\end{align}
and the inequality is saturated iff $\scrA$ is a rank-1 PVM,  in sharp contrast with the Gill-Massar inequality. This result offers a surprisingly  simple characterization of rank-1 projective measurements.
\begin{corollary}\label{cor:FisherPurity}
A simple POVM can attain the maximum Fisher purity 1 iff it is a rank-1 PVM.
\end{corollary}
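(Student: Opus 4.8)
The plan is to recognize Corollary~\ref{cor:FisherPurity} as nothing more than the $p=2$ instance of \crref{cor:FisherSharp2}, unwound through the definition of Fisher purity. Since $\wp(\rho,\scrA)=\tr\!\big(I_\rmS^2(\theta,\scrA)\big)/(d-1)$ and $\wp(\rho,\scrA)\le 1$ always, the maximum value $\wp(\rho,\scrA)=1$ is attained exactly when $\tr I_\rmS^2(\theta,\scrA)=d-1$, i.e.\ when the quadratic trace bound of \crref{cor:FisherSharp2} (second inequality, $p=2>1$) is saturated; so there is in principle nothing left to prove. I would nevertheless prefer to record the short self-contained argument from the more primitive facts, both because it is illuminating and because it makes transparent why the $p=1$ case behaves differently.

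So first I would diagonalize $I_\rmS(\theta,\scrA)$ with Fisher eigenvalues $\lambda_1,\dots,\lambda_g\in[0,1]$, where the range $[0,1]$ together with the Gill--Massar constraint $\sum_i\lambda_i=\tr I_\rmS\le d-1$ is precisely \eqref{eq:FIMmaQCRGMI}. From $\lambda_i^2\le\lambda_i$ on $[0,1]$ one gets the chain $\tr I_\rmS^2=\sum_i\lambda_i^2\le\sum_i\lambda_i\le d-1$, and equality $\wp(\rho,\scrA)=1$ forces equality in both steps. Equality in $\sum_i\lambda_i^2=\sum_i\lambda_i$ holds termwise, so every $\lambda_i\in\{0,1\}$, meaning $I_\rmS(\theta,\scrA)$ is a projector; hence $\scrA$ is Fisher sharp at $\rho$ and therefore a PVM by \thref{thm:FisherSharp}. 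Equality in $\sum_i\lambda_i=d-1$ is saturation of the Gill--Massar inequality \eqref{eq:GMI}, which holds iff $\scrA$ is rank~$1$. Intersecting the two conditions, $\scrA$ must be a rank-$1$ PVM.

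For the converse I would check directly that a rank-$1$ PVM attains $\wp=1$: by \pref{pro:P(A)} it has $\gamma(\scrA)=d$, by \thref{thm:FisherSharp} it is universally Fisher sharp so $I_\rmS(\theta,\scrA)$ is a projector, and by \thref{thm:SharpProjIndex} its sharpness index is $\zeta_\rho(\scrA)=\gamma(\scrA)-1=d-1$; thus $I_\rmS(\theta,\scrA)$ is a rank-$(d-1)$ projector and $\tr I_\rmS^2=d-1$, i.e.\ $\wp(\rho,\scrA)=1$. The only point needing care—and the reason the linear Gill--Massar bound and the quadratic purity bound have different equality cases—is that saturation of the quadratic bound decouples into \emph{two} independent requirements, ``spectrum contained in $\{0,1\}$'' and ``Gill--Massar tight'', whereas the linear bound only feels the second. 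Keeping these two conditions separate and then intersecting them is the one substantive move; everything else is bookkeeping on top of \thsref{thm:SharpProjIndex} and~\ref{thm:FisherSharp}, which already carry all the weight.
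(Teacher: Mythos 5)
Your proposal is correct and follows essentially the same route as the paper: Corollary~\ref{cor:FisherPurity} is stated as an immediate consequence of the $p=2$ case of \crref{cor:FisherSharp2} together with the definition $\wp=\tr(I_\rmS^2)/(d-1)$, which is exactly your first observation. Your expanded eigenvalue argument (spectrum in $[0,1]$ from \eref{eq:FIMmaQCRGMI}, termwise equality in $\lambda_i^2\le\lambda_i$ giving a projector and hence a PVM via \thref{thm:FisherSharp}, Gill--Massar saturation giving rank~1, and the converse via \thsref{thm:SharpProjIndex} and \ref{thm:FisherSharp}) is precisely the reasoning the paper invokes implicitly when it calls \crref{cor:FisherSharp2} a simple corollary of \eref{eq:FIMmaQCRGMI}, \thref{thm:FisherSharp}, and \lref{lem:FisherRank}, so there is no substantive difference in approach.
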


\emph{Fisher eigenvalues and Fisher purity  for a qubit.}---As an illustration, suppose the qubit state $\rho$ is specified by the three components of the Bloch vector $\vec{s}=(x,y,z)$. Then the inverse quantum Fisher information matrix reads $J_\rmS^{-1}=1-\vec{s}\vec{s}$ \cite{Zhu15IC},
where $\vec{s}\vec{s}$ denotes the outer product of $\vec{s}$ and $\vec{s}$. Suppose the POVM has the form 
$\scrA=\{w_j(1+\vec{r}_j\cdot\vec{\sigma})\}_j$ with $w_j>0$, $\sum_j w_j=1$, and $\sum_j w_j \vec{r}_j=0$. Then the Fisher information matrix  reads
\begin{align}
I=\sum_j   \frac{w_j\vec{r}_j\vec{r}_j}{1+\vec{s}\cdot \vec{r}_j},
\end{align}
so the ranks of $I$ and $I_\rmS$ are both equal to the dimension spanned by $\vec{r}_j$ (cf. \lref{lem:FisherRank} below). 
Fisher eigenvalues, Gill-Massar trace, and Fisher purity can be computed with the following equations,
\begin{align}
J_\rmS^{-1}I &=\sum_{j} \frac{w_j[\vec{r}_j\vec{r}_j-(\vec{s}\cdot \vec{r}_j)\vec{s}\vec{r}_j]}{1+\vec{s}\cdot \vec{r}_j}, \\
\tr(J_\rmS^{-1}I)& =\sum_{j} \frac{w_j[\vec{r}_j\cdot\vec{r}_j-(\vec{s}\cdot \vec{r}_j)^2]}{1+\vec{s}\cdot \vec{r}_j}, \label{eq:GMTqubit}\\
\tr\bigl[(J_\rmS^{-1}I)^2\bigr] \!\!&=\sum_{j,k} \frac{w_jw_k[\vec{r}_j\cdot \vec{r}_k-(\vec{s}\cdot \vec{r}_j)(\vec{s}\cdot \vec{r}_k)]^2}
{(1+\vec{s}\cdot \vec{r}_j)(1+\vec{s}\cdot \vec{r}_k)}. \label{eq:FisherPurityQubit}
\end{align}
\Eref{eq:GMTqubit}  means $\tr(J_\rmS^{-1}I) \leq \sum_{j} w_j(1-\vec{s}\cdot \vec{r}_j)=1$,
and the inequality is saturated iff $\vec{r}_j=1$ for all $j$, so that $\scrA$ is rank-1, as expected from the Gill-Massar inequality.

\begin{figure}[b]
	\includegraphics[width=6cm]{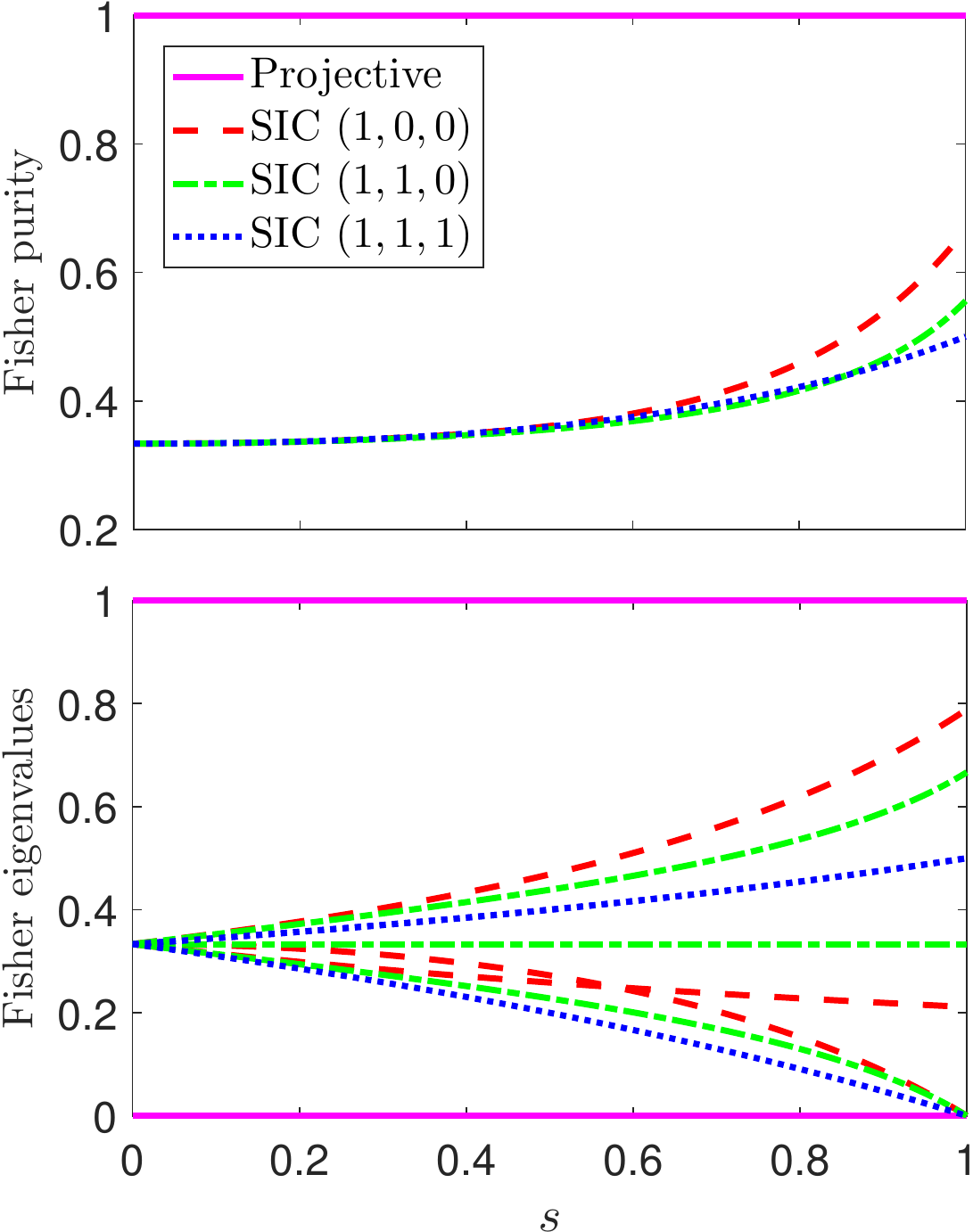}
	\caption{\label{fig:FisherEig}
		Fisher purities (upper plot) and Fisher eigenvalues (lower plot) of the projective measurement and SIC measurement for a qubit. The SIC is defined in \eref{eq:SIC}; the three Fisher eigenvalues depend on the length $s$ and direction (as indicated in the legend) of the Bloch vector $\vec{s}$  [in the direction $(1,1,1)$ two Fisher eigenvalues are identical due to symmetry]. 		
	} 
\end{figure}

If $\scrA$ is a rank-1 PVM, then  $I$ and $J_\rmS^{-1}I$ have rank~1, so $\scrA$ has only one nonzero Fisher eigenvalue, which is equal to 1, whenever $\rho$ is nonsingular. So  $\scrA$ is universally Fisher sharp. 
On the other hand, $\scrA$ is Fisher sharp at $\rho$ iff it has sharpness index 1 at $\rho$ (assuming $I\neq 0$). In that case, the Gill-Massar inequality is  saturated, so $\scrA$ is rank 1. In addition, $I$ is rank 1, so all $\vec{r}_j$ are parallel or antiparallel. Hence $\scrA$ is equivalent to a rank-1 PVM, which confirms \thsref{thm:SharpProjIndex} and \ref{thm:FisherSharp}.

Next, suppose $\scrA$ is a symmetric informationally complete POVM (SIC for short) \cite{Zaun11,ReneBSC04,FuchHS17}
 for comparison.  Let  $w_1=w_2=w_3=w_4=1/4$ and $\bm{r}_j$ for $j=1,2,3,4$ be the normalized Bloch vectors tied to the four directions
\begin{align}\label{eq:SIC}
(1,1,1),\;\; (1,-1,-1),\;\; (-1,1,-1),\;\; (-1,-1,1).
\end{align}
Then generically $\scrA$ has three distinct Fisher eigenvalues, which  depend on the length and direction of the Bloch vector $\vec{s}$. The Fisher purities and Fisher eigenvalues associated with the three  directions $(1,0,0)$, $(1,1,0)$, and $(1,1,1)$ are illustrated in \fref{fig:FisherEig} (see \aref{app:FisherEigQubit} for analytical formulas). Although the SIC is Fisher symmetric at the completely mixed state, it is in general not Fisher symmetric at other quantum states. Actually, it is impossible to construct a single-copy POVM that is universally Fisher symmetric.  By contrast, any projective measurement is universally Fisher sharp.

\emph{Quantum estimation theory based on superoperators.}---To prove our main results, we need
to reformulate quantum estimation theory in a way that is parametrization independent. This approach has found applications in quantum  estimation theory \cite{BrauC94, Zhu12the,Zhu14IOC,ZhuH18U},
geometry of quantum states \cite{Petz96,PetzS96}, and foundational studies on quantum incompatibility and steering \cite{Zhu15IC,ZhuHC16,HeinJN22}.
Here we further develop this approach and derive a number of  nontrivial results (see the Appendices), in addition to rederiving the old results, such as  \esref{eq:QCR}{eq:GMI}. Our study will be useful to multiparameter quantum metrology and various topics mentioned above. 
To simplify the discussion, henceforth we assume that $\{\rho_{,a}\}_{a=1}^g$ is an orthonormal basis for the space of traceless operators, given that Fisher eigenvalues are  parametrization independent.

Recall that the set of operators on $\caH$  forms a Hilbert space  with respect to the Hilbert-Schmidt inner product.   The vectors in  this space can be represented using the double-ket notation. For example, the operator $A$ can be expressed as $\dket{A}$, and the inner product between $A$ and $B$ can be expressed as $\dinner{A}{B}:=\tr(A^\dag B)$. Then  outer products of the form $\dket{A}\dbra{B}$ act as 
superoperators. Denote by $\mathbf{I}$ the identity superoperator and by $\bar{\mathbf{I}}$ the projector onto the space of traceless operators \cite{Zhu15IC,ZhuHC16}.

Given a nonsingular density operator $\rho\in \scrD(\caH)$ and POVM $\scrA=\{A_j\}_j$ on $\caH$, define
\begin{align}
\caF(\rho,\scrA)&:=\sum_{j|A_j\neq 0} \frac{\dket{A_j}\dbra{A_j}  }{\tr(\rho A_j)}, \label{eq:FDef} \\ \bar{\caF}(\rho,\scrA)&:=\bar{\mathbf{I}}\caF(\rho,\scrA)\bar{\mathbf{I}}= \sum_{j|A_j\neq 0} \frac{\dket{\bar{A}_j}\dbra{\bar{A}_j}  }{\tr(\rho A_j)}, \label{eq:FbarDef}
\end{align}
where $\bar{A}:=A-\tr(A)/d$ is the traceless part of $A$. Note that both $\caF(\rho,\scrA)$ and $\bar{\caF}(\rho,\scrA)$ are Hermitian and positive (semidefinite). Then the entries of the Fisher information matrix $I(\theta,\scrA)$ can be expressed as 
\begin{align}\label{eq:IFbar}
 I_{ab}(\theta, \scrA)=\dbra{\rho_{,a}}\caF(\rho,\scrA)\dket{\rho_{,b}}=\dbra{\rho_{,a}}\bar{\caF}(\rho,\scrA)\dket{\rho_{,b}}.
\end{align}
So  $I(\theta,\scrA)$ is the matrix representation of $\bar{\caF}(\rho,\scrA)$ in the operator basis  $\{\rho_{,a}\}_{a=1}^g$. By virtue of this connection, we can determine the ranks of $I(\theta,\scrA)$ and $I_\rmS(\theta,\scrA)$ and clarify the subtle relation between $I(\theta,\scrA)$ and $\caF(\rho,\scrA)$.

\begin{lem}\label{lem:FisherRank}
	Suppose $\rho(\theta)\in \scrD(H)$  and $\scrA$ is a POVM on $\caH$. Then 
	\begin{align}\label{eq:FisherRank}
	&\rk I_\rmS(\theta,\scrA)=	\rk I(\theta,\scrA)=\rk \bar{\caF}(\rho,\scrA)\nonumber\\
	&=\rk \caF(\rho,\scrA)-1=\dim \spa (\scrA)-1.
	\end{align}
\end{lem}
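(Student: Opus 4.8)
The plan is to route everything through the superoperator $\caF(\rho,\scrA)$ of \eqref{eq:FDef} and to show that passing to the traceless subspace costs exactly one unit of rank, which is where the defining relation $\sum_j A_j=\id$ of a POVM does all the work. Throughout I use the standing normalization that $\{\rho_{,a}\}_{a=1}^{g}$ with $g=d^2-1$ is an orthonormal basis of the traceless subspace; this is harmless since a reparametrization acts by a common congruence on $I(\theta,\scrA)$ and $J_\rmS(\theta)$ and hence leaves every rank in \eqref{eq:FisherRank} unchanged.

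The two equalities that do not involve the trace constraint are immediate. Because $\rho$ is nonsingular, $J_\rmS(\theta)$ is positive definite, so $J_\rmS^{-1/2}$ is invertible and the congruence in \eqref{eq:FIMma} gives $\rk I_\rmS(\theta,\scrA)=\rk I(\theta,\scrA)$. Next, \eqref{eq:IFbar} says that $I(\theta,\scrA)$ is precisely the matrix of $\bar{\caF}(\rho,\scrA)$, viewed as an operator on the traceless subspace, in the orthonormal basis $\{\rho_{,a}\}$; since $\bar{\caF}=\bid\,\caF\,\bid$ annihilates $\dket{\id}$ and carries the traceless subspace into itself, its full rank coincides with the rank of that restriction, i.e.\ $\rk I(\theta,\scrA)=\rk\bar{\caF}(\rho,\scrA)$.

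It remains to evaluate $\rk\caF$ and $\rk\bar{\caF}$, and this is the crux. Since $\rho>0$ and every nonzero $A_j$ is positive, $\tr(\rho A_j)>0$, so \eqref{eq:FDef} presents $\caF(\rho,\scrA)$ as a sum of positive rank-one operators; using that the range of a sum of positive operators is the sum of their ranges, $\range\caF(\rho,\scrA)=\spa\{\dket{A_j}:A_j\neq 0\}=\spa(\scrA)$, hence $\rk\caF=\dim\spa(\scrA)$. The same argument applied to \eqref{eq:FbarDef} gives $\range\bar{\caF}(\rho,\scrA)=\spa\{\dket{\bar A_j}:A_j\neq 0\}=\bid\bigl(\spa(\scrA)\bigr)$. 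Now apply the rank-nullity theorem to $\bid$ restricted to $\spa(\scrA)$: its kernel is $\spa(\scrA)\cap\ker\bid=\spa(\scrA)\cap\spa\{\id\}$, which is the one-dimensional space $\spa\{\id\}$ precisely because $\id=\sum_j A_j\in\spa(\scrA)$. Hence $\dim\bid\bigl(\spa(\scrA)\bigr)=\dim\spa(\scrA)-1$, i.e.\ $\rk\bar{\caF}=\rk\caF-1=\dim\spa(\scrA)-1$, and chaining the equalities yields \eqref{eq:FisherRank}. The only genuine subtlety, and the step I would treat most carefully, is this last one: one must check that the identity component really lies inside $\spa(\scrA)$, so that projecting it away lowers the rank by exactly one rather than by zero---and that is guaranteed by the POVM normalization. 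Everything else is routine bookkeeping with ranges of positive operators and invertible congruences.
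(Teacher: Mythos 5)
Your proposal is correct and follows essentially the same route as the paper's proof: rank invariance under the congruence by $J_\rmS^{-1/2}$, the matrix-representation identity \eqref{eq:IFbar}, and the identification $\rk\caF=\dim\spa(\scrA)$, $\rk\bar{\caF}=\dim\spa(\bar{\scrA})=\dim\spa(\scrA)-1$. Your explicit justification of the one-unit rank drop via $\id=\sum_j A_j\in\spa(\scrA)$ simply spells out the step the paper leaves implicit.
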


\begin{lem}\label{lem:FIequi}
	Suppose $\rho(\theta)\in \scrD(H)$  and $\scrA,\scrB$ are two POVMs on $\caH$. Then the  four equalities are equivalent. 
	\begin{align}\label{eq:FIequi}
	\begin{aligned}	
	\caF(\rho,\scrA)&= \caF(\rho,\scrB),&
	\quad \bar{\caF}(\rho,\scrA)&= \bar{\caF}(\rho,\scrB),\\	
	I(\theta,\scrA)&=I(\theta,\scrB), &\quad 
	I_\rmS(\theta,\scrA)&=I_\rmS(\theta,\scrB).
	\end{aligned}		
	\end{align}
\end{lem}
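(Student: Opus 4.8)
The four equalities split into three routine ones and one carrying the actual content. The plan is to prove the cycle
\begin{gather*}
\caF(\rho,\scrA)=\caF(\rho,\scrB)\ \Rightarrow\ \bar{\caF}(\rho,\scrA)=\bar{\caF}(\rho,\scrB)\\
\Leftrightarrow\ I(\theta,\scrA)=I(\theta,\scrB)\ \Leftrightarrow\ I_\rmS(\theta,\scrA)=I_\rmS(\theta,\scrB),
\end{gather*}
and then close it with $\bar{\caF}(\rho,\scrA)=\bar{\caF}(\rho,\scrB)\Rightarrow\caF(\rho,\scrA)=\caF(\rho,\scrB)$. The first implication is immediate from $\bar{\caF}=\bar{\mathbf{I}}\caF\bar{\mathbf{I}}$ in \eref{eq:FbarDef}. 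For $\bar{\caF}\leftrightarrow I$: by \eref{eq:IFbar}, $I(\theta,\scrA)$ is the matrix of $\bar{\caF}(\rho,\scrA)$ in the operator basis $\{\rho_{,a}\}_{a=1}^{d^2-1}$, which is orthonormal and spans the traceless subspace; since $\bar{\caF}$ is supported entirely on that subspace (its range lies there, and $\bar{\caF}\dket{1}=\bar{\mathbf{I}}\caF\bar{\mathbf{I}}\dket{1}=0$ because $\bar{\mathbf{I}}\dket{1}=0$), the superoperator $\bar{\caF}$ and its matrix $I$ determine each other. For $I\leftrightarrow I_\rmS$: $I_\rmS=J_\rmS^{-1/2}IJ_\rmS^{-1/2}$ with $J_\rmS(\theta)$ positive definite and independent of the POVM, so the congruence by $J_\rmS^{-1/2}$ is a bijection.

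The crux is to recover $\caF$ from its compression $\bar{\caF}=\bar{\mathbf{I}}\caF\bar{\mathbf{I}}$, i.e.\ to pin down the pieces of $\caF$ along $\dket{1}$. The key identity is $\caF(\rho,\scrA)\dket{\rho}=\dket{1}$, valid for every POVM: from \eref{eq:FDef}, each summand gives $\dket{A_j}\dbra{A_j}\dket{\rho}/\tr(\rho A_j)=\dket{A_j}$ (using $A_j=A_j^\dag$), and $\sum_jA_j=1$. Splitting $\mathbf{I}=\bar{\mathbf{I}}+Q$ with $Q:=\tfrac1d\dket{1}\dbra{1}$ and sandwiching $\caF$ on both sides yields
\[
\caF=\bar{\caF}+\tfrac1d\bigl(\dket{u}\dbra{1}+\dket{1}\dbra{u}\bigr)+\tfrac{c}{d^2}\dket{1}\dbra{1},
\]
where $\dket{u}:=\bar{\mathbf{I}}\caF\dket{1}$ and $c:=\dbra{1}\caF\dket{1}$. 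Writing $\dket{\rho}=\tfrac1d\dket{1}+\dket{\bar\rho}$ with $\bar\rho$ traceless and plugging into $\caF\dket{\rho}=\dket{1}$, one separates the traceless component from the $\dket{1}$ component (using $\dinner{1}{\bar\rho}=0$) to obtain, after a short computation, $\dket{u}=-d\,\bar{\caF}\dket{\bar\rho}$ and $c=d^2+d^2\dbra{\bar\rho}\bar{\caF}\dket{\bar\rho}$ — both determined by $\bar{\caF}$, $\rho$, and $d$. Hence $\bar{\caF}(\rho,\scrA)=\bar{\caF}(\rho,\scrB)$ forces $\caF(\rho,\scrA)=\caF(\rho,\scrB)$, and the cycle closes.

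I expect this inversion to be the only real obstacle. The subtlety is that $\caF\dket{1}=\sum_j\tr(A_j)\dket{A_j}/\tr(\rho A_j)$ genuinely depends on $\scrA$, so $\caF\dket{1}$ is not itself recoverable from $\bar{\caF}$; the argument works because only the specific combinations $\dket{u}$ and $c$ need to be recovered, and their POVM-independence is supplied entirely by the normalization $\sum_jA_j=1$, repackaged as $\caF\dket{\rho}=\dket{1}$. Everything else is bookkeeping with the orthogonal decomposition of operator space into the traceless subspace and the line spanned by $\dket{1}$.
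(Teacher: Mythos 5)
Your proof is correct, and for the only nontrivial implication, $\bar{\caF}(\rho,\scrA)=\bar{\caF}(\rho,\scrB)\Rightarrow\caF(\rho,\scrA)=\caF(\rho,\scrB)$, it takes a genuinely different route from the paper's. The paper invokes \lref{lem:FFbar}: from $\bar{\caF}(\rho,\scrA)=\bar{\caF}(\rho,\scrB)$ it deduces equality of supports, hence $\spa(\scrA)=\spa(\scrB)$ and $\rho(\scrA)=\rho(\scrB)$, and then recovers $\caF=[\bar{\caF}^{+}+\douter{\rho(\scrA)}{\rho(\scrA)}\lsp]^{+}$ via Moore--Penrose inverses. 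You instead reconstruct $\caF$ directly: block-decompose it with respect to $\bid$ and $\frac{1}{d}\dket{1}\dbra{1}$, and use the normalization identity $\caF(\rho,\scrA)\dket{\rho}=\dket{1}$ (the same identity that drives the paper's proof of \lref{lem:FFbar}) to solve for the off-diagonal block $\dket{u}=\bid\caF\dket{1}=-d\,\bar{\caF}\dket{\bar{\rho}}$ and the corner $c=\dbra{1}\caF\dket{1}=d^{2}\bigl(1+\dbra{\bar{\rho}}\bar{\caF}\dket{\bar{\rho}}\bigr)$; both are functions of $\bar{\caF}$, $\rho$, and $d$ alone, and I have checked these formulas against $\caF\dket{1}=\sum_{j}\tr(A_j)\dket{A_j}/\tr(\rho A_j)$ --- they are right. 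The routine equivalences (via \eref{eq:IFbar} and \eref{eq:FIMma}) are handled as in the paper. Your route is more elementary and self-contained: it avoids pseudoinverses and the support/reconstruction-subspace argument, and it produces an explicit affine formula expressing $\caF$ in terms of $\bar{\caF}$ and $\rho$, which could serve as a substitute for \lref{lem:FFbar}; the paper's route has the advantage that \lref{lem:FFbar} is a reusable statement of independent interest in the appendices. One small quibble with your closing commentary: the claim that $\caF\dket{1}$ is ``not itself recoverable from $\bar{\caF}$'' sits oddly with your own computation, since $\dket{u}$ and $c$ are recovered from $\bar{\caF}$ and $\rho$, and therefore so is $\caF\dket{1}=\dket{u}+(c/d)\dket{1}$; the accurate statement is that it is not recoverable from $\bar{\caF}$ alone without invoking $\rho$ and the normalization $\sum_j A_j=1$. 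This does not affect the validity of the proof.
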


Next, we  introduce superoperator representations of quantum Fisher information matrices and metric-adjusted Fisher information matrices.  Define superoperators $\caL(\rho)$, $\caR(\rho)$, $\caS(\rho)$ as follows \cite{BrauC94,Petz96,PetzS96,Zhu12the},
\begin{align}
\caL(\rho)\dket{A}=\dket{\rho A}, \;\; 
\caR(\rho)\dket{A}=\dket{A \rho},\;\;
\caS=\frac{\caL+\caR}{2}. \label{eq:LRS}
\end{align}
Here $\rho$ can be omitted if there is no danger of confusion; $\caL,\caR,\caS$ are positive definite for $\rho\in \scrD(\caH)$. 
Define
\begin{align}
\caJ_\rmL&:=\caL^{-1}, &\quad \caJ_\rmR&:=\caR^{-1},& \quad \caJ_\rmS&:=\caS^{-1},  \label{eq:JLRS}\\
\bar{\caJ}_\rmL&:=\bid\caJ_\rmL\bid, &\quad \bar{\caJ}_\rmR&:=\bid\caJ_\rmR\bid,& \quad \bar{\caJ}_\rmS&:=\bid\caJ_\rmS\bid. 
\end{align}
Then 
\begin{align}
(J_{\rmS})_{ab}(\theta)=\dbra{\rho_{,a}}\caJ_\rmS(\rho)\dket{\rho_{,b}}=\dbra{\rho_{,a}}\bar{\caJ}_\rmS(\rho)\dket{\rho_{,b}}, 
\end{align}
so  $J_\rmS(\theta,\scrA)$ is the matrix representation of $\bar{\caJ}_\rmS(\rho,\scrA)$.

Finally, define
\begin{align}\label{eq:FFbarS}
\caF_\rmS:=\caJ_\rmS^{-1/2}\caF\caJ_\rmS^{-1/2},\quad
\bar{\caF}_\rmS&:=(\bar{\caJ}_\rmS^+)^{1/2}\bar{\caF}(\bar{\caJ}_\rmS^+)^{1/2},
\end{align}
where 
\begin{align}
\bar{\caJ}_\rmS^+=\caS-\douter{\rho}{\rho}=\caJ_\rmS^{-1}-\douter{\rho}{\rho}
\end{align}
is the Moore-Penrose generalized inverse \cite{Bern09book} of $\bar{\caJ}_\rmS$. Here the two equalities follow from \eref{eq:JLRS} and \lref{lem:JJbar} in \aref{app:FisherEig}. Then $I_\rmS(\theta,\scrA)$ is the matrix representation of $\bar{\caF}_\rmS(\rho,\scrA)$ and has the same nonzero  eigenvalues (including  multiplicities). Moreover, the spectrum of $\bar{\caF}_\rmS(\rho,\scrA)$ is  determined by the spectrum of $\caF_\rmS$ thanks to \lref{lem:FisherEigMul} in \aref{app:FisherEig}. 
These observations are crucial to proving our main results and are also useful to understanding quantum estimation theory in general. Superoperators  $\caF_\rmL, \caF_\rmR, \bar{\caF}_\rmL, \bar{\caF}_\rmR$ can be defined analogously.

\emph{Sharpness index and Fisher sharp measurements.}---By virtue of quantum estimation theory  presented above and  the Appendices we can now establish our main results. Given an operator or superoperator $\Theta$, 
denote by $\mu(\Theta)$ the algebraic multiplicity of the eigenvalue 1.

\begin{thm}\label{thm:SharpProjIndex2}
	Suppose $\rho(\theta)\in \scrD(\caH)$ and  $\scrA$ is a POVM on $\caH$. Then 
	\begin{gather}
	\mu(\caF_\rmS(\rho,\scrA))=\gamma(\scrA),  \label{eq:SharpProjIndex21}\\
	\zeta_\rho (\scrA)=\mu(I_\rmS(\theta,\scrA))
	=\mu(\bar{\caF}_\rmS(\rho,\scrA))=\gamma(\scrA)-1.\label{eq:SharpProjIndex22}
	\end{gather}	
	The same conclusions hold if $\caF_\rmS$ is replaced by $\caF_\rmL$ or $\caF_\rmR$, and $\bar{\caF}_\rmS$ ($I_\rmS$) is replaced by $\bar{\caF}_\rmL$ ($I_\rmR$) or $\bar{\caF}_\rmR$ ($I_\rmL$) accordingly. 
\end{thm}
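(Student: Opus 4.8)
The plan is to reduce everything to a spectral analysis of the superoperator $\caF_\rmS(\rho,\scrA)$, then transfer the result to $\bar{\caF}_\rmS$ and $I_\rmS$ via the connections already set up in the excerpt. First I would establish \eqref{eq:SharpProjIndex21}, namely $\mu(\caF_\rmS(\rho,\scrA))=\gamma(\scrA)$. Write $\caF_\rmS=\caJ_\rmS^{-1/2}\caF\caJ_\rmS^{-1/2}=\caS^{1/2}\caF\caS^{1/2}$, so its eigenvalue-$1$ eigenspace is governed by the equation $\caS^{1/2}\caF\caS^{1/2}\dket{X}=\dket{X}$, equivalently $\caF\dket{Y}=\caS^{-1}\dket{Y}$ with $\dket{Y}=\caS^{1/2}\dket{X}$, i.e.\ $\sum_{j}\frac{\dket{A_j}\dinner{A_j}{Y}}{\tr(\rho A_j)}=\caS\dket{Y}$. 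Since $\caS\dket{Y}=\tfrac12(\rho Y+Y\rho)$ and $\tr(A_j\caS^{-1}A_j)\ge \tr(\rho A_j)$ always (this is the operator form of the SLD bound, which underlies $I_\rmS\le 1$), the eigenvalue $1$ is the top of the spectrum, and I expect $\dket{Y}$ to lie in the eigenspace iff for each $j$ the vector $\caS^{-1/2}\dket{A_j}$ is ``aligned'' in the right way. The key computation is to show that equality in the relevant Cauchy--Schwarz/SLD step forces $A_j$ to commute with $\rho$ in a block sense, and then the eigenvalue-$1$ eigenspace is spanned exactly by $\{\caS^{1/2}\dket{P_k}\}$ where $P_k=\sum_{A\in\scrA_k}A$ are the projectors of the finest PVM $\scrP(\scrA)$ from \pref{pro:P(A)}; there are $\gamma(\scrA)$ of these and they are linearly independent, giving $\mu(\caF_\rmS)=\gamma(\scrA)$.

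Next I would descend from $\caF_\rmS$ to $\bar{\caF}_\rmS$. By the definitions in \eqref{eq:FFbarS}, $\bar{\caF}_\rmS=(\bar{\caJ}_\rmS^+)^{1/2}\bar\caF(\bar{\caJ}_\rmS^+)^{1/2}$ and $\bar\caF=\bid\caF\bid$, so $\bar{\caF}_\rmS$ is essentially the compression of $\caF_\rmS$ to the traceless sector, with a correction coming from $\dket{\rho}$. The one eigenvector of $\caF_\rmS$ with eigenvalue $1$ that is lost in passing to the traceless subspace is the one ``pointing along'' $\dket{\rho}$: indeed $\caF\dket{\rho}=\sum_j\dket{A_j}$ (using $\dinner{A_j}{\rho}=\tr(\rho A_j)$) $=\dket{1}=\caS\dket{1}$... more precisely, $\caS^{1/2}\dket{1}$ is an eigenvector of $\caF_\rmS$ with eigenvalue $1$ that corresponds (up to the $\caS$-geometry) to the identity direction, and this is exactly the direction killed by $\bid$. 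Invoking \lref{lem:FisherEigMul} and \lref{lem:JJbar} from the appendices (which I am allowed to cite), the nonzero spectrum of $\bar{\caF}_\rmS$ coincides with that of the compression, so $\mu(\bar{\caF}_\rmS)=\mu(\caF_\rmS)-1=\gamma(\scrA)-1$. Combined with the fact (stated in the excerpt, just below \eqref{eq:FFbarS}) that $I_\rmS(\theta,\scrA)$ is the matrix representation of $\bar{\caF}_\rmS$ with the same nonzero eigenvalues and multiplicities, we get $\mu(I_\rmS)=\mu(\bar{\caF}_\rmS)=\gamma(\scrA)-1$, and $\zeta_\rho(\scrA)=\mu(I_\rmS)$ is precisely the definition of the sharpness index; this gives \eqref{eq:SharpProjIndex22}. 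The bound $\gamma(\scrA)-1\le d-1$ and its saturation condition come directly from \pref{pro:P(A)}.

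For the final sentence — that the same conclusions hold with $\caF_\rmS,\bar{\caF}_\rmS,I_\rmS$ replaced by their left/right analogues — the point is that the entire argument above only used two features of the $\rmS$ (symmetric) construction: that $\caS$ (resp.\ $\caL$, $\caR$) is positive definite on $\scrD(\caH)$, and that $\tr(A_j\caS^{-1}A_j)\ge\tr(\rho A_j)$ with equality characterized by commutation. The analogous inequalities $\tr(A_j\caL^{-1}A_j)\ge\tr(\rho A_j)$ and $\tr(A_j\caR^{-1}A_j)\ge\tr(\rho A_j)$ hold as well (these underlie the RLD bounds of \aref{app:RLD}), and the equality conditions again force each $A_j$ to be block-diagonal with respect to the eigenbasis of $\rho$ in a way that produces the same PVM $\scrP(\scrA)$; since $\gamma(\scrA)$ is an intrinsic property of $\scrA$ and does not reference which logarithmic derivative we used, the count is identical. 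So I would phrase the $\rmL$/$\rmR$ case as: "the proof goes through verbatim upon replacing $\caS$ by $\caL$ or $\caR$, using the corresponding positivity and the fact that the equality condition in the respective operator monotonicity bound still forces $[\rho,A_j]$ to vanish within each irreducible component."

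The main obstacle I anticipate is the equality-condition analysis in step one: showing rigorously that $\dket{Y}$ is a $1$-eigenvector of $\caF_\rmS$ \emph{if and only if} $\dket{Y}$ is a combination of the $\caS^{1/2}\dket{P_k}$, and in particular that there are no ``accidental'' extra eigenvectors with eigenvalue $1$ coming from delicate cancellations among non-orthogonal POVM elements within a single irreducible component. Handling the ``only if'' direction cleanly will require carefully exploiting the connectedness of the transition graph $G(\scrA_k)$ of each irreducible component — an eigenvector equation, restricted to a component, should propagate a commutation constraint along every edge of the graph, and connectedness then forces $Y$ to be constant (proportional to a scalar) on the support of $P_k$. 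That graph-propagation argument, together with a precise statement of when $\tr(A_j\caS^{-1}A_j)=\tr(\rho A_j)$, is where the real work lies; the rest is bookkeeping with the superoperator formalism and the cited lemmas.
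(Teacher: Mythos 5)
Your overall architecture coincides with the paper's: compute $\mu(\caF_\rmS)$, descend to $\bar{\caF}_\rmS$ via \lref{lem:JJbar} and \lref{lem:FisherEigMul} (losing exactly one unit eigenvalue along the identity/$\dket{\rho}$ direction), identify $\mu(I_\rmS)=\mu(\bar{\caF}_\rmS)$ and $\zeta_\rho(\scrA)=\mu(I_\rmS)$, and repeat for the left/right variants. The genuine gap is the step you yourself defer: proving that the unit eigenspace of $\caF_\rmS(\rho,\scrA)$ is exactly $\spa\{\caS^{1/2}\dket{P_k}\}_{k=1}^{\gamma(\scrA)}$, i.e., that each irreducible component contributes a nondegenerate eigenvalue $1$. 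This is precisely the final assertion of \lref{lem:FLRnorm}, which you neither invoke (though you cite other appendix lemmas) nor prove, and it carries essentially all the content of \eref{eq:SharpProjIndex21}. Moreover, the route you sketch toward it is off track. The inequality $\dbra{A_j}\caS^{-1}\dket{A_j}\geq\tr(\rho A_j)$ is false in general: for the maximally mixed qubit state and $A_j=\diag(0,t)$ (completed by $\diag(1,1-t)$) the left side is $2t^2$ and the right side is $t/2$, so it fails for $t<1/4$. And saturation of the SLD/Cauchy--Schwarz step cannot ``force $A_j$ to commute with $\rho$ in a block sense,'' nor can the L/R equality conditions force $[\rho,A_j]=0$ within a component: by \eref{eq:FLRnorm} the unit eigenvalue is attained for \emph{every} POVM and every full-rank $\rho$---for instance a rank-1 PVM whose projectors do not commute with $\rho$---so whatever constraint the saturation yields must fall on the putative eigenvector, not on the POVM elements.

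For comparison, the paper proves \lref{lem:FLRnorm} by doing the saturation analysis for $\caF_\rmL$ rather than $\caF_\rmS$, because there the Cauchy--Schwarz step factorizes cleanly through $A_j^{1/2}$ and $\rho^{1/2}$: a unit eigenvector $\dket{C}$ must satisfy $\rho^{-1/2}CA_j\propto A_j$ for all $j$, so each $A_j$ lies in an eigenspace of $B=\rho^{-1/2}C$; an auxiliary direct-sum lemma shows that elements supported in different eigenspaces of $B$ are orthogonal, whence connectedness of the transition graph forces $B\propto Q$ and $C\propto\rho^{1/2}Q$. Nondegeneracy then transfers to $\caF_\rmS$ and $\caF_\rmR$ through the majorization relation of \lref{lem:FSLRmajor}, and additivity of $\caF_\rmS$ over the mutually orthogonal irreducible components gives $\mu(\caF_\rmS)=\gamma(\scrA)$. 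Until you either reproduce an argument of this kind or simply cite \lref{lem:FLRnorm}, the theorem is not established; the rest of your bookkeeping, including the passage to $I_\rmL,I_\rmR$ via \aref{app:RLD}, is sound.
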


\begin{thm}\label{thm:FisherSharp2}
	Suppose $\scrA$ is a simple POVM on $\caH$. Then the following five statements are equivalent. 
	\begin{enumerate}
		\item $\scrA$ is a PVM. 
		
		\item $\scrA$ is universally Fisher sharp. 
		
		\item $\scrA$ is Fisher sharp at some $\rho\in \scrD(\caH)$.

		\item $I_\rmS, I_\rmL, I_\rmR, \bar{\caF}_\rmS, \bar{\caF}_\rmL, \bar{\caF}_\rmR, \caF_\rmS, \caF_\rmL, \caF_\rmR$ are projectors at each $\rho\in \scrD(\caH)$. 
		
		\item One of the operators in statement 4 is a projector at some $\rho\in \scrD(\caH)$. 
	\end{enumerate}	
\end{thm}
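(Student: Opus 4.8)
\textbf{Proof proposal for Theorem~\ref{thm:FisherSharp2}.}
The plan is to prove the cycle of implications $1\Rightarrow 4\Rightarrow 2\Rightarrow 3\Rightarrow 5\Rightarrow 1$, exploiting the superoperator machinery and the counting result of \thref{thm:SharpProjIndex2}. The implications $4\Rightarrow 2$, $2\Rightarrow 3$, and $4\Rightarrow 5$ are essentially definitional (a projector has all nonzero eigenvalues equal to~$1$, and ``universally'' implies ``at some~$\rho$''), so the real content is $1\Rightarrow 4$ and $5\Rightarrow 1$. For $1\Rightarrow 4$, suppose $\scrA=\{P_j\}$ is a PVM. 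Then the $P_j$ are mutually orthogonal projectors, so in \eref{eq:FDef} the operators $\dket{P_j}$ are pairwise Hilbert--Schmidt orthogonal, and one computes directly that $\caF(\rho,\scrA)=\sum_j \dket{P_j}\dbra{P_j}/\tr(\rho P_j)$ while $\caS=\caL/2+\caR/2$ acts on $\dket{P_j}$ in a controlled way. The key observation is that for a PVM the relevant superoperators all leave invariant the ``block decomposition'' of operator space into the sectors $P_j(\cdot)P_k$; restricting to the diagonal sectors spanned by $\{\dket{P_j}\}$, $\caS$ acts as $\tr(\rho P_j)$ on $\dket{P_j}$ (since $\rho P_j = P_j\rho$ is not generally true unless $[\rho,P_j]=0$ --- so one must be careful here; instead use that $\caF_\rmS=\caJ_\rmS^{-1/2}\caF\caJ_\rmS^{-1/2}$ and verify $\caS^{1/2}\dket{P_j}$ are still mutually orthogonal and that $\caF$ acts as a rank-one projector after the metric adjustment). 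The cleanest route is: show $\caF_\rmS\dket{v_j}=\dket{v_j}$ for the normalized vectors $v_j \propto \caJ_\rmS^{-1/2}\dket{P_j}$ and $\caF_\rmS$ annihilates their orthogonal complement, hence $\caF_\rmS$ is a projector of rank $\gamma(\scrA)$; the corresponding statement for $\bar{\caF}_\rmS$, $I_\rmS$ follows by restricting to the traceless sector and using \lref{lem:FisherEigMul}. The identical argument with $\caL$ or $\caR$ in place of $\caS$ handles the $\rmL,\rmR$ variants.

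For the converse $5\Rightarrow 1$, suppose one of the nine operators is a projector at some $\rho$. By \lref{lem:FisherEigMul} and the relations in \eref{eq:FFbarS}, being a projector is equivalent for $\caF_\bullet$ and $\bar{\caF}_\bullet$ and $I_\bullet$ up to the single extra eigenvalue~$1$ carried by $\caF_\bullet$ on $\dket{\rho}$; so without loss of generality $\caF_\rmS(\rho,\scrA)$ is a projector. Then $\tr\caF_\rmS = \mu(\caF_\rmS) = \rk\caF_\rmS = \dim\spa(\scrA)$ by \lref{lem:FisherRank}, and \eref{eq:SharpProjIndex21} gives $\mu(\caF_\rmS)=\gamma(\scrA)$; hence $\gamma(\scrA)=\dim\spa(\scrA)$. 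Now \pref{pro:P(A)} says $\gamma(\scrA)\le d$ with equality iff $\scrA$ is equivalent to a rank-1 PVM, but more usefully: the irreducible components $\scrA_j$ satisfy $\dim\spa(\scrA_j)\ge 1$ with equality iff $\scrA_j$ consists of a single element (necessarily a projector, since a simple POVM with one element in a component means that element is orthogonal to everything else, forcing it to be a projector once summed to identity). Since $\dim\spa(\scrA)=\sum_j \dim\spa(\scrA_j) \ge \gamma(\scrA)$ with equality iff every component is a singleton, $\gamma(\scrA)=\dim\spa(\scrA)$ forces each irreducible component of $\scrA$ to be a single projector, i.e.\ $\scrA$ is a PVM.

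\textbf{Main obstacle.} I expect the delicate point to be the forward direction $1\Rightarrow 4$: verifying that for a PVM the metric-adjusted superoperators $\caF_\rmS,\caF_\rmL,\caF_\rmR$ are genuinely \emph{projectors} (idempotent), not merely operators whose nonzero spectrum is $\{1\}$ --- this requires checking that the relevant subspace $\spa\{\caJ_\rmS^{-1/2}\dket{P_j}\}$ is exactly the range and that $\caF_\rmS$ vanishes on its complement, which in turn needs the precise interplay between $\rho$ and the $P_j$ encoded in $\caS^{\pm 1/2}$. I would handle this by working in a basis adapted to $\rho$ (not to the $P_j$) and using the integral/spectral representation of $\caS^{-1/2}$, or alternatively by invoking \lref{lem:FisherEigMul} to reduce everything to the statement that $\caF_\rmS$ has rank $\gamma(\scrA)$ (from \lref{lem:FisherRank}, since $\dim\spa(\scrA)=\gamma(\scrA)$ for a PVM) and all its nonzero eigenvalues equal~$1$ (from $I_\rmS\le 1$ in \eref{eq:FIMmaQCRGMI} together with $\tr I_\rmS = \gamma(\scrA)-1$ via the Gill--Massar-type bound, forcing the $\gamma(\scrA)-1$ nonzero eigenvalues to all saturate). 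That spectral-counting shortcut sidesteps the explicit idempotency computation entirely, which is why I would lean on it rather than on brute force.
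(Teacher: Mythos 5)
Your proposal is correct and follows essentially the same route as the paper: the forward direction is the same spectral-counting argument you settle on in your "main obstacle" paragraph (rank and trace saturation from \lref{lem:FisherRank} and \lref{lem:GMgen}, the eigenvalue bound from \eref{eq:FIMmaQCRGMI} and \lref{lem:FLRnorm}, and \lref{lem:FisherEigMul} to pass between $\caF_\bullet$, $\bar{\caF}_\bullet$, $I_\bullet$), and your $5\Rightarrow 1$ via $\gamma(\scrA)=\mu(\caF_\rmS)=\rk\caF_\rmS=\dim\spa(\scrA)$ is exactly the paper's $3\Rightarrow 1$. The only loose ends are cosmetic: your cycle actually needs the (equally definitional) implication $3\Rightarrow 5$ rather than $4\Rightarrow 5$, and the "without loss of generality $\caF_\rmS$ is a projector" step is only justified within one of the S/L/R families, so you should either invoke the cross-family trace and multiplicity equalities (as the paper does in an auxiliary lemma) or simply run the identical counting on whichever of the nine operators is assumed to be a projector, which \thref{thm:SharpProjIndex2} and \lref{lem:FisherRank} already permit.
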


\Thsref{thm:SharpProjIndex2} and \ref{thm:FisherSharp2} strengthen \thsref{thm:SharpProjIndex} and \ref{thm:FisherSharp}, respectively. They establish a precise connection between the sharpness index and projective index, which is independent of $\rho$ as long as $\rho$ is invertible. Consequently,  projective measurements,  Fisher-sharp measurements, and universally Fisher-sharp measurements are all equivalent to each other as illustrated in \fref{fig:FisherSharp}. Moreover, the same conclusions still hold if the quantum Fisher information matrix is built on the left or right logarithmic derivative instead of the SLD. The key for proving \Thsref{thm:SharpProjIndex2} and \ref{thm:FisherSharp2} is to transform statements about $I_\rmS$ into statements about the superoperator $\caF_\rmS$, which is parametrization independent and easier to analyze. Similar techniques are expected  to find applications in many other problems in quantum estimation theory and quantum metrology.

\emph{Summary.}---We introduced a simple order relation on quantum measurements based on the concentration of Fisher information, which complements the familiar data-processing order. Moreover, we showed that projective measurements happen to be Fisher-sharp measurements, which are extremal with respect to this concentration order. This result highlights the special role of projective measurements in information extraction, which has been overlooked for too long a time. In addition, we  introduced the concepts of Fisher eigenvalues and sharpness index, and showed that the  sharpness index  is completely determined by the projective index, that is, the number of outcomes in the finest projective measurement among the coarse graining of the given measurement. Our work endows projective measurements with a simple information theoretic interpretation, which is of intrinsic interest to foundational studies on 
quantum measurements, quantum  incompatibility, and information complementarity. Meanwhile, the ideas and technical tools we introduced are useful to studying
various problems in quantum estimation theory and multiparameter quantum metrology.

\bigskip

\acknowledgments
This work is  supported by   the National Natural Science Foundation of China (Grant No.~92165109),  National Key Research and Development Program of China (Grant No. 2022YFA1404204), and Shanghai Municipal Science and Technology Major Project (Grant No.~2019SHZDZX01).

\begin{appendix}

	\section{\label{app:Majorization}Majorization and submajorization}
	Let $x=(x_1,x_2, \ldots, x_n)$ and $y=(y_1,y_2, \ldots, y_n)$ be two real vectors in $\bbR^n$.  Recall that $x$ is submajorized by $y$, denoted as $x\preceq_\rmw y$ \cite{MarsOA11book,Bhat97book}, if 
	\begin{align}\label{eq:MajorizationDef}
	\sum_{j=1}^k x_j^{\downarrow}\leq \sum_{j=1}^k y_j^{\downarrow}\quad \forall k=1,2,\ldots, n,  
	\end{align}
	where $x_j^{\downarrow}$ ($y_j^{\downarrow}$) is the $j$th largest component of $x$ ($y$). 
	By contrast, $x$ is majorized by $y$, denoted as $x\preceq y$, if in addition the inequality in \eref{eq:MajorizationDef} is saturated in the case  $k=n$ \cite{MarsOA11book,Bhat97book}. On the other hand, $x$ and $y$ are majorization equivalent if $x\preceq y$ and $y\preceq x$ or equivalently if   $x\preceq_\rmw y$ and  $y\preceq_\rmw x$; in that case,  the components of $x$ and that of $y$ are identical up to a permutation.

	The submajorization relation can be generalized to Hermitian matrices, operators, and superoperators. Suppose  $A$ and $B$ are two Hermitian operators on a same Hilbert space for example. Then   $A$ is submajorized by $B$, also denoted as $A\preceq_\rmw B$, if $\eig(A)\preceq_\rmw \eig(B)$, where $\eig(A)$ ($\eig(B)$) is the 
	vector of eigenvalues of $A$ ($B$).  By definition $A\preceq_\rmw B$ whenever $A\leq B$.
	The majorization relation can be defined in a similar way.

	\section{\label{app:FisherEigQubit}Fisher  eigenvalues of the qubit SIC}
	Suppose $\scrA$ is the SIC defined by \eref{eq:SIC} and $\rho$ is determined by the Bloch vector $\vec{s}=(x,y,z)$. Then the Fisher purity follows from \eref{eq:FisherPurityQubit}, 
	\begin{align}
	\wp\!=\!1-\frac{2(9-7s^2+6\sqrt{3}\lsp xyz)}{3[(3-s^2)^2-4(x^2y^2+y^2z^2+z^2x^2)+8\sqrt{3}\lsp xyz]}. 
	\end{align} 
	In general, $\scrA$ has three distinct Fisher eigenvalues, which  depend on the length and direction of the Bloch vector~$\vec{s}$. In the three special directions $(1,0,0)$, $(1,1,0)$, and $(1,1,1)$, we can derive  analytical formulas (cf. \fref{fig:FisherEig}), 
	\begin{gather}
	\frac{1}{3-\sqrt{3}s},\quad \frac{1}{3+\sqrt{3}s}, \quad  \frac{1-s^2}{3-s^2};\\
	\frac{1}{3},\; \frac{3-2s^2+\sqrt{3s^2-2s^4}}{9-6s^2},\; \frac{3-2s^2-\sqrt{3s^2-2s^4}}{9-6s^2};\\
	\frac{1}{3-s},\quad \frac{1}{3-s},\quad \frac{1-s}{3-s}. 
	\end{gather}

	\section{\label{app:RLD}Metric adjusted Fisher information matrix based on the right logarithmic derivative}

	Suppose $\rho(\theta)\in \scrD(\caH)$  is specified by $g=d^2-1$ parameters $\theta_1, \theta_2, \ldots, \theta_g$ as in the main text. The right logarithmic derivative $L_{\rmR,a}$ of $\rho$ with respect to $\theta_a$ is determined by the equation $\rho_{,a}=\rho L_{\rmR,a}$ \cite{YuenL73,Hole82book}, 
	which means $L_{\rmR,a}=\rho^{-1}\rho_{,a}$. 
	The corresponding quantum Fisher information matrix $J_\rmR$  is a Hermitian matrix with  entries 
	\begin{align}
	(J_\rmR)_{ab}(\theta)&=\tr\bigl(\rho L_{\rmR,b}  L_{\rmR,a}^\dag\bigr)=\tr\bigl(\rho_{,a} \rho^{-1}\rho_{,b}\bigr)\nonumber\\
	&=\dbra{\rho_{,a}} \caL^{-1}(\rho) \dket{\rho_{,b}}=\dbra{\rho_{,a}} \caJ_\rmL(\rho) \dket{\rho_{,b}} \nonumber\\
	&=
	\dbra{\rho_{,a}} \bar{\caJ}_\rmL(\rho) \dket{\rho_{,b}}.
	\label{eq:QFIRLD}
	\end{align}
	The analog for the metric adjusted Fisher information matrix reads [cf. \eref{eq:FIMma}]
	\begin{align}\label{eq:FIMRma}
	I_\rmR(\theta,\scrA):=J_\rmR^{-1/2}(\theta)I(\theta,\scrA)J_\rmR^{-1/2}(\theta). 
	\end{align}
	Note that the spectrum  of $I_\rmR(\theta,\scrA)$  is independent of the parametrization, just as $I_\rmS(\theta,\scrA)$. The above equations also show that  $I_\rmR(\theta,\scrA)$ is the matrix representation of $\bar{\caF}_\rmL(\rho,\scrA)$ [not $\bar{\caF}_\rmR(\rho,\scrA)$] in the operator basis  $\{\rho_{,a}\}_{a=1}^g$, where
	\begin{align}
	\bar{\caF}_\rmL&:=(\bar{\caJ}_\rmL^+)^{1/2}\bar{\caF}(\bar{\caJ}_\rmL^+)^{1/2} \label{eq:FFbarL}
	\end{align}
	is defined in analogy to $\bar{\caF}_\rmS$ in 
	\eref{eq:FFbarS}.  So $I_\rmR(\theta,\scrA)$ and $\bar{\caF}_\rmL(\rho,\scrA)$ have the same nonzero spectrum.

	The left logarithmic derivative $L_{\rmL,a}$  is determined by the equation $\rho_{,a}= L_{\rmL,a} \rho$,
	which means $L_{\rmL,a}=\rho_{,a}\rho^{-1}$. 
	The corresponding quantum Fisher information matrix $J_\rmL$ is a Hermitian matrix with entries 
	\begin{align}
	(J_\rmL)_{ab}(\theta)&=\tr\bigl(\rho  L_{\rmL,a}^\dag L_{\rmL,b} \bigr)=\tr\bigl(\rho_{,a} \rho_{,b} \rho^{-1}\bigr)\nonumber\\
	&=\dbra{\rho_{,a}} \caR^{-1}(\rho) \dket{\rho_{,b}}=\dbra{\rho_{,a}} \caJ_\rmR(\rho) \dket{\rho_{,b}} \nonumber\\
	&=
	\dbra{\rho_{,a}} \bar{\caJ}_\rmR(\rho) \dket{\rho_{,b}}.  \label{eq:QFILLD}
	\end{align}
	The analog for the metric adjusted Fisher information matrix reads [cf. \esref{eq:FIMma}{eq:FIMRma}]
	\begin{align}
	I_\rmL(\theta,\scrA):=J_\rmL^{-1/2}(\theta)I(\theta,\scrA)J_\rmL^{-1/2}(\theta). 
	\end{align}
	Here $I_\rmL(\theta,\scrA)$ is the matrix representation of $\bar{\caF}_\rmR(\rho,\scrA)$  [not $\bar{\caF}_\rmL(\rho,\scrA)$] in the operator basis  $\{\rho_{,a}\}_{a=1}^g$, where
	\begin{align}
	\bar{\caF}_\rmR&:=(\bar{\caJ}_\rmR^+)^{1/2}\bar{\caF}(\bar{\caJ}_\rmR^+)^{1/2}.
	\end{align}
	So  $I_\rmL(\theta,\scrA)$ and $\bar{\caF}_\rmR(\rho,\scrA)$ have the same nonzero spectrum.

	\Esref{eq:QFIRLD}{eq:QFILLD} imply that $J_\rmL=J_\rmR^\rmT=J_\rmR^*$ and $I_\rmL=I_\rmR^\rmT=I_\rmR^*$. 
	So  $I_\rmL$ and $I_\rmR$ have the same spectrum;  $I_\rmL$ is a projector iff $I_\rmR$ is a projector. In addition, the inverse function is operator convex in the open interval $(0,\infty)$ \cite{Bhat97book}, so $\caS^{-1}\leq (\caL^{-1}+\caR^{-1})/2$,
	which means
	\begin{align}
	J_\rmS\leq  \frac{1}{2}(J_\rmR+J_\rmL)=\frac{1}{2}(J_\rmR+J_\rmR^*).
	\end{align}

	\section{\label{app:FisherAux}Auxiliary results on Fisher information matrices}
	
	Denote by $\Pi(\scrA)$ the projector onto the \emph{reconstruction subspace}, that is, $\spa(\scrA)$, and by $\rho(\scrA)$ the projection of $\rho$ into this space. Denote by $\caF^+(\rho,\scrA)$ the Moore-Penrose generalized inverse \cite{Bern09book} of $\caF(\rho,\scrA)$, which coincides with the usual inverse when $\caF(\rho,\scrA)$ is invertible.

	\begin{lem}\label{lem:FFbar}
		Suppose $\rho\in \scrD(\caH)$ and $\scrA$ is a POVM on $\caH$. Then 
		\begin{align}\label{eq:FFbar}
		\bar{\caF}^+(\rho,\scrA)=\caF^+(\rho,\scrA)-\douter{\rho(\scrA)}{\rho(\scrA)}.
		\end{align}
	\end{lem}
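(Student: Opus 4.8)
The plan is to unpack the definitions of $\caF$ and $\bar\caF$ and exploit the fact that $\bar\caF = \bid\,\caF\,\bid$, so that $\bar\caF$ is just the compression of the positive semidefinite superoperator $\caF$ to the traceless subspace. The key structural observation is that the range of $\caF(\rho,\scrA)$ is $\spa(\scrA)$ (this is essentially \lref{lem:FisherRank}: $\rk\caF = \dim\spa(\scrA)$, and $\dket{A_j}\in\spa(\scrA)$ for each $j$), and that the identity operator $\id$ always lies in $\spa(\scrA)$ since $\sum_j A_j = \id$. Thus $\dket{\id}$ is in the range of $\caF$, and moreover $\caF\dket{\id} = \sum_j \dket{A_j}\dbra{A_j}\dket{\id}/\tr(\rho A_j) = \sum_j \dket{A_j}\tr(A_j)/\tr(\rho A_j)$ — wait, that is not proportional to $\dket{\id}$ in general, so $\dket{\id}$ is not an eigenvector. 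Instead the relevant fact is $\dbra{\id}\caF\dket{\id} = \sum_j \tr(A_j)^2/\tr(\rho A_j)$, which for our normalization issues means I should track $\caF^+\dket{\id}$ directly.

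Here is the cleaner route. First I would establish that $\range(\bar\caF) = \bid\,\range(\caF) \cap (\text{traceless space})$; more precisely, since $\caF\geq 0$, one has $\range(\caF^+) = \range(\caF) = \spa(\scrA)$, and the compression formula for Moore–Penrose inverses under a projection $\bid$ whose range contains a codimension-one piece of $\range(\caF)$ gives $\bar\caF^+ = \bid\,\caF\,\bid$-inverse $= \caF^+ - \caF^+\dket{u}\dbra{u}\caF^+ / \dbra{u}\caF^+\dket{u}$ where $\dket{u}$ spans the orthogonal complement of the traceless space inside $\range(\caF)$, i.e. $\dket{u}$ is the component of $\dket{\id}$ lying in $\spa(\scrA)$ — which is all of $\dket{\id}$ since $\id\in\spa(\scrA)$. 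So $\dket{u} = \dket{\id}$ up to normalization. Then the computation reduces to evaluating $\caF^+\dket{\id}$ and $\dbra{\id}\caF^+\dket{\id}$. I claim $\caF^+\dket{\id} = d\,\dket{\rho(\scrA)}$ and $\dbra{\id}\caF^+\dket{\id} = d$: indeed $\caF\dket{\rho} = \sum_j \dket{A_j}\dbra{A_j}\dket{\rho}/\tr(\rho A_j) = \sum_j \dket{A_j}\overline{\tr(\rho A_j)}/\tr(\rho A_j) = \sum_j\dket{A_j} = \dket{\id}$ (using $\tr(\rho A_j)$ real), so $\caF^+\dket{\id} = \dket{\rho(\scrA)}$ since $\rho(\scrA)$ is the projection of $\rho$ onto $\range(\caF)=\spa(\scrA)$; and $\dbra{\id}\caF^+\dket{\id} = \dinner{\id}{\rho(\scrA)} = \dinner{\id}{\rho} = \tr\rho = 1$. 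Substituting into the rank-one-update formula yields exactly $\bar\caF^+ = \caF^+ - \douter{\rho(\scrA)}{\rho(\scrA)}$.

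The steps in order, then: (i) recall $\range(\caF) = \spa(\scrA)$ and $\id\in\spa(\scrA)$; (ii) verify $\caF\dket{\rho(\scrA)} = \dket{\id}$, hence $\caF^+\dket{\id} = \dket{\rho(\scrA)}$ and $\dbra{\id}\caF^+\dket{\id}=1$; (iii) invoke the general lemma that for a psd operator $\caF$ with $\range(\caF)\ni\dket{v}$, the compression $\bid\caF\bid$ to the hyperplane $\dket{v}^\perp$ has Moore–Penrose inverse $\caF^+ - \caF^+\dket{v}\dbra{v}\caF^+/\dbra{v}\caF^+\dket{v}$ with $\dket{v}=\dket{\id}$; (iv) substitute. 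The main obstacle is step (iii): one must be careful that $\bid\caF\bid$, viewed as an operator on the whole space, has the claimed generalized inverse — this requires checking both that the candidate satisfies the four Penrose conditions and that the ranges match ($\range(\bid\caF\bid) = \spa(\scrA)\cap(\text{traceless}) = \spa(\bar\scrA)$, consistent with $\rk\caF - 1$ from \lref{lem:FisherRank}). This is a standard but slightly fiddly linear-algebra fact about how Moore–Penrose inverses behave under orthogonal compression to a subspace that nearly contains the range; I would either cite it from \rcite{Bern09book} or verify the Penrose conditions directly using that $\caF^+\caF$ and $\caF\caF^+$ both equal the projector $\Pi(\scrA)$ onto $\spa(\scrA)$.
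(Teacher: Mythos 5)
Your proposal is correct, and the core computations are the same ones the paper uses: $\caF(\rho,\scrA)\dket{\rho}=\dket{1}$, hence $\caF^+(\rho,\scrA)\dket{1}=\dket{\rho(\scrA)}$ and $\dbra{1}\caF^+(\rho,\scrA)\dket{1}=\tr[\rho(\scrA)]=1$ (both implicitly exploiting $1=\sum_j A_j\in\spa(\scrA)$). Where you diverge is the final assembly. You route everything through a general compression formula: for positive semidefinite $\caF$ with $\dket{v}\in\range(\caF)$ and $P$ the projector onto $\dket{v}^\perp$, one has $(P\caF P)^+=\caF^+-\caF^+\dket{v}\dbra{v}\caF^+/\dbra{v}\caF^+\dket{v}$, applied with $\dket{v}=\dket{1}$, $P=\bid$. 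That formula is true, and your deferred step (iii) does go through: because $\dket{1}\in\spa(\scrA)$, the projector $\bid$ leaves $\spa(\scrA)$ (and its orthocomplement) invariant, so one can restrict to $\spa(\scrA)$, where $\caF$ is invertible, and the claim reduces to the standard identity that the inverse of a principal block of $\caF$ equals the Schur complement of the complementary block of $\caF^{-1}$; alternatively the four Penrose conditions can be checked directly as you suggest. The paper instead avoids any general lemma: it takes the candidate $X=\caF^+(\rho,\scrA)-\douter{\rho(\scrA)}{\rho(\scrA)}$, observes that $X\dket{1}=0$ so $X$ is a Hermitian superoperator supported in the traceless reconstruction subspace $\spa(\bar{\scrA})$ (the support of $\bar{\caF}$), and then verifies in two lines that $X\bar{\caF}(\rho,\scrA)=[\lsp\Pi(\scrA)-\douter{\rho(\scrA)}{1}\lsp]\bid=\bar{\Pi}(\scrA)$, which is exactly the defining property of the Moore--Penrose inverse for a Hermitian operator with matching support. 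The trade-off: your approach is more structural and reusable (the rank-one update/Schur-complement identity explains where the subtracted term $\douter{\rho(\scrA)}{\rho(\scrA)}$ comes from, and the same template proves \lref{lem:JJbar} with $\caF$ replaced by $\caJ_\rmS$), but it carries the burden of justifying the compression formula in the singular case, precisely the "fiddly" point you flagged; the paper's direct verification is shorter and self-contained because, once the candidate is guessed, only the support condition and one product need to be checked. Do make the invariance point ($1\in\spa(\scrA)$ so $\bid$ preserves $\range(\caF)$) explicit if you keep your route — it is the one place where the reduction to the invertible case could silently fail for a general hyperplane not meeting $\range(\caF)$ correctly.
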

	
	\Lref{lem:FFbar} shows that $\caF(\rho,\scrA)$ is uniquely determined by $\bar{\caF}(\rho,\scrA)$, which implies \lref{lem:FIequi} in the main text.

	\begin{proposition}\label{pro:ProjEqui}
		Suppose $\rho(\theta)\in \scrD(H)$  and $\scrA$, $\scrB$ are two PVMs on $\caH$ that contain no zero operators. Then the following three statements are equivalent. 
		\begin{enumerate}
			\item $\scrB$ is a permutation of $\scrA$. 	
			\item $I(\theta,\scrA)=I(\theta,\scrB)$.
			\item $\caF(\rho,\scrA)=\caF(\rho, \scrB)$.
		\end{enumerate}
	\end{proposition}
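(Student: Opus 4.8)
The plan is to prove the cyclic chain $1\Rightarrow 2\Rightarrow 3\Rightarrow 1$, in which the implication $3\Rightarrow 1$ carries essentially all of the content. For $1\Rightarrow 2$, I would simply observe that $I(\theta,\scrA)$ — whether read off from the defining sum or from \eqref{eq:IFbar} — depends only on the \emph{unordered} family of POVM elements, so relabeling the elements of $\scrA$ leaves it unchanged; the same remark applies verbatim to $\caF(\rho,\scrA)$ in \eqref{eq:FDef}. For $2\Rightarrow 3$, I would invoke \lref{lem:FIequi}: the equalities $I(\theta,\scrA)=I(\theta,\scrB)$ and $\caF(\rho,\scrA)=\caF(\rho,\scrB)$ are two of the four equivalent conditions there, so the former forces the latter. (Explicitly: $I(\theta,\cdot)$ is the matrix of $\bar{\caF}(\rho,\cdot)$ in the orthonormal operator basis $\{\rho_{,a}\}$, so $\bar{\caF}(\rho,\scrA)=\bar{\caF}(\rho,\scrB)$, and \lref{lem:FFbar} then recovers $\caF$ from $\bar{\caF}$.)

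For the substantive implication $3\Rightarrow 1$, I would exploit the algebraic rigidity of PVMs. Writing the nonzero elements of $\scrA$ as $\{A_j\}$, these are mutually orthogonal projectors, $A_jA_k=\delta_{jk}A_j$, so the double kets $\dket{A_j}$ are mutually orthogonal in the Hilbert--Schmidt inner product with $\dinner{A_j}{A_j}=\rk A_j>0$; moreover $\tr(\rho A_j)>0$ for each $j$ because $\rho$ is positive definite. Hence $\caF(\rho,\scrA)=\sum_j\dket{A_j}\dbra{A_j}/\tr(\rho A_j)$ has range exactly $\spa(\scrA)$, the span of the $\dket{A_j}$. The hypothesis $\caF(\rho,\scrA)=\caF(\rho,\scrB)$ therefore gives $\spa(\scrA)=\spa(\scrB)=:V$, and it remains to show that a PVM without zero elements is determined by its span $V$. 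I would argue this as follows: a general element of $V$ is $\sum_j x_j A_j$ with $x_j\in\bbC$ (the $A_j$ forming a basis of $V$), and such an operator is a projector exactly when every $x_j\in\{0,1\}$ — Hermiticity forces the $x_j$ real since $A_j^\dag=A_j$, and $X^2=X$ forces $x_j^2=x_j$ via $A_jA_k=\delta_{jk}A_j$. So the nonzero projectors in $V$ are precisely the subset sums $\sum_{j\in S}A_j$ ($S\neq\emptyset$), and since the $A_j$ are orthogonal, the minimal ones with respect to the Loewner order (equivalently, range inclusion) are exactly the individual $A_j$. Running the same characterization for $\scrB$ gives $\{A_j\}=\{\text{minimal nonzero projectors in }V\}=\{B_k\}$; as neither PVM contains a zero operator, $\scrB$ is a permutation of $\scrA$.

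I expect the main obstacle to be precisely this last step: recovering the PVM from $\spa(\scrA)$ \emph{uniformly in} $\rho$. For a generic $\rho$ the eigenvalues $\rk A_j/\tr(\rho A_j)$ of $\caF(\rho,\scrA)$ are pairwise distinct, the eigenspaces are one-dimensional, and each $A_j$ is read off as the (unique projector among the) eigenvectors; the work is entirely in the degenerate cases, where this shortcut fails and one must fall back on the purely algebraic "minimal projector" description above, which never refers to the particular $\rho$. Apart from that, the remaining implications are bookkeeping — invariance of a sum under reordering, plus citations of \lref{lem:FIequi} and \lref{lem:FFbar} — so I anticipate no further difficulty.
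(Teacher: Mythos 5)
Your proposal is correct, and for the substantive implication $3\imply 1$ it takes a genuinely different route from the paper. The paper observes that each $A\in\scrA$ is an eigenvector of $\caF(\rho,\scrA)=\caF(\rho,\scrB)$ and then reads off, from the explicit action $\caF(\rho,\scrB)\dket{A}=\sum_{B}\tr(AB)\dket{B}/\tr(\rho B)$, that every $B\in\scrB$ is either orthogonal to $A$ or supported in the support of $A$ (and symmetrically), which forces the two PVMs to coincide up to relabeling; it thus exploits the eigenstructure of the full superoperator. You instead use only the \emph{support} of $\caF$: since $\tr(\rho A_j)>0$ and the $\dket{A_j}$ are Hilbert--Schmidt orthogonal, $\caF(\rho,\scrA)=\caF(\rho,\scrB)$ gives $\spa(\scrA)=\spa(\scrB)=:V$, and then the algebraic rigidity step — any $X=\sum_j x_jA_j\in V$ is a projector iff $x_j\in\{0,1\}$, so the nonzero projectors in $V$ are exactly the subset sums and the minimal ones in the Loewner order are the individual $A_j$ (minimality holds because $A_j\geq A_k$ with $A_jA_k=0$ and $A_k\neq 0$ is impossible) — recovers the PVM from $V$ alone. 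This is sound and in fact proves something slightly stronger than statement 3 requires: two zero-free PVMs with the same operator span are already permutations of each other, whereas the paper's argument genuinely uses equality of the superoperators. The remaining implications ($1\imply 2$ by relabeling invariance, $2\equi 3$ via \lref{lem:FIequi} and \lref{lem:FFbar}) match the paper's bookkeeping, and your closing remark about generic versus degenerate $\rho$ is dispensable, since your minimal-projector argument never invokes genericity.
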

	
	\Pref{pro:ProjEqui} shows that the map from  PVMs to Fisher information matrices is essentially injective, although this is not the case for general POVMs. In the special case $\rho=1/d$, the equivalence of statements 1 and 3 also follows from Proposition 5.5 in \rcite{HeinJN22}.

	\section{\label{app:FisherEig}Auxiliary results on Fisher eigenvalues}

	\begin{lem}\label{lem:JJbar}
		Suppose $\rho\in \scrD(\caH)$. Then 
		\begin{align}
		\bar{\caJ}_\rmS^+&=\caS-\douter{\rho}{\rho}, \; 
		\bar{\caJ}_\rmL^+=\caL-\douter{\rho}{\rho}, \;	
		\bar{\caJ}_\rmR^+=\caR-\douter{\rho}{\rho}.	\label{eq:JJbar}
		\end{align}	
	\end{lem}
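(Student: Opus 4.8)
The plan is to check directly that each claimed right-hand side of \eqref{eq:JJbar} satisfies the four defining relations of the Moore--Penrose generalized inverse \cite{Bern09book}. Recall that $\caL,\caR,\caS$ are Hermitian and positive definite for $\rho\in\scrD(\caH)$, so $\caJ_\rmL=\caL^{-1}$, $\caJ_\rmR=\caR^{-1}$, $\caJ_\rmS=\caS^{-1}$ are well defined and Hermitian, $\bar{\caJ}_\rmS=\bid\caJ_\rmS\bid$ (and likewise for $\rmL,\rmR$), and $\bid=\mathbf{I}-\frac{1}{d}\douter{\id}{\id}$ is the orthogonal projector onto the traceless subspace $V=\{\dket{\id}\}^{\perp}$, so in particular $\bid\dket{\id}=0$. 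Everything rests on one elementary identity read off from \eqref{eq:LRS}: since $\rho\,\id=\id\,\rho=\rho$, we have $\caS\dket{\id}=\caL\dket{\id}=\caR\dket{\id}=\dket{\rho}$, equivalently $\caJ_\rmS\dket{\rho}=\caJ_\rmL\dket{\rho}=\caJ_\rmR\dket{\rho}=\dket{\id}$.

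I would carry out the verification in the $\rmS$ case; the $\rmL$ and $\rmR$ cases are word-for-word the same with $\caS,\caJ_\rmS$ replaced by $\caL,\caJ_\rmL$ or $\caR,\caJ_\rmR$. Set $N:=\caS-\douter{\rho}{\rho}$, the conjectured value of $\bar{\caJ}_\rmS^{+}$. First, $N$ is Hermitian, and $N\dket{\id}=\caS\dket{\id}-\dket{\rho}\dinner{\rho}{\id}=\dket{\rho}-(\tr\rho)\dket{\rho}=0$, so $\ker N\supseteq\spa\{\dket{\id}\}=V^{\perp}$ and, by Hermiticity, $\range N\subseteq V$; hence $\bid N=N\bid=N$. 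Using this, then $\caJ_\rmS\caS=\mathbf{I}$ (by \eqref{eq:JLRS}) together with $\caJ_\rmS\dket{\rho}=\dket{\id}$, and finally $\bid\dket{\id}=0$, one obtains
\begin{align}
\bar{\caJ}_\rmS N=\bid\caJ_\rmS\bid N=\bid\caJ_\rmS N=\bid\bigl(\mathbf{I}-\douter{\id}{\rho}\bigr)=\bid .
\end{align}
Since $N$, $\bid$, and $\bar{\caJ}_\rmS$ are Hermitian, taking adjoints yields $N\bar{\caJ}_\rmS=\bid$ as well. Because $\range\bar{\caJ}_\rmS\subseteq V$ (it begins with $\bid$) and $\range N\subseteq V$, the projector $\bid$ acts as the identity on both ranges, so $\bar{\caJ}_\rmS N\bar{\caJ}_\rmS=\bid\bar{\caJ}_\rmS=\bar{\caJ}_\rmS$, $N\bar{\caJ}_\rmS N=\bid N=N$, and $\bar{\caJ}_\rmS N=N\bar{\caJ}_\rmS=\bid$ is Hermitian. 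These are precisely the four Moore--Penrose conditions, so $N=\bar{\caJ}_\rmS^{+}$, which is the first identity in \eqref{eq:JJbar}; the remaining two follow identically.

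I do not anticipate a genuine obstacle: the argument is mostly bookkeeping with superoperator identities. The two points that require a little care are both minor. First, one uses that $\caL,\caR,\caS$ --- and hence $\caJ_\rmS,\caJ_\rmL,\caJ_\rmR$ and their compressions --- are Hermitian, so that the passage to adjoints above is legitimate; this is where positive-definiteness enters. Second, $\bar{\caJ}_\rmS$ is genuinely rank-deficient, $\bar{\caJ}_\rmS\dket{\id}=0$, so the generalized inverse is unavoidable and the rank-one correction $-\douter{\rho}{\rho}$ cannot be dropped; the direct verification above sidesteps any need to compute ranks or ranges explicitly. Alternatively, one may read \eqref{eq:JJbar} off the standard formula for the Moore--Penrose inverse of a corank-one orthogonal compression $PMP$ of an invertible Hermitian $M$, specialized to $M=\caJ_\rmS$ (resp.\ $\caJ_\rmL$, $\caJ_\rmR$) and $P=\bid$.
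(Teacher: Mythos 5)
Your proof is correct and follows essentially the same route as the paper: both hinge on the observations that $\caS\dket{\id}=\dket{\rho}$ (so $\caS-\douter{\rho}{\rho}$ annihilates $\dket{\id}$ and is supported in the traceless subspace) and that $\bigl(\caS-\douter{\rho}{\rho}\bigr)\bar{\caJ}_\rmS=\bid$, with the $\rmL$ and $\rmR$ cases handled identically. The only difference is cosmetic: you spell out all four Moore--Penrose conditions explicitly, whereas the paper infers the conclusion from the matching supports together with the single product identity.
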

	In conjunction with \eref{eq:JLRS}, this lemma clarifies the relation between $\bar{\caJ}_\rmS^+, \bar{\caJ}_\rmL^+, \bar{\caJ}_\rmR^+$ and $\caJ_\rmS^{-1}, \caJ_\rmL^{-1}, \caJ_\rmR^{-1}$, which is useful to studying Fisher eigenvalues.

	\begin{lem}\label{lem:GMgen}
		Suppose $\rho(\theta)\in \scrD(\caH)$ and $\scrA$ is a POVM on $\caH$ with $m$ POVM elements. Then 
		\begin{align}
		&\tr(I_\rmS(\theta,\scrA))+1=\tr(\bar{\caF}_\rmS(\scrA))+1=\tr(\bar{\caJ}_\rmS^+\caF(\scrA))+1\nonumber\\
		&=\tr(\caF_\rmS(\scrA))=\sum_{j, A_j\neq 0} \frac{\tr(\rho A_j^2 )}{\tr(\rho A_j)}\leq \min\{m,d\}. \label{eq:GMJF}
		\end{align}	
		When $m\leq d$, the inequality is saturated iff $\scrA$ is a PVM with $m$ nonzero POVM elements; when $d\leq m$, the inequality is saturated iff $\scrA$ is rank 1. The same conclusions hold if $I_\rmS$ is replaced by $I_\rmL$ or $I_\rmR$,
		and $\bar{\caF}_\rmS$, $\caF_\rmS$, $\bar{\caJ}_\rmS$ are replaced accordingly. 
	\end{lem}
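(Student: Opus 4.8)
The plan is to push the whole chain through the superoperator formalism and then reduce the inequality to a termwise estimate on the scalar $\Sigma:=\sum_{j:A_j\neq0}\tr(\rho A_j^2)/\tr(\rho A_j)$. I would establish the equalities from right to left. First, $\tr\caF_\rmS=\tr(\caJ_\rmS^{-1}\caF)=\tr(\caS\caF)$ by cyclicity; expanding $\caF$ as in \eref{eq:FDef} gives $\tr(\caS\caF)=\sum_{j:A_j\neq0}\dbra{A_j}\caS\dket{A_j}/\tr(\rho A_j)$, and $\dbra{A_j}\caS\dket{A_j}=\frac{1}{2}\tr(A_j\rho A_j+A_j^2\rho)=\tr(\rho A_j^2)$ since $A_j$ is Hermitian, so $\tr\caF_\rmS=\Sigma$. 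Next, \lref{lem:JJbar} gives $\bar{\caJ}_\rmS^+=\caS-\douter{\rho}{\rho}$, and since $\dbra{\rho}\caF\dket{\rho}=\sum_j\tr(\rho A_j)=\tr\rho=1$ we get $\tr(\bar{\caJ}_\rmS^+\caF)=\tr\caF_\rmS-1$. Then $\tr(\bar{\caJ}_\rmS^+\caF)=\tr\bar{\caF}_\rmS$ follows from $\bar{\caF}_\rmS=(\bar{\caJ}_\rmS^+)^{1/2}\bar{\caF}(\bar{\caJ}_\rmS^+)^{1/2}$, cyclicity, $\bar{\caF}=\bid\caF\bid$, and the fact that $\bar{\caJ}_\rmS^+$ is supported on the traceless subspace (so the projectors $\bid$ can be inserted or removed at will under the trace). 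Finally $\tr\bar{\caF}_\rmS=\tr I_\rmS$ because $I_\rmS(\theta,\scrA)$ is the matrix representation of $\bar{\caF}_\rmS(\rho,\scrA)$ in the orthonormal traceless basis $\{\rho_{,a}\}$, on which $\bar{\caF}_\rmS$ is supported.

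For the bound $\Sigma\leq\min\{m,d\}$ I would combine two independent termwise estimates. Since $0\leq A_j\leq\id$ (as $\scrA$ is a POVM) we have $A_j^2\leq A_j$, so each summand $\tr(\rho A_j^2)/\tr(\rho A_j)\leq1$ and there are at most $m$ of them, which gives $\Sigma\leq m$. For $\Sigma\leq d$, use the spectral decomposition $A_j=\sum_k\lambda_{jk}\outer{\phi_{jk}}{\phi_{jk}}$ over nonzero eigenvalues ($\lambda_{jk}>0$, $\{\phi_{jk}\}_k$ orthonormal) and set $p_{jk}:=\bra{\phi_{jk}}\rho\ket{\phi_{jk}}>0$: the summand equals $\bigl(\sum_k\lambda_{jk}^2p_{jk}\bigr)/\bigl(\sum_k\lambda_{jk}p_{jk}\bigr)$, a weighted mean of the $\lambda_{jk}$ with strictly positive weights, hence $\leq\max_k\lambda_{jk}\leq\sum_k\lambda_{jk}=\tr(A_j)$; summing over $j$ gives $\tr\bigl(\sum_jA_j\bigr)=\tr\id=d$.

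The equality analysis then tracks when these estimates are tight. When $m\leq d$: saturation needs all $m$ elements nonzero and each summand equal to $1$, i.e.\ $\tr(\rho(A_j-A_j^2))=0$; since $\rho$ is positive definite and $A_j-A_j^2\geq0$ this forces $A_j=A_j^2$, so $\scrA$ is a PVM with $m$ nonzero elements, and the converse is immediate. When $d\leq m$: saturation needs each summand equal to $\tr(A_j)$, and since $\lambda_{jk}>0$ the identity $\max_k\lambda_{jk}=\sum_k\lambda_{jk}$ holds only if $A_j$ has rank $1$, so $\scrA$ is rank $1$; conversely a rank-1 POVM $A_j=w_j\outer{\psi_j}{\psi_j}$ has summand $w_j$ and $\sum_jw_j=\tr\id=d$. (When $m=d$ a rank-1 POVM necessarily has exactly $d$ elements and is a rank-1 PVM, so the two conditions coincide, consistent with \thref{thm:SharpProjIndex} and the Gill-Massar inequality.) The $\rmL$ and $\rmR$ variants require nothing new: $\caJ_\rmL=\caL^{-1}$ and $\caJ_\rmR=\caR^{-1}$ give $\tr\caF_\rmL=\tr(\caL\caF)$ and $\tr\caF_\rmR=\tr(\caR\caF)$, both of which collapse to $\Sigma$ because $A_j$ is Hermitian, and the barred analogs match via the other cases of \lref{lem:JJbar}, so the entire statement is label-independent.

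I expect the only step needing a genuine (if modest) idea to be $\Sigma\leq d$ with its equality case: the observation that $\tr(\rho A_j^2)/\tr(\rho A_j)$ is a weighted mean of the eigenvalues of $A_j$ controlled by $\tr(A_j)$, together with the use of $\rho>0$ to keep all the weights strictly positive. The rest is bookkeeping in the double-ket formalism plus the elementary operator inequality $A_j^2\leq A_j$ for POVM elements.
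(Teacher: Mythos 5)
Your proposal is correct and follows essentially the same route as the paper: the equality chain via $\bar{\caJ}_\rmS^+=\caS-\douter{\rho}{\rho}$, $\dbra{\rho}\caF\dket{\rho}=1$, and the matrix-representation fact, then the termwise bounds $A_j^2\leq A_j$ for $\Sigma\leq m$ and the eigenvalue estimate for $\Sigma\leq d$ with the same saturation analysis. Your weighted-mean spectral argument for $\tr(\rho A_j^2)/\tr(\rho A_j)\leq\tr(A_j)$ is just a rephrasing of the paper's operator inequality $A_j^2\leq\|A_j\|A_j$ together with $\|A_j\|\leq\tr(A_j)$, so no substantive difference.
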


	\Lref{lem:GMgen}  strengthens the Gill-Massar inequality in \eref{eq:GMI}, but our derivation is much simpler. The inequality $\tr(\caF_\rmS(\scrA))\leq m$ and saturation condition also follow from Proposition 5.1 in \rcite{HeinJN22}. However, the subtle connection between $\caF_\rmS(\scrA)$ and $I_\rmS(\theta,\scrA)$ was not  clarified before the current study.

	\begin{lem}\label{lem:FSLRmajor}
		Suppose $\rho\in \scrD(\caH)$ and  $\scrA$ is a POVM on a subspace $\caV\leq \caH$ with $\dim \caV\geq 1$. Then
		\begin{align}\label{eq:FSLRmajor}
		\caF_\rmS(\rho,\scrA)\preceq 	\caF_\rmL(\rho,\scrA)\simeq 	\caF_\rmR(\rho,\scrA),\\
		\bar{\caF}_\rmS(\rho,\scrA)\preceq 	\bar{\caF}_\rmL(\rho,\scrA)\simeq 	\bar{\caF}_\rmR(\rho,\scrA).	\label{eq:FSLRbarMajor}	
		\end{align}	
	\end{lem}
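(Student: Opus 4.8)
The plan is to establish the majorization relation $\caF_\rmS\preceq\caF_\rmL$ and the similarity $\caF_\rmL\simeq\caF_\rmR$ at the level of superoperators, and then transfer both to the barred versions via the projector $\bid$ using a standard pinching/compression argument. First I would address the easier claim $\caF_\rmL\simeq\caF_\rmR$. Recall that $\caF_\rmL=\caL^{-1/2}\caF\caL^{-1/2}$ and $\caF_\rmR=\caR^{-1/2}\caF\caR^{-1/2}$ (in analogy to \eref{eq:FFbarS}). Since $\caL$ and $\caR$ commute and $\caR=\caL^{-1}\cdot(\caL\caR)$ — more directly, conjugation by the unitary-like superoperator $\caL^{1/2}\caR^{-1/2}$ (which is positive, hence its action is a genuine similarity) sends $\caF_\rmR$ to $\caL^{-1/2}\caR^{-1/2}\caF\caR^{-1/2}\caL^{-1/2}\cdot(\text{conjugation})$; a cleaner route is to note $\caF_\rmL$ and $\caF_\rmR$ are both similar to $\caF^{1/2}\caL^{-1}\caF^{1/2}$ and $\caF^{1/2}\caR^{-1}\caF^{1/2}$ respectively when $\caF$ is invertible on its support, and $\caL^{-1}$, $\caR^{-1}$ are related by the transpose/adjoint operation $A\mapsto A^\dagger$ which is an isometry of Hilbert–Schmidt space. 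This last observation — that $\caR(\rho)$ is the conjugate of $\caL(\rho)$ under the $\dagger$-map, and that $\caF$ is invariant under it since each $\dket{A_j}\dbra{A_j}$ with $A_j$ Hermitian is — gives the isospectrality $\caF_\rmL\simeq\caF_\rmR$ directly.

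For the inequality $\caF_\rmS\preceq\caF_\rmL$, the key input is operator convexity of $x\mapsto x^{-1}$ on $(0,\infty)$, already invoked in \aref{app:RLD}: it gives $\caS^{-1}\leq\tfrac12(\caL^{-1}+\caR^{-1})$, equivalently $\caJ_\rmS\leq\tfrac12(\caJ_\rmL+\caJ_\rmR)$. I want to convert this operator inequality on the "metric" side into a majorization on the "Fisher" side. Write $\caF_\rmS=\caJ_\rmS^{1/2}\caF\caJ_\rmS^{1/2}$ (using $\caJ_\rmS=\caS^{-1}$), and similarly for $\rmL,\rmR$. The eigenvalues of $\caJ_\rmS^{1/2}\caF\caJ_\rmS^{1/2}$ equal those of $\caF^{1/2}\caJ_\rmS\caF^{1/2}$, and by the above inequality $\caF^{1/2}\caJ_\rmS\caF^{1/2}\leq\tfrac12\bigl(\caF^{1/2}\caJ_\rmL\caF^{1/2}+\caF^{1/2}\caJ_\rmR\caF^{1/2}\bigr)$, whose two summands are isospectral to $\caF_\rmL$ and $\caF_\rmR$ respectively, hence to each other. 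Now I invoke the fact that for Hermitian $X,Y$ with the same spectrum, $\tfrac12(X+Y)\preceq X$ (the eigenvalues of an average are majorized by the common spectrum — this is Ky Fan's inequality / a consequence of $\lambda^\downarrow(\tfrac12(X+Y))\prec\tfrac12(\lambda^\downarrow(X)+\lambda^\downarrow(Y))=\lambda^\downarrow(X)$). Combining $\caF_\rmS=\eig\bigl(\caF^{1/2}\caJ_\rmS\caF^{1/2}\bigr)\preceq_\rmw\eig\bigl(\tfrac12(\cdots)\bigr)\preceq\eig(\caF_\rmL)$, and checking that the traces agree (both equal $\tr(\caF_\rmS)=\tr(\caF^{1/2}\caJ_\rmS\caF^{1/2})$ versus... — actually here one must be careful, since $\tr\caF_\rmS\neq\tr\caF_\rmL$ in general, so only $\preceq_\rmw$ survives at this stage, and the stated $\preceq$ must come from a trace identity I would verify separately, perhaps $\tr\caF_\rmS=\tr\caF_\rmL=\tr(\caF)$ evaluated against the operator structure, or from \lref{lem:GMgen} which gives $\tr\caF_\rmS=\sum_j\tr(\rho A_j^2)/\tr(\rho A_j)$ independent of the choice $\rmS/\rmL/\rmR$).

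To pass to the barred superoperators, I would use that $\bar{\caF}_\rmS$ and $\bar{\caF}_\rmL$ are obtained from $\caF_\rmS$, $\caF_\rmL$ by compression; more precisely, that the nonzero spectrum of $\bar{\caF}_\rmS$ is obtained from that of $\caF_\rmS$ by deleting a single eigenvalue equal to... — this is exactly the content of \lref{lem:FisherEigMul} referenced just before the statement, together with \lref{lem:JJbar} relating $\bar{\caJ}^+$ to $\caS-\douter{\rho}{\rho}$. Since the same eigenvalue is deleted in both the $\rmS$ and $\rmL$ cases (it corresponds to the $\dket{\rho}$ direction in a uniform way), majorization is preserved under this common deletion, yielding \eref{eq:FSLRbarMajor} from \eref{eq:FSLRmajor}. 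The main obstacle I anticipate is precisely pinning down why the relation is full majorization $\preceq$ rather than merely submajorization $\preceq_\rmw$ — i.e.\ establishing the relevant trace equality $\tr\caF_\rmS=\tr\caF_\rmL$; I expect this follows because $\tr\caF_\rmX=\sum_j\dbra{A_j}\caJ_\rmX\dket{A_j}/\tr(\rho A_j)$ and $\dbra{A}\caL^{-1}\dket{A}=\tr(A\rho^{-1}A)=\dbra{A}\caR^{-1}\dket{A}$ for Hermitian $A$, with the same holding for $\caS^{-1}$ only after using that $\rho^{-1}$ appears symmetrically — a short computation I would carry out explicitly, and the analogous identity for the barred/compressed versions then handles \eref{eq:FSLRbarMajor}.
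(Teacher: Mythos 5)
Your middle step is built on an inverted convention, and the gap it creates is exactly the one you flag as the crux. By \eref{eq:JLRS} and \eref{eq:FFbarS}, $\caJ_\rmS=\caS^{-1}$ and $\caF_\rmS=\caJ_\rmS^{-1/2}\caF\caJ_\rmS^{-1/2}=\caS^{1/2}\caF\caS^{1/2}$ (likewise $\caF_\rmL=\caL^{1/2}\caF\caL^{1/2}$, $\caF_\rmR=\caR^{1/2}\caF\caR^{1/2}$), not $\caS^{-1/2}\caF\caS^{-1/2}$ as you write. Because of this you reach for operator convexity, $\caS^{-1}\leq\tfrac12(\caL^{-1}+\caR^{-1})$, which at best yields weak majorization, and the trace identity you hope to "verify separately" is in fact false in your setup: $\dbra{A}\caS^{-1}\dket{A}\neq\tr(\rho^{-1}A^2)$ in general (for example, with $\rho=\diag(p,1-p)$, $p\neq\tfrac12$, and the PVM $\{(1\pm\sigma_x)/2\}$ one gets $\tr(\caS^{-1}\caF)=\tr(\rho^{-1})+4$ while $\tr(\caL^{-1}\caF)=2\tr(\rho^{-1})$). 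So the upgrade from $\preceq_\rmw$ to $\preceq$ cannot be completed along the route you describe; also note that the formula in \lref{lem:GMgen} you invoke, $\tr\caF_\rmS=\sum_j\tr(\rho A_j^2)/\tr(\rho A_j)$, pertains to the correct (non-inverted) definition.

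With the correct convention the obstacle disappears and no convexity is needed: $\caS=\tfrac12(\caL+\caR)$ holds exactly, so $\caF_\rmS$ is isospectral to $\caF^{1/2}\caS\caF^{1/2}=\tfrac12\bigl(\caF^{1/2}\caL\caF^{1/2}+\caF^{1/2}\caR\caF^{1/2}\bigr)$, the two summands are isospectral via your Hermitian-adjoint map (that part of your argument is sound: the antiunitary $A\mapsto A^\dagger$ intertwines $\caL$ and $\caR$ and fixes $\caF$, giving $\caF_\rmL\simeq\caF_\rmR$), and Ky Fan's theorem then yields full majorization, the trace equality being automatic because the decomposition is an identity rather than an inequality. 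This is essentially the paper's proof, phrased there with Gram matrices satisfying $G_\rmS=\tfrac12(G_\rmL+G_\rmR)$ and $G_\rmR=G_\rmL^\rmT$. For \eref{eq:FSLRbarMajor} the paper simply repeats the same argument with $\caS-\douter{\rho}{\rho}$, $\caL-\douter{\rho}{\rho}$, $\caR-\douter{\rho}{\rho}$ (\lref{lem:JJbar}); your alternative of deleting a common eigenvalue $1$ via \lref{lem:FisherEigMul} can be made to work, but it leans on \lref{lem:FLRnorm} and \lref{lem:FisherEigMul}, whose proofs use the unbarred part of the present lemma, so you would have to order the argument carefully to avoid circularity --- the direct repetition is cleaner.
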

	\Eref{eq:FSLRmajor} means $\caF_\rmS$ is majorized by $\caF_\rmL$, which is similar to $\caF_\rmR$.

	\begin{lem}\label{lem:FLRnorm}
		Suppose $\rho\in \scrD(\caH)$ and  $\scrA$ is a POVM on $\caV\leq \caH$ with $\dim \caV\geq 1$. Then
		\begin{gather} 
		\caF(\rho,\scrA)\leq \caJ_\rmS(\rho),\,  \caJ_\rmL(\rho), \, \caJ_\rmR(\rho),  \label{eq:QCRF} \\
		\|\caF_\rmS(\rho,\scrA)\|= \|\caF_\rmL(\rho,\scrA)\|=\|\caF_\rmR(\rho,\scrA)\|=1. \label{eq:FLRnorm}
		\end{gather}	
		If $\scrA$ is irreducible, then the eigenvalue 1 is nondegenerate for   $\caF_\rmS(\rho,\scrA)$, $\caF_\rmL(\rho,\scrA)$,  and $\caF_\rmR(\rho,\scrA)$. 
	\end{lem}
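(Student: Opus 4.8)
The plan is to recast everything in terms of the superoperators $\caF_\rmS=\caS^{1/2}\caF\caS^{1/2}$, $\caF_\rmL=\caL^{1/2}\caF\caL^{1/2}$, $\caF_\rmR=\caR^{1/2}\caF\caR^{1/2}$ (recall $\caJ_\rmS^{-1/2}=\caS^{1/2}$, etc.), so that the three operator inequalities $\caF\le\caJ_\rmS,\caJ_\rmL,\caJ_\rmR$ become $\caF_\rmS,\caF_\rmL,\caF_\rmR\le\mathbf I$, and the norm and degeneracy assertions become statements about the top eigenvalue of each $\caF_\bullet$. The first step is a completeness identity: with $\Pi:=\sum_jA_j$ the projector onto $\caV$, one has $\caF\,\caL\dket\Pi=\caF\,\caR\dket\Pi=\caF\,\caS\dket\Pi=\dket\Pi$, because $\dbra{A_j}\caL\dket\Pi=\dbra{A_j}\caR\dket\Pi=\dbra{A_j}\caS\dket\Pi=\tr(\rho A_j)$ (using $\Pi A_j=A_j\Pi=A_j$). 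Consequently $\caL^{1/2}\dket\Pi$, $\caR^{1/2}\dket\Pi$, $\caS^{1/2}\dket\Pi$ are eigenvectors of $\caF_\rmL$, $\caF_\rmR$, $\caF_\rmS$ respectively with eigenvalue $1$, and they are nonzero since $\caL^{1/2},\caR^{1/2},\caS^{1/2}$ are invertible; this gives $\|\caF_\bullet\|\ge1$ for free.

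For the reverse bound $\caF_\bullet\le\mathbf I$ I would use Cauchy--Schwarz in the Hilbert--Schmidt inner product. For $\caF_\rmL$ one has $\dbra X\caF_\rmL\dket X=\sum_j|\tr(A_j\rho^{1/2}X)|^2/\tr(\rho A_j)$, and the factorization $\tr(A_j\rho^{1/2}X)=\dinner{\rho^{1/2}A_j^{1/2}}{XA_j^{1/2}}$ followed by Cauchy--Schwarz yields $|\tr(A_j\rho^{1/2}X)|^2\le\tr(\rho A_j)\,\tr(A_jX^\dag X)$; summing over $j$ with $\sum_jA_j=\Pi\le\mathbf1$ gives $\dbra X\caF_\rmL\dket X\le\tr(X^\dag X)$, i.e.\ $\caF\le\caJ_\rmL$. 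The case $\caF_\rmR$ is the mirror image, and $\caF_\rmS$ can be treated by the same device using the symmetric logarithmic derivative $L$ of $X$ (writing $X=\tfrac12(\rho L+L\rho)$ and first reducing to Hermitian $X$), or more economically by invoking $\caF_\rmS\preceq\caF_\rmL\simeq\caF_\rmR$ of \lref{lem:FSLRmajor}, which already forces $\|\caF_\rmS\|=\|\caF_\rmR\|\le\|\caF_\rmL\|\le1$ and hence, with positivity, $\caF_\rmS,\caF_\rmR\le\mathbf I$. Combining with the previous paragraph gives $\|\caF_\bullet\|=1$, which is parts~1 and~2; restricting the resulting bound to the traceless subspace also recovers \eqref{eq:QCR}.

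Part~3 is the heart of the matter. Take $\caF_\rmL\dket Y=\dket Y$; since $\caF_\rmL$ commutes with Hermitian conjugation one may assume $Y$ Hermitian, and then $\dbra Y\caF_\rmL\dket Y=\tr(Y^\dag Y)$ forces equality in each Cauchy--Schwarz step above, that is $YA_j^{1/2}=\lambda_j\,\rho^{1/2}A_j^{1/2}$ for a scalar $\lambda_j$ and every $j$ with $A_j\ne0$. Substituting back into $\caF_\rmL\dket Y=\dket Y$ shows $Y=\rho^{1/2}C$ with $C=\sum_j\lambda_jA_j$ and $CA_j=\lambda_jA_j$ for all $j$. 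Right-multiplying $CA_j=\lambda_j(\sum_lA_l)A_j$ by $A_j$ and taking traces produces $\sum_{l\ne j}(\lambda_l-\lambda_j)\tr(A_lA_j^2)=0$, where the weights $\tr(A_lA_j^2)=\tr\big((A_l^{1/2}A_j)^\dag(A_l^{1/2}A_j)\big)$ are nonnegative and strictly positive exactly when $l$ and $j$ are adjacent in $G(\scrA)$. Hence each $\lambda_j$ equals a weighted average of its neighbours' values (the neighbourhood of $j$ is nonempty since $G(\scrA)$ is connected and not a single vertex, the single-vertex case $\scrA=\{\Pi\}$ being trivial), so the discrete maximum principle on the connected graph forces $\lambda$ constant; thus $C=\lambda\Pi$, $Y=\lambda\,\rho^{1/2}\Pi$, and the eigenvalue $1$ of $\caF_\rmL$ is simple. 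For $\caF_\rmR$ this is immediate from $\caF_\rmR\simeq\caF_\rmL$, and for $\caF_\rmS$ from $\caF_\rmS\preceq\caF_\rmL$: since the two largest eigenvalues of $\caF_\rmS$ sum to at most those of $\caF_\rmL$ and the largest of each equals $1$, the second-largest eigenvalue of $\caF_\rmS$ is $<1$, and with $\|\caF_\rmS\|=1$ the multiplicity is exactly $1$. (One could instead rerun the equality analysis directly for $\caF_\rmS$: it produces $LA_j=\mu_jA_j$ for the SLD $L$ of $Y$, one checks $L=\sum_j\mu_jA_j$, and the same maximum-principle argument applies.)

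The step I expect to be the main obstacle is this equality analysis in part~3: extracting the algebraic relations $CA_j=\lambda_jA_j$ (resp.\ $LA_j=\mu_jA_j$) from the saturated Cauchy--Schwarz while carefully carrying the subspace $\caV$ through, so that having $\sum_jA_j=\Pi$ rather than $\mathbf1$ does not obstruct the telescoping, and then recognizing the resulting linear system as a harmonic condition on $G(\scrA)$ whose connectedness hypothesis is precisely irreducibility. The $\caF_\rmS$ case of part~1, if carried out directly rather than via \lref{lem:FSLRmajor}, additionally requires the minor reduction from general $X$ to Hermitian $X$ together with the introduction of the symmetric logarithmic derivative.
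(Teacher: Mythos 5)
Your argument is correct and, for the norm statements, essentially the paper's own: you exhibit the same eigenvectors $\caL^{1/2}\dket{\Pi}$, $\caR^{1/2}\dket{\Pi}$, $\caS^{1/2}\dket{\Pi}$ with eigenvalue $1$, prove $\|\caF_\rmL\|\leq 1$ by the same Cauchy--Schwarz estimate $|\tr(A_j\rho^{1/2}X)|^2\leq\tr(\rho A_j)\tr(A_jX^\dag X)$ summed against $\sum_jA_j=\Pi\leq 1$, and transfer the bound to $\caF_\rmS,\caF_\rmR$ via \lref{lem:FSLRmajor}, exactly as in the Supplemental Material; likewise your deduction of nondegeneracy for $\caF_\rmS,\caF_\rmR$ from majorization and similarity matches the paper. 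Where you genuinely diverge is the nondegeneracy argument for $\caF_\rmL$: the paper sets $B=\rho^{-1/2}C$, notes $BA_j\propto A_j$ so each $A_j$ sits inside an eigenspace of $B$, and then invokes an auxiliary lemma on (non-orthogonal) direct-sum decompositions to show POVM elements in different eigenspaces are orthogonal, so irreducibility forces $B\propto Q$. You instead substitute the saturation condition back into the eigenvalue equation to get the explicit form $Y=\rho^{1/2}C$ with $C=\sum_j\lambda_jA_j$ and $CA_j=\lambda_jA_j$, derive the harmonic condition $\sum_{l\neq j}(\lambda_l-\lambda_j)\tr(A_lA_j^2)=0$ with weights positive exactly on edges of $G(\scrA)$, and conclude by the discrete maximum principle on the connected graph. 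Both routes use irreducibility in the same essential way; yours avoids the paper's auxiliary \lref{lem:ProjDirectSum} at the cost of the weighted-graph analysis (which is fine: $\tr(A_lA_j^2)>0$ iff $\tr(A_lA_j)>0$, as you implicitly need), and it has the small bonus of producing the eigenvector explicitly.

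One misstatement to fix: $\caF_\rmL$ does \emph{not} commute with Hermitian conjugation; conjugation is antilinear and intertwines $\caF_\rmL$ with $\caF_\rmR$, so you may not assume $Y$ Hermitian on that ground. Fortunately the assumption is never used: the saturation analysis, the identity $Y=\rho^{1/2}\sum_j\lambda_jA_j$, and the maximum principle (applied separately to the real and imaginary parts of the $\lambda_j$, since the weights are real and positive) all go through for arbitrary $Y$, so simply delete the reduction.
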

	\Eref{eq:FLRnorm} implies \eref{eq:QCRF}, which in turn implies the SLD bound in \eref{eq:QCR}.

	Given an operator or superoperator $\Theta$ and  a real number $\lambda$,
	denote by $\mu_\lambda(\Theta)$ the algebraic multiplicity of the eigenvalue $\lambda$ [by convention $\mu_\lambda(\Theta)=0$ if $\lambda$ is not an eigenvalue]; then  $\mu_1(\Theta)=\mu(\Theta)$. The following lemma shows that the Fisher spectrum of $\scrA$ is  determined by the spectrum of $\caF_\rmS$, which is usually easier to analyze. 
	\begin{lem}\label{lem:FisherEigMul}
		Suppose $\rho\in \scrD(\caH)$ and  $\scrA$ is a POVM on $\caH$. Then 
		\begin{align}\label{eq:FisherEigMul}
		\mu_\lambda(\bar{\caF}_\rmS(\rho,\scrA))=\begin{cases}
		\mu_\lambda(\caF_\rmS(\rho,\scrA))+1 &\lambda=0,
		\\\mu_\lambda(\caF_\rmS(\rho,\scrA))-1 &\lambda=1,\\
		\mu_\lambda(\caF_\rmS(\rho,\scrA)) &\lambda\neq 0,1; \\
		\end{cases}
		\end{align}	
		the same conclusions hold if  $\bar{\caF}_\rmS$ is replaced by $\bar{\caF}_\rmL$ or $\bar{\caF}_\rmR$, and
		$\caF_\rmS$ is replaced by $\caF_\rmL$ or $\caF_\rmR$  accordingly. 
	\end{lem}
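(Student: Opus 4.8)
The plan is to reduce the spectral comparison of the two $d^2\times d^2$ superoperators $\caF_\rmS(\rho,\scrA)$ and $\bar{\caF}_\rmS(\rho,\scrA)$ to a comparison of two $m\times m$ Gram matrices ($m$ = number of nonzero POVM elements of $\scrA$) that differ by a single transparent rank-one term. Set $p_j:=\tr(\rho A_j)$ and $\dket{u_j}:=\caJ_\rmS^{-1/2}\dket{A_j}/\sqrt{p_j}=\caS^{1/2}\dket{A_j}/\sqrt{p_j}$; by \eref{eq:FDef} and \eref{eq:FFbarS} this gives $\caF_\rmS=\sum_j\dket{u_j}\dbra{u_j}$, so the nonzero eigenvalues of $\caF_\rmS$, counted with multiplicity, coincide with those of the Gram matrix $M$, $M_{jk}=\dinner{u_j}{u_k}=\dbra{A_j}\caS\dket{A_k}/\sqrt{p_jp_k}$ (a positive semidefinite Hermitian matrix since $\caS$ is a positive definite superoperator). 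For $\bar{\caF}_\rmS$ I first record that $\bar{\caJ}_\rmS^+=\caS-\douter{\rho}{\rho}$ (\lref{lem:JJbar}) is positive semidefinite, annihilates $\dket{1}$, and has range in the traceless subspace; hence $T:=(\bar{\caJ}_\rmS^+)^{1/2}$ commutes with $\bid$ and satisfies $\bid T=T\bid=T$, so $\bar{\caF}_\rmS=T\bar{\caF}T=T\bid\caF\bid T=T\caF T=\sum_j\dket{v_j}\dbra{v_j}$ with $\dket{v_j}:=T\dket{A_j}/\sqrt{p_j}$, and the nonzero eigenvalues of $\bar{\caF}_\rmS$ are those of the Gram matrix $\bar{M}$ with $\bar{M}_{jk}=\dbra{A_j}(\caS-\douter{\rho}{\rho})\dket{A_k}/\sqrt{p_jp_k}=M_{jk}-\sqrt{p_jp_k}$.

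Next I introduce the vector $q:=(\sqrt{p_j})_j$, which is a unit vector because $\sum_jp_j=\tr\rho=1$, so that $\bar{M}=M-qq^\dag$. A one-line computation using $\sum_jA_j=1$ and $\caS\dket{1}=\dket{\rho}$ shows $Mq=q$. Since $M$ is Hermitian positive semidefinite and $q$ is a unit eigenvector with eigenvalue $1$, the subspace $q^\perp$ is $M$-invariant, $\bar{M}$ agrees with $M$ on $q^\perp$, and $\bar{M}q=0$; hence $\mu_1(\bar{M})=\mu_1(M)-1$, $\mu_0(\bar{M})=\mu_0(M)+1$, and $\mu_\lambda(\bar{M})=\mu_\lambda(M)$ for $\lambda\neq0,1$. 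Transferring back through the Gram-matrix correspondence ($\caF_\rmS=UU^\dag$, $M=U^\dag U$ share nonzero spectrum with multiplicities, and likewise for $\bar{\caF}_\rmS,\bar{M}$), for every $\lambda\neq0$ we obtain $\mu_\lambda(\bar{\caF}_\rmS)=\mu_\lambda(\bar{M})$ and $\mu_\lambda(\caF_\rmS)=\mu_\lambda(M)$, hence $\mu_1(\bar{\caF}_\rmS)=\mu_1(\caF_\rmS)-1$ and $\mu_\lambda(\bar{\caF}_\rmS)=\mu_\lambda(\caF_\rmS)$ for $\lambda\neq0,1$. Finally, $\caF_\rmS$ and $\bar{\caF}_\rmS$ are positive semidefinite operators on the same $d^2$-dimensional space, so comparing total multiplicities (equivalently, using \lref{lem:FisherRank}) forces $\mu_0(\bar{\caF}_\rmS)=\mu_0(\caF_\rmS)+1$. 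This is exactly \eref{eq:FisherEigMul}.

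The variants with $\caS$ replaced by $\caL$ or $\caR$ go through verbatim: $\caL$ and $\caR$ are self-adjoint and positive definite as superoperators for $\rho\in\scrD(\caH)$ (because $\rho$ is Hermitian), so $\caJ_\rmL^{-1/2}=\caL^{1/2}$ and $\caJ_\rmR^{-1/2}=\caR^{1/2}$ are legitimate self-adjoint square roots; $\bar{\caJ}_\rmL^+=\caL-\douter{\rho}{\rho}$ and $\bar{\caJ}_\rmR^+=\caR-\douter{\rho}{\rho}$ (\lref{lem:JJbar}) still kill $\dket{1}$ and have traceless range; the relevant Gram matrices are $M^\rmL_{jk}=\dbra{A_j}\caL\dket{A_k}/\sqrt{p_jp_k}$ and $M^\rmR_{jk}=\dbra{A_j}\caR\dket{A_k}/\sqrt{p_jp_k}$, both Hermitian positive semidefinite; and $M^\rmL q=M^\rmR q=q$ since $\caL\dket{1}=\caR\dket{1}=\dket{\rho}$. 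One then reads off the analogous multiplicity relations and recalls that these matrices represent $\bar{\caF}_\rmL,\bar{\caF}_\rmR$ as in \eref{eq:FFbarL}.

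The bulk of this is routine linear algebra; the two points that need genuine care are (i) the bookkeeping for the eigenvalue $0$ when passing between the $m$-dimensional Gram picture and the $d^2$-dimensional superoperator picture — which is why one must invoke that both superoperators are positive semidefinite on a common $d^2$-dimensional space (or cite \lref{lem:FisherRank}) — and (ii) for the $\rmL$ and $\rmR$ versions, checking that $\caL$ and $\caR$ really are self-adjoint superoperators for Hermitian $\rho$, so that $\caL^{1/2}$, $\caR^{1/2}$ make sense and the Gram-matrix reduction still applies. I do not expect any obstacle beyond these; the conceptual content is entirely in the identity $\bar{M}=M-qq^\dag$ together with $Mq=q$.
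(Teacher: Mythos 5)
Your proof is correct, and it takes a genuinely different route from the paper's. The paper proves this lemma in the superoperator picture: it uses the fact that $XY$ and $YX$ share algebraic multiplicities to replace $\caF_\rmS$ and $\bar{\caF}_\rmS$ by $\caF\caS$ and $\caF(\caS-\douter{\rho}{\rho})$, gets $\mu_0$ from the rank relation, and then pins down the remaining multiplicities by a bookkeeping argument that combines an eigenvector computation (any eigenvector of $\caF\caS$ with eigenvalue $\lambda\neq 1$ is orthogonal to $\dket{\rho}$, giving $\mu_\lambda(\caF_\rmS)\leq\mu_\lambda(\bar{\caF}_\rmS)$) with the trace identity of \lref{lem:GMgen} and the norm bound of \lref{lem:FLRnorm} (eigenvalues in $[0,1]$). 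You instead pass to the $m\times m$ Gram matrices — in fact exactly the matrices $G_\rmS$ and $\bar{G}_\rmS$ that the paper introduces only in its proof of \lref{lem:FSLRmajor} — and observe the clean structural identity $\bar{M}=M-qq^\dag$ with $Mq=q$ and $\|q\|=1$, so that $M$ and $\bar{M}$ are simultaneously block-diagonal as $1\oplus M|_{q^\perp}$ and $0\oplus M|_{q^\perp}$; this yields the exact multiplicity shifts at $\lambda=1$ and $\lambda\neq 0,1$ in one stroke, with only the rank relation (\lref{lem:FisherRank}, or dimension counting) needed for $\lambda=0$. What your route buys is self-containedness and directness: it does not invoke the trace identity, the norm bound, or the $[0,1]$ eigenvalue range, and it avoids the inequality-then-trace argument; what the paper's route buys is that it stays entirely in the superoperator picture used in the rest of the proofs and reuses lemmas it needs anyway. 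Your closing checks (self-adjointness and positivity of $\caL,\caR$ for Hermitian $\rho$, and the support property $\bid T=T\bid=T$ of $T=(\bar{\caJ}_\rmS^+)^{1/2}$ via \lref{lem:JJbar}) are exactly the right points to verify, and they hold.
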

	
\end{appendix}

\bibliography{all_references}

\clearpage

\counterwithout{equation}{section}

\setcounter{equation}{0}
\setcounter{figure}{0}
\setcounter{table}{0}
\setcounter{thm}{0}
\setcounter{lem}{0}
\setcounter{remark}{0}
\setcounter{proposition}{0}
\setcounter{corollary}{0}
\setcounter{section}{0}


\renewcommand{\theequation}{S\arabic{equation}}
\renewcommand{\thefigure}{S\arabic{figure}}
\renewcommand{\thetable}{S\arabic{table}}
\renewcommand{\thethm}{S\arabic{theorem}}
\renewcommand{\thelem}{S\arabic{lem}}
\renewcommand{\theremark}{S\arabic{remark}}
\renewcommand{\theproposition}{S\arabic{proposition}}
\renewcommand{\thecorollary}{S\arabic{corollary}}

\renewcommand{\thesection}{S\arabic{section}}
\renewcommand\thesubsection{S\arabic{section}.\arabic{subsection}}


\onecolumngrid
\begin{center}
	\textbf{\large Information Theoretic Significance of Projective Measurements:\\[0.5ex] Supplemental Material}\\
	\vspace{2ex}
	Huangjun Zhu
\end{center}
\bigskip

\twocolumngrid


In this Supplemental Material we prove the key results presented in the main text and appendices, including Propositions~1 and 2, Lemmas 1-8, and Theorems \ref{thm:SharpProjIndex2} and \ref{thm:FisherSharp2}. Incidentally, Theorems \ref{thm:SharpProjIndex} and \ref{thm:FisherSharp} are simple corollaries of Theorems~\ref{thm:SharpProjIndex2} and \ref{thm:FisherSharp2}, respectively.

\section{Proof of \pref{pro:P(A)}}

\begin{proof}[Proof of \pref{pro:P(A)}]
	By definition $\scrP(\scrA)$ is a coarse graining of $\scrA$, so any grouping of $\scrP(\scrA)$ is a coarse graining of $\scrA$. 
	
	To prove the converse, suppose $\scrA=\{A_j\}_{j=1}^m$  contains no zero POVM elements without loss of generality, and $\scrB=\{B_k\}_{k=1}^n$ is a PVM that is a coarse graining of $\scrA$. Then $B_k$ can be expressed as 
	\begin{align}
	B_k=\sum_j \Lambda_{kj} A_j,
	\end{align}
	where $\Lambda$ is a stochastic matrix. Note that $B_k B_{k'}=0$ whenever $k\neq k'$, so all nonzero entries of $\Lambda$ are equal to 1. In other words, $\scrB$ is a grouping of $\scrA$.
	Moreover, $\Lambda_{kj}=1$ iff $\Lambda_{kj'}=1$
	when $A_j, A_{j'}\in \scrA$ belong to a same irreducible component. So $\scrB$ is a grouping of $\scrP(\scrA)$. 
	
	The equality $\gamma(\scrA)=\gamma(\scrP(\scrA))$ follows from the definition of $\gamma(\scrP(\scrA))$. The inequality $\gamma(\scrP(\scrA))\leq d$	follows from the fact that $\gamma(\scrP(\scrA))$
	is equal to
	the number of nonzero POVM elements in the simple PVM $\scrP(\scrA)$. If $\scrA$ is equivalent to a rank-1 PVM, then $\scrP(\scrA)$ is a rank-1 PVM, so the inequality $\gamma(\scrP(\scrA))\leq d$ is saturated. Conversely, if the inequality is saturated, then $\scrP(\scrA)$ is a rank-1 PVM, so $\scrA$ is equivalent to a rank-1 PVM.
\end{proof}

\section{Proofs of \lsref{lem:FisherRank}-\ref{lem:FFbar} and \pref{pro:ProjEqui}}

\begin{proof}[Proof of \lref{lem:FisherRank}]
	The first equality in \eref{eq:FisherRank} follows from the definition in \eref{eq:FIMma}. 		
	The second equality in \eref{eq:FisherRank} follows from the fact that $I(\theta,\scrA)$ is the matrix representation of $\bar{\caF}(\rho,\scrA)$ in the operator basis  $\{\rho_{,a}\}_{a=1}^g$ as shown in \eref{eq:IFbar}. The  third and fourth equalities in \eref{eq:FisherRank} follow from the equation below,
	\begin{align}
	\rk \bar{\caF}(\rho,\scrA)&=\dim \spa (\bar{\scrA})=\dim \spa (\scrA)-1, \nonumber\\
	&=\rk \caF(\rho,\scrA)-1,
	\end{align}
	where $\bar{\scrA}=\{\bar{A}|A\in \scrA\}$. 
\end{proof}

\begin{proof}[Proof of \lref{lem:FIequi}]The equivalence of the
	second and third equalities in \eref{eq:FIequi} follow from \eref{eq:IFbar}. The equivalence of the third and fourth equalities follow from the definition in \eref{eq:FIMma}.

	In addition, the second equality in \eref{eq:FIequi} follows from the first equality by the definition in \eref{eq:FbarDef}. To complete the proof of \lref{lem:FIequi}, it remains to establish the opposite implication.

	Suppose the second equality in \eref{eq:FIequi} holds, that is,	$\bar{\caF}(\rho,\scrA)= \bar{\caF}(\rho,\scrB)$; then $\bar{\caF}(\rho,\scrA)$ and $\bar{\caF}(\rho,\scrB)$ have the same support, which means $\spa(\bar{\scrA})=\spa(\bar{\scrB})$, 
	where
	\begin{align}
	\bar{\scrA}=\{\bar{A}|A\in \scrA\},\quad \bar{\scrB}=\{\bar{B}|B\in \scrB\}. 
	\end{align}
	So  $\spa(\scrA)=\spa(\scrB)$ and $\rho(\scrA)=\rho(\scrB)$. By virtue of \lref{lem:FFbar} in \aref{app:FisherAux} we can now deduce that
	\begin{align}
	&\caF(\rho,\scrA)=[\bar{\caF}^+(\rho,\scrA) + \douter{\rho(\scrA)}{\rho(\scrA)}\lsp]^+\nonumber\\
	&=[\bar{\caF}^+(\rho,\scrB) + \douter{\rho(\scrB)}{\rho(\scrB)}\lsp]^+=\caF(\rho,\scrB), 
	\end{align}
	which confirms the first equality in \eref{eq:FIequi} and completes the proof of \lref{lem:FIequi}. 
\end{proof}

\begin{proof}[Proof of \lref{lem:FFbar}]
	Define the traceless reconstruction subspace of $\scrA$ as the subspace spanned by 
	\begin{align}
	\bar{\scrA}:=\{\bar{A}|A\in \scrA\}, 
	\end{align}
	which coincides with the support of $\bar{\caF}(\rho,\scrA)$. 
	Let $\bar{\Pi}(\scrA)$ be the projector onto this subspace; then 
	\begin{align}
	\bar{\Pi}(\scrA)=\bid\lsp \Pi(\scrA) \bid=\bid\lsp \Pi(\scrA)=\Pi(\scrA) \bid.
	\end{align}

	By definition we have
	\begin{gather}
	\caF^+(\rho,\scrA)\caF(\rho,\scrA)=\caF(\rho,\scrA)\caF^+(\rho,\scrA)=\Pi(\scrA),\\
	\caF(\rho,\scrA)\dket{\rho}=\dket{1}, \quad 
	\caF^+(\rho,\scrA)\dket{1}=\dket{\rho(\scrA)},
	\end{gather}	
	which implies that	
	\begin{align}
	&[\caF^+(\rho,\scrA)-\douter{\rho(\scrA)}{\rho(\scrA)}\lsp]\dket{1}\nonumber\\
	&=\dket{\rho(\scrA)}-\dket{\rho(\scrA)}=0
	\end{align}
	since 	$\tr[\rho(\scrA)]=\tr \rho =1$. So $\caF^+(\rho,\scrA)-\douter{\rho(\scrA)}{\rho(\scrA)}$ is supported in the traceless reconstruction subspace. Consequently, 
	\begin{align}
	&[\caF^+(\rho,\scrA)-\douter{\rho(\scrA)}{\rho(\scrA)}\lsp]\bar{\caF}(\rho,\scrA)
	\nonumber\\
	&=[\caF^+(\rho,\scrA)-\douter{\rho(\scrA)}{\rho(\scrA)}\lsp]\bid\caF(\rho,\scrA) \bid\nonumber\\
	&=[\caF^+(\rho,\scrA)-\douter{\rho(\scrA)}{\rho(\scrA)}\lsp]\caF(\rho,\scrA) \bid\nonumber\\
	&=[\lsp \Pi(\scrA)-\douter{\rho(\scrA)}{1}\lsp] \bid=
	\bar{\Pi}(\scrA),
	\end{align}
	which implies \eref{eq:FFbar} and confirms  \lref{lem:FFbar}. 
\end{proof}

\begin{proof}[Proof of \pref{pro:ProjEqui}]
	The two implications $1\imply 2$ and $1\imply 3$ are obvious; the equivalence of statement 2 and statement  3 follows from \lref{lem:FIequi}. To complete the proof, it remains to prove the implication $3\imply 1$.

	By assumption  $\scrA, \scrB$ are PVMs that contain no zero operators, so each POVM element  $A\in \scrA$ is an eigenvector of $\caF(\rho, \scrA)$, and each $B\in \scrB$ is an eigenvector of $\caF(\rho, \scrB)$ according to the definition in \eref{eq:FDef}. If $\caF(\rho, \scrA)=\caF(\rho, \scrB)$, then each $A\in \scrA$ is an eigenvector of $\caF(\rho, \scrB)$. In conjunction with the following equation
	\begin{align}
	\caF(\rho, \scrB)\dket{A}=\sum_{B\in \scrB} \frac{\tr(A B)\dket{B}}{\tr(\rho B)}
	\end{align}
	we can deduce that each $B\in \scrB$ is either orthogonal to $A$ or supported in the support of $A$. By the same token, each $A\in\scrA$ is either orthogonal to $B$ or supported in the support of $B$. Since 
	$\scrA,\scrB$ are PVMs that contain no zero operators, it follows that $|\scrA|=|\scrB|$, and the  operators in $\scrB$ are identical to the operators in $\scrA$ up to a permutation.	
\end{proof}

\section{Proofs of \lsref{lem:JJbar} and \ref{lem:GMgen}}

\begin{proof}[Proof of \lref{lem:JJbar}]
	Note that $\bar{\caJ}_\rmS(\rho)$ and $\bid$ have the same support: the space of  traceless operators on $\caH$. In addition, the definition in
	\eref{eq:LRS} implies that
	\begin{align}
	[\caS(\rho)-\douter{\rho}{\rho}\lsp]\dket{1}=\dket{\rho}-\dket{\rho}=0,
	\end{align}
	so $\caS(\rho)-\douter{\rho}{\rho}$ is also supported in the space of  traceless operators. Consequently, 
	\begin{align}
	&[\caS(\rho)-\douter{\rho}{\rho}\lsp]\bar{\caJ}_\rmS(\rho)=[\caS(\rho)-\douter{\rho}{\rho}\lsp]\bar{\mathbf{I}}\caJ_\rmS(\rho) \bar{\mathbf{I}}\nonumber\\
	&=[\caS(\rho)-\douter{\rho}{\rho}\lsp]\caJ_\rmS(\rho) \bar{\mathbf{I}}= [\mathbf{I}-\douter{\rho}{1}\lsp] \bar{\mathbf{I}}=
	\bar{\mathbf{I}},
	\end{align}
	which implies the first equality in \eref{eq:JJbar}. The second and third equalities in \eref{eq:JJbar}  can be proved in a similar way. 
\end{proof}

\begin{proof}[Proof of \lref{lem:GMgen}]
	Suppose $\scrA=\{A_j\}_{j=1}^m$ and assume that $A_j\neq 0$ for each $j$ 	without loss of generality. The first equality  in \eref{eq:GMJF} follows from
	the fact that 	$I_\rmS(\theta,\scrA)$ is a matrix representation of $\bar{\caF}_\rmS(\rho,\scrA)$ and has the same nonzero eigenvalues, including multiplicities. The second equality  in \eref{eq:GMJF} follows from the definition in \eref{eq:FFbarS} and the facts that $\bar{\caJ}_\rmS=\bid \caJ_\rmS\bid$ 
	and $\bar{\caF}(\scrA)=\bid \caF(\scrA)\bid$. 
	The third equality  in \eref{eq:GMJF} follows from \essref{eq:JLRS}{eq:FFbarS}{eq:JJbar}  given  that
	\begin{align}
	\dbra{\rho}\caF(\scrA)\dket{\rho}=\sum_j \tr(\rho A_j)=\tr(\rho)=1. 
	\end{align} 
	
	The fourth  equality  in \eref{eq:GMJF} can be proved as follows,	
	\begin{align}
	&\tr \caF_\rmS=\tr\bigl(\caJ_\rmS^{-1/2}\caF\caJ_\rmS^{-1/2}\bigr)=\tr(\caJ_\rmS^{-1}\caF)=\tr(\caS\caF)\nonumber\\
	&
	=\tr\Biggl[\sum_{j=1}^m \frac{\caS\dket{A_j}\dbra{A_j}  }{\tr(\rho A_j)}\Biggr]=\frac{1}{2}\tr\Biggl[\sum_{j=1}^m \frac{\dket{\rho A_j+A_j\rho}\dbra{A_j}  }{\tr(\rho A_j)}\Biggr]\nonumber\\
	&=\sum_{j=1}^m  \frac{\tr(\rho A_j^2) }{\tr(\rho A_j)}.
	\end{align}	
	
	The inequality in \eref{eq:GMJF} can be proved as follows,
	\begin{align}
	&\sum_{j=1}^m  \frac{\tr(\rho A_j^2) }{\tr(\rho A_j)}\leq \sum_{j=1}^m  \frac{\tr(\rho A_j) }{\tr(\rho A_j)}=m,  \label{eq:GMproof1}\\
	&\sum_{j=1}^m \frac{\tr(\rho A_j^2 )}{\tr(\rho A_j)}\leq\sum_{j=1}^m \|A_j\|\leq \sum_{j=1}^m \tr(A_j)=d. \label{eq:GMproof2}
	\end{align}
	Here the inequality in \eref{eq:GMproof1}	
	follows from the fact that $A_j^2\leq A_j$ and is saturated iff each $A_j$ is a projector. The first inequality in \eref{eq:GMproof2} follows from the fact that	$A_j^2\leq \|A_j\| A_j$, and the second inequality is trivial; the two inequalities are saturated simultaneously iff  each POVM element $A_j$ is rank 1.  Therefore, when $m\leq d$, the inequality in \eref{eq:GMJF} is saturated iff $\scrA$ is a PVM with $m$ nonzero POVM elements; when $d\leq m$, the inequality is saturated iff $\scrA$ is rank 1.

	If $I_\rmS$ is replaced by $I_\rmL$ or $I_\rmR$, and $\bar{\caF}_\rmS$, $\caF_\rmS$, $\bar{\caJ}_\rmS$ are replaced accordingly,	
	the same conclusions follow from a similar reasoning. 
\end{proof}

\section{Proof of \lref{lem:FSLRmajor}}

\begin{proof}[Proof of \lref{lem:FSLRmajor}]
	Let $\scrA=\{A_{j}\}_{j=1}^m$ and
	define $m\times m$ Gram matrices  $G_\rmS$,  $G_{\rmL}$, $G_{\rmR}$ with entries
	\begin{align}
	G_{\rmS,jk}&=\frac{\dbra{A_j}\caS(\rho)\dket{A_k}}{\sqrt{\tr(\rho A_j)\tr(\rho A_k)}},\nonumber \\
	&=\frac{\tr(\rho A_kA_j)+\tr(\rho A_jA_k)}{2\sqrt{\tr(\rho A_j)\tr(\rho A_k)}},\\	
	\!\!	G_{\rmL,jk}&=\frac{\dbra{A_j}\caL(\rho)\dket{A_k}}
	{\sqrt{\tr(\rho A_j)\tr(\rho A_k)}}=\frac{\tr(\rho A_kA_j)}{\sqrt{\tr(\rho A_j)\tr(\rho A_k)}},\\	
	\!\!	G_{\rmR,jk}&=\frac{\dbra{A_j}\caR(\rho)\dket{A_k} }{\sqrt{\tr(\rho A_j)\tr(\rho A_k)}}=\frac{\tr(\rho A_jA_k)}{\sqrt{\tr(\rho A_j)\tr(\rho A_k)}}.
	\end{align}
	Then $\caF_\rmS(\rho,\scrA)$, $\caF_\rmL(\rho,\scrA)$, and $\caF_\rmR(\rho,\scrA)$ have
	the same nonzero eigenvalues (including multiplicities) as $G_\rmS$, $G_\rmL$, and $G_\rmR$, respectively. In addition,
	\begin{align}
	G_\rmR=G_\rmL^\rmT=G_\rmL^*, \quad G_\rmS=\frac{1}{2}(G_\rmL+G_\rmR), 
	\end{align}
	which implies that
	\begin{align}
	G_\rmS\preceq G_\rmL\simeq G_\rmR, 
	\end{align} 
which means  $G_\rmS$ is majorized by $G_\rmL$, and $G_\rmL$ is similar to $G_\rmR$. This equation in turn implies \eref{eq:FSLRmajor}.

	Define $m\times m$ Gram matrices  $\bar{G}_\rmS$,  $\bar{G}_{\rmL}$, $\bar{G}_{\rmR}$ with entries
	\begin{align}
	\bar{G}_{\rmS,jk}&=\frac{\dbra{A_j}(\caS(\rho)-\douter{\rho}{\rho})\dket{A_k}}{\sqrt{\tr(\rho A_j)\tr(\rho A_k)}},\\
	\bar{G}_{\rmL,jk}&=\frac{\dbra{A_j}(\caL(\rho)-\douter{\rho}{\rho})\dket{A_k}}
	{\sqrt{\tr(\rho A_j)\tr(\rho A_k)}},\\	
	\bar{G}_{\rmR,jk}&=\frac{\dbra{A_j}(\caR(\rho)-\douter{\rho}{\rho})\dket{A_k} }{\sqrt{\tr(\rho A_j)\tr(\rho A_k)}}.
	\end{align}
	Then $\bar{\caF}_\rmS(\rho,\scrA)$, $\bar{\caF}_\rmL(\rho,\scrA)$, and $\bar{\caF}_\rmR(\rho,\scrA)$ have
	the same nonzero eigenvalues (including multiplicities) as $\bar{G}_\rmS$, $\bar{G}_\rmL$, and $\bar{G}_\rmR$, respectively. In addition,
	\begin{gather}
	\bar{G}_\rmR=\bar{G}_\rmL^\rmT=\bar{G}_\rmL^*, \quad \bar{G}_\rmS=\frac{1}{2}(\bar{G}_\rmL+\bar{G}_\rmR), \\
	\bar{G}_\rmS\preceq \bar{G}_\rmL\simeq \bar{G}_\rmR,
	\end{gather} 
	which imply \eref{eq:FSLRbarMajor} and completes the proof of \lref{lem:FSLRmajor}. 	
\end{proof}

\section{Proof of \lref{lem:FLRnorm}}

\begin{proof}[Proof of \lref{lem:FLRnorm}]
	Since \eref{eq:QCRF} is a simple corollary of \eref{eq:FLRnorm}, we can focus on \eref{eq:FLRnorm}.

	Suppose $\scrA=\{A_j\}_j$ and assume that $A_j\neq 0$ for each $j$ 	without loss of generality; then 
	\esref{eq:JLRS}{eq:FFbarS} imply that 
	\begin{align}
	\caF_\rmS(\rho,\scrA)&=\sum_j \caS(\rho)^{1/2} \frac{\douter{A_j}{A_j}}{\tr(\rho A_j)}\caS(\rho)^{1/2}, \\
	\caF_\rmL(\rho,\scrA)&=\sum_j \frac{\douter{\rho^{1/2}A_j}{\rho^{1/2}A_j}}{\tr(\rho A_j)},\\
	\caF_\rmR(\rho,\scrA)&=\sum_j \frac{\douter{A_j \rho^{1/2}}{A_j\rho^{1/2}}}{\tr(\rho A_j)}.
	\end{align}
	Let $Q$ be the orthogonal projector onto $\caV$; then	
	\begin{align}
	Q=\sum_j A_j. 
	\end{align}	
	In addition, 
	$\caS(\rho)^{1/2}\dket{Q}$, $\dket{\rho^{1/2}Q}$, and $\dket{Q\rho^{1/2}}$ are eigenvectors of $\caF_\rmS(\rho,\scrA)$, $\caF_\rmL(\rho,\scrA)$, and $\caF_\rmR(\rho,\scrA)$, respectively,  with eigenvalue 1, which implies that
	\begin{align}\label{eq:FLRnormProof}
	\|\caF_\rmS(\rho,\scrA)\|\geq 1,\quad  \|\caF_\rmL(\rho,\scrA)\|\geq 1,\quad \|\caF_\rmR(\rho,\scrA)\|\geq 1.
	\end{align}
	
	Let $C$ be an arbitrary operator on $\caH$. Then
	\begin{align}
	&\dbra{C} \caF_\rmL(\rho,\scrA)\dket{C}=\sum_j \frac{|\dinner{C}{\rho^{1/2}A_j}|^2}{\tr(\rho A_j)}
	\nonumber\\
	&=\sum_j \frac{\Bigl|\tr\Bigl[\bigl(CA_j^{1/2}\bigl)^\dag\rho^{1/2}A_j^{1/2}\Bigr]\Bigr|^2}{\tr(\rho A_j)}
	\nonumber\\
	&\leq \sum_j \frac{\tr(CA_jC^\dag)\tr(\rho A_j)}{\tr(\rho A_j)}\nonumber\\
	&=\tr(CQ C^\dag)\leq \tr(CC^\dag), \label{eq:FLRnormProof2}
	\end{align}	
	which implies that $\|\caF_\rmL(\rho,\scrA)\|\leq 1$. In conjunction with \lref{lem:FSLRmajor}, we can deduce  that 
	\begin{align}
	\|\caF_\rmS(\rho,\scrA)\|\leq \|\caF_\rmR(\rho,\scrA)\|=\|\caF_\rmL(\rho,\scrA)\|\leq 1,
	\end{align}
	which confirms \eref{eq:FLRnorm} given \eref{eq:FLRnormProof}. 
	
	Next, suppose $\scrA$ is irreducible, and $\dket{C}$ is an eigenvector of $\caF_\rmL(\rho,\scrA)$ with eigenvalue 1. Then the two inequalities in \eref{eq:FLRnormProof2} are saturated, which means
	\begin{gather}
	CA_j^{1/2}\propto \rho^{1/2}A_j^{1/2}\quad \forall j=1,2,\ldots m,\\
	\supp(C)\leq \caV. 
	\end{gather}
	Let $B=\rho^{-1/2}C$; then
	\begin{gather}
	BA_j\propto A_j\quad \forall j=1,2,\ldots m,\label{eq:BAj}\\
	\supp(B)\leq \caV,\\
	\range(B)\leq \caV,
	\end{gather}
	given that 	 $\scrA=\{A_j\}_j$ is a POVM on $\caV$. So  
	each $A_j$ is supported in an eigenspace of $B$, which means $B$ is diagonalizable, and the direct sum of its eigenspaces within $\caV$ coincides with $\caV$.  According to \lref{lem:ProjDirectSum} below, POVM elements of $\scrA$ supported in different eigenspaces of $B$ are necessarily orthogonal. Since $\scrA$ is irreducible by assumption, it follows that $B$ has only one eigenspace within $\caV$. In other words, $B$ is proportional to the projector $Q$ onto $\caV$, which means  
	\begin{align}
	C=\rho^{1/2} B\propto \rho^{1/2}Q. 
	\end{align}
	So the eigenspace of $\caF_\rmL(\rho,\scrA)$ associated with the eigenvalue 1 is nondegenerate, and so is the eigenvalue~1 itself, given that
	$\caF_\rmL(\rho,\scrA)$ is Hermitian. 
	
	In conjunction with \lref{lem:FSLRmajor} and \eref{eq:FLRnorm}, we can deduce  that the eigenvalue 1 is nondegenerate for   $\caF_\rmS(\rho,\scrA)$  and $\caF_\rmR(\rho,\scrA)$ as well. 
\end{proof}

Next, we prove an auxiliary lemma employed in the proof of \lref{lem:FLRnorm}.  
\begin{lem}\label{lem:ProjDirectSum}
	Suppose  $\caH$ has direct sum decomposition $\caH=\caH_1\oplus \caH_2\oplus \cdots\oplus\caH_m$ (without assuming orthogonal direct sum) and 
	$\scrA=\{A_{j}\}_{j=1}^m$ is a POVM on $\caH$ such that $A_j$ is supported in $\caH_j$ for $j=1,2,\ldots,m$.  Then
	the direct sum is actually an orthogonal direct sum,	 $\scrA$ is a PVM, and $A_j$ is the orthogonal projector onto $\caH_j$. 
\end{lem}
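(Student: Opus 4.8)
The plan is to show directly that each $A_j$ must coincide with the (a priori oblique) projection attached to the given direct-sum decomposition, and only afterwards use Hermiticity to upgrade it to an orthogonal projection.

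First I would exploit the resolution of identity $\sum_{j=1}^m A_j = \id$: for an arbitrary $v\in\caH$ it gives $v=\sum_{j=1}^m A_j v$, and by hypothesis $A_j v\in\caH_j$. Since $\caH=\caH_1\oplus\cdots\oplus\caH_m$ is a direct sum, every vector has a \emph{unique} expansion with $j$th component in $\caH_j$, so $A_j v$ is exactly that component. Hence $A_j=\pi_j$, where $\pi_j$ is the linear projection of $\caH$ onto $\caH_j$ along $\bigoplus_{k\ne j}\caH_k$. In particular $\caH_j=\range A_j$ and $A_j^2=\pi_j^2=\pi_j=A_j$, so $A_j$ is idempotent; being also Hermitian (it is a POVM element), it is an orthogonal projector. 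Thus each $A_j$ is the orthogonal projector onto $\caH_j$ and $\scrA$ is a PVM.

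It then remains to see that the decomposition is an orthogonal one. Fix $k$ and $v\in\caH_k$; then $A_k v=v$, so $0=v-A_k v=\sum_{j\ne k}A_j v$. Pairing with $v$ and using positivity together with $\langle v,A_j v\rangle=\|A_j v\|^2$ (valid because $A_j$ is a projector) yields $\sum_{j\ne k}\|A_j v\|^2=0$, hence $A_j v=0$ for every $j\ne k$. Therefore $\caH_k$ lies in the kernel of $A_j$, which is $\caH_j^\perp$, i.e.\ $\caH_k\perp\caH_j$ for $k\ne j$, so the direct sum is orthogonal. This closes the argument.

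The reasoning is short and I do not anticipate a genuine obstacle. The only point that requires care is not to presume orthogonality of the decomposition before it has been derived: one must first identify $A_j$ with the possibly oblique projection $\pi_j$ using only the vector-space direct-sum property and the support condition, and invoke Hermiticity (and positivity) only afterwards. A secondary check worth keeping in mind is that "supported in $\caH_j$" is used here in the sense $\range A_j\subseteq\caH_j$, which is all the computation $v=\sum_j A_j v$ needs.
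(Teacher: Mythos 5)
Your proof is correct, and it takes a genuinely different route from the paper's. You identify each $A_j$ directly with the oblique projection $\pi_j$ onto $\caH_j$ along $\bigoplus_{k\neq j}\caH_k$, using only the resolution of identity $\sum_j A_j=\id$, the fact that $\range A_j\subseteq\caH_j$ (which is indeed what ``supported in $\caH_j$'' means for a positive semidefinite operator), and the uniqueness of components in a direct-sum decomposition; Hermiticity then upgrades the idempotent $A_j=\pi_j$ to an orthogonal projector, and positivity together with $\langle v,A_jv\rangle=\|A_jv\|^2$ yields the mutual orthogonality of the subspaces. (In fact, once $A_j=\pi_j$ is an orthogonal projector you already get $\bigoplus_{k\neq j}\caH_k=\ker\pi_j=\caH_j^{\perp}$, so your second paragraph could be shortened, but it is correct as written.) The paper argues instead by a trace/dimension count: it introduces the auxiliary orthogonal projectors $P_j$ onto $\bigl(\bigoplus_{k\neq j}\caH_k\bigr)^{\perp}$, uses $\dim\caH_j'=\dim\caH_j$ to get the chain $d=\sum_j\tr(P_j)=\sum_j\tr(P_jA_j)\leq\sum_j\tr(A_j)=d$, and extracts $A_j=P_j$ and orthogonality from the saturation of these inequalities. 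Your argument is more elementary and pointwise (no counting, no auxiliary projectors), while the paper's saturation technique is of a piece with the trace-inequality arguments used elsewhere in its appendices; both establish exactly the same conclusion.
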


\begin{proof}[Proof of \lref{lem:ProjDirectSum}]
	Let $\caH_j'$ be the orthogonal complement of the direct sum of $\caH_1,\ldots, \caH_{j-1}, \caH_{j+1},\ldots, \caH_m$ and $P_j$ be the orthogonal projector onto $\caH'_j$. Then we have $\dim\caH_j'=\dim\caH_j$ for each $j$ and 
	\begin{align}
	d=\sum_{j=1}^m\tr(P_j)=\sum_{j=1}^m\tr(P_j A_j)\leq \sum_{j=1}^m \tr(A_j)=d,
	\end{align}
	which implies that $\tr(P_jA_j)=\tr(A_j)$ for each $j$. So $A_j$ is supported in the support of $P_j$, $\tr(A_j)\leq \tr(P_j)$, and $\tr(A_j A_k)=0$ for $j\neq k$. Consequently,
	\begin{align}
	d=\sum_{j=1}^m \tr(A_j)\leq \sum_{j=1}^m \tr(P_j)=d,
	\end{align}
	which implies that $A_j=P_j$ and $\tr(P_j P_k)=0$ for $j\neq k$, so $\scrA$ is a PVM. In addition, 
	\begin{align}
	\caH'_j=\supp(P_j)=\supp(A_j)\leq \caH_j,
	\end{align}	
	which implies that $\caH'_j=\caH_j$ and $\supp(A_j)= \caH_j$ since $\caH'_j$ and $\caH_j$ have the same dimension. So  the direct sum in \lref{lem:ProjDirectSum} is actually an orthogonal direct sum,  and $A_j$ is the orthogonal projector onto $\caH_j$ for $j=1, 2,\ldots, m$. 	
\end{proof}

\section{Proof of \lref{lem:FisherEigMul}}

\begin{proof}[Proof of \lref{lem:FisherEigMul}]

	Suppose $A$ and $B$ are two operators on a finite-dimensional Hilbert space; then $AB$  and $BA$ have the same eigenvalues, including the algebraic  multiplicities, according to Corollary 4.4.11 in \rcite{Bern09book}. A similar conclusion also holds for superoperators. Based on this observation and \eref{eq:FFbarS} we can deduce that
	\begin{align}
	\mu_\lambda(\caF_\rmS)&= \mu_\lambda(\caS^{1/2}\caF\caS^{1/2})=\mu_\lambda(\caF\caS),  \label{eq:FisherEigMulProof1}\\
	\mu_\lambda(\bar{\caF}_\rmS)&=\mu_\lambda\bigl((\bar{\caJ}_\rmS^+)^{1/2}\caF(\bar{\caJ}_\rmS^+)^{1/2}\bigr)=\mu_\lambda(\caF\bar{\caJ}_\rmS^+)\nonumber\\
	&=\mu_\lambda(\caF(\caS-\douter{\rho}{\rho})), \label{eq:FisherEigMulProof2}
	\end{align}
	where the last equality follows from \lref{lem:JJbar}.
	Incidentally, the algebraic multiplicity of each eigenvalue is larger than or equal to the geometric multiplicity. The three superoperators $\caF_\rmS$, $\caF\caS$, and $\bar{\caF}_\rmS$ are  diagonalizable, so  the algebraic multiplicity of each eigenvalue coincides with the geometric multiplicity.

	On the other hand, 
	\begin{align}
	\rk(\bar{\caF}_\rmS)=\rk(\bar{\caF})=\rk(\caF)-1=\rk(\caF_\rmS)-1, 
	\end{align}
	which means
	\begin{align}
	\mu_0(\bar{\caF}_\rmS)=\mu_0(\caF_\rmS)+1. \label{eq:FisherEigMulProof3}
	\end{align}
	Furthermore, \Lsref{lem:GMgen} and \ref{lem:FLRnorm}
	imply that
	\begin{gather}
	\tr(\bar{\caF}_\rmS)=\tr(\caF_\rmS)-1,\quad 
	\|\bar{\caF}_\rmS\|\leq \|\caF_\rmS\|= 1, \label{eq:FisherEigMulProof4}
	\end{gather}
	so all eigenvalues of $\bar{\caF}_\rmS$ and $\caF_\rmS$ lie in the closed interval $[0,1]$ given that $\bar{\caF}_\rmS$ and $\caF_\rmS$ are positive semidefinite. 
	
	Note that $\rho$ is an eigenvector of $\caS\caF=(\caF\caS)^\dag$ with eigenvalue 1. 
	Suppose $\lambda\neq 1$ and $\dket{C}$ is an eigenvector of $\caF\caS$ with eigenvalue $\lambda$. Then 
	\begin{align}
	\lambda\dinner{\rho}{C}=\dbra{\rho}\caF\caS\dket{C}=\dinner{\rho}{C},
	\end{align}
	which implies that $\dinner{\rho}{C}=0$ and 
	\begin{align}
	\caF(\caS-\douter{\rho}{\rho})\dket{C}=\caF\caS \dket{C}=\lambda \dket{C}. 
	\end{align}
	So any eigenvector of  $\caF\caS$ with eigenvalue $\lambda\neq 1$ is also an eigenvector of  $\caF(\caS-\douter{\rho}{\rho})$ with the same eigenvalue, which means 
	\begin{align}
	\mu_\lambda(\caF_\rmS)=	\mu_\lambda(\caF\caS)\leq \mu_\lambda (\caF(\caS-\douter{\rho}{\rho}))
	=
	\mu_\lambda(\bar{\caF}_\rmS)
	\end{align}
	for $\lambda \neq 1$	in view of \esref{eq:FisherEigMulProof1}{eq:FisherEigMulProof2}. Together  with \esref{eq:FisherEigMulProof3}{eq:FisherEigMulProof4}, this equation implies \eref{eq:FisherEigMul}. 
	
	When $\bar{\caF}_\rmS$ is replaced by $\bar{\caF}_\rmL$ or $\bar{\caF}_\rmR$, and
	$\caF_\rmS$ is replaced by $\caF_\rmL$ or $\caF_\rmR$  accordingly, the same conclusions follow from a similar reasoning. 
\end{proof}

\section{Proofs of \thsref{thm:SharpProjIndex2} and \ref{thm:FisherSharp2}}

\begin{proof}[Proof of \thref{thm:SharpProjIndex2}]
	Let  $\scrA_\alpha$ for $\alpha=1,2,\ldots, \gamma(\scrA)$ be the irreducible components in $\scrA$. Then $\caF_\rmS(\rho,\scrA_\alpha)$ are mutually orthogonal and
	\begin{align}
	\caF_\rmS(\rho,\scrA)=\sum_{\alpha=1}^{\gamma(\scrA)} \caF_\rmS(\rho,\scrA_\alpha).
	\end{align}
	Therefore,
	\begin{align}
	\mu(\caF_\rmS(\rho,\scrA))=\sum_{\alpha=1}^{\gamma(\scrA)} \mu(\caF_\rmS(\rho,\scrA_\alpha))=\gamma(\scrA),
	\end{align}	
	which confirms \eref{eq:SharpProjIndex21}.	
	Here the second equality follows from the fact that $\mu(\caF_\rmS(\rho,\scrA_\alpha))=1$ by  \lref{lem:FLRnorm}. 
	
	The first equality in \eref{eq:SharpProjIndex22} holds by definition. The second equality in \eref{eq:SharpProjIndex22} follows from the fact that  $I_\rmS(\theta,\scrA)$ and $\bar{\caF}_\rmS(\rho,\scrA)$ have the same nonzero spectrum (including multiplicities). The third equality in \eref{eq:SharpProjIndex22} follows from \esref{eq:SharpProjIndex21}{eq:FisherEigMul}, given that $\mu(\cdot)$ is a shorthand for  $\mu_1(\cdot)$.

	By virtue of \lref{lem:FLRnorm}  we can also deduce that
	\begin{align}
	\mu(\caF_\rmL(\rho,\scrA))&=\sum_{\alpha=1}^{\gamma(\scrA)} \mu(\caF_\rmL(\rho,\scrA_\alpha))=\gamma(\scrA),\\
	\mu(\caF_\rmR(\rho,\scrA))&=\sum_{\alpha=1}^{\gamma(\scrA)} \mu(\caF_\rmR(\rho,\scrA_\alpha))=\gamma(\scrA). 
	\end{align}	
	Together with \lref{lem:FisherEigMul} and the discussion in \aref{app:RLD}, the two equations imply that	
	\begin{align}
	\mu(I_\rmL(\theta,\scrA))&=\mu(I_\rmR(\theta,\scrA))=	\mu(\bar{\caF}_\rmL(\rho,\scrA))\nonumber\\
	&=	\mu(\bar{\caF}_\rmR(\rho,\scrA))=\gamma(\scrA)-1.
	\end{align}	
	So \esref{eq:SharpProjIndex21}{eq:SharpProjIndex22}	 
	still hold if $\caF_\rmS$ is replaced by $\caF_\rmL$ or $\caF_\rmR$, and $\bar{\caF}_\rmS$ ($I_\rmS$) is replaced by $\bar{\caF}_\rmL$ ($I_\rmR$) or $\bar{\caF}_\rmR$ ($I_\rmL$) accordingly.

	Let $Q_\alpha$ be the orthogonal projectors associated with the irreducible components $\scrA_\alpha$. 
	Then it is easy to verify that 
	$\caS(\rho)^{1/2}\dket{Q_\alpha}$ for $\alpha=1,2,\ldots, \gamma(\scrA)$ are mutually orthogonal eigenvectors of $\caF_\rmS(\rho,\scrA)$ with eigenvalue~1. Similarly,  $\dket{\rho^{1/2}Q_\alpha}$ are mutually orthogonal eigenvectors of $\caF_\rmL(\rho,\scrA)$ with eigenvalue 1, and $\dket{Q_\alpha \rho^{1/2}}$ are mutually orthogonal eigenvectors of $\caF_\rmR(\rho,\scrA)$ with eigenvalue~1. All these observations are consistent with \eref{eq:SharpProjIndex21}. 
\end{proof}

\begin{proof}[Proof of \thref{thm:FisherSharp2}]
	To achieve our goal, we shall prove
	the implications $1\imply 2$, $2\imply 4$, $4\imply 5$, $5\imply 3$, and $3\imply 1$ successively.

	By assumption $\scrA$ contains no zero POVM elements. If $\scrA$ is a PVM, then 
	\begin{align}
	\mu(I_\rmS(\theta,\scrA))=\tr I_\rmS(\theta,\scrA)=|\scrA|-1\quad \forall \rho(\theta)\in \scrD(\caH)
	\end{align}
	according to \lref{lem:GMgen} and \thref{thm:SharpProjIndex2}, 
	which implies that  $I_\rmS(\theta,\scrA)$ is a projector  for any $\rho(\theta)\in \scrD(\caH)$, given that the eigenvalues of $I_\rmS(\theta,\scrA)$ lie in the closed interval $[0,1]$ by \eref{eq:FIMmaQCRGMI}. So $\scrA$ is  universally Fisher sharp, which confirms the implication $1\imply 2$.

	If 	$\scrA$ is universally Fisher sharp, then $I_\rmS(\theta,\scrA)$ is a projector  for each $\rho(\theta)\in \scrD(\caH)$ by definition, so all  operators in statement 4 are projectors for each $\rho(\theta)\in \scrD(\caH)$ according to \lref{lem:FIprojectors} below. This observation confirms the implication $2\imply 4$. 
	
	The implication $4\imply 5$ is obvious. The implication $5\imply 3$ follows from \lref{lem:FIprojectors} below.
	
	It remains to prove the implication $3\imply 1$. Suppose  $\scrA$ is Fisher sharp at $\rho=\rho(\theta)$; then $I_\rmS(\theta,\scrA)$ is a projector at $\rho(\theta)$, so  $\caF_\rmS(\rho,\scrA)$ is also a projector at $\rho$ by \lref{lem:FIprojectors} below.	Therefore,
	\begin{align}
	\gamma(\scrA)&=\mu(\caF_\rmS(\rho,\scrA))=\rk(\caF_\rmS(\rho,\scrA))\nonumber\\
	&=\rk(\caF(\rho,\scrA))=\dim(\spa(\scrA)),
	\end{align}
	where the first equality follows from \thref{thm:SharpProjIndex2}, the second inequality follows from the fact that $\caF_\rmS(\rho,\scrA)$ is a projector,  the third equality follows from  the definition in \eref{eq:FFbarS}, and the fourth equality follows from \lref{lem:FisherRank}.
	The above equation implies that all POVM elements in any given irreducible component of $\scrA$ are proportional to a positive operator. 
	So $\scrA$ is a PVM, given that $\scrA$ is a simple POVM by assumption. This observation confirms the implication $3\imply 1$ and completes the proof of \thref{thm:FisherSharp2}. 
\end{proof}

\begin{lem}\label{lem:FIprojectors}
	Suppose $\rho(\theta)\in \scrD(\caH)$ and $\scrA$ is a  POVM on $\caH$. If one of the nine operators/superoperators $I_\rmS$, $I_\rmL$, $I_\rmR$, $\bar{\caF}_\rmS$, $\bar{\caF}_\rmL$, $\bar{\caF}_\rmR$, $\caF_\rmS$, $\caF_\rmL$, $\caF_\rmR$ is a projector at $\rho(\theta)$, then all of them are projectors  at $\rho(\theta)$. 
\end{lem}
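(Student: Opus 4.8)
The plan is to funnel all nine objects into statements about the three superoperators $\caF_\rmS,\caF_\rmL,\caF_\rmR$, and then to use that, for a positive semidefinite operator, ``being a projector'' is the same as ``trace equals rank'' once its spectral norm is known to be at most one.

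First I would reduce the $I$'s and the $\bar\caF$'s to the $\caF$'s. By \eqref{eq:FIMma} and the discussion in \aref{app:RLD}, $I_\rmS$, $I_\rmR$, $I_\rmL$ are matrix representations of $\bar\caF_\rmS$, $\bar\caF_\rmL$, $\bar\caF_\rmR$ with the same nonzero spectra (multiplicities included); since every object in sight is positive semidefinite, $I_\rmX$ is a projector exactly when the associated $\bar\caF$ is. Next I would invoke \lref{lem:FisherEigMul}, which gives $\mu_\lambda(\bar\caF_\rmS)=\mu_\lambda(\caF_\rmS)$ for all $\lambda\notin\{0,1\}$, and likewise with $\rmL$ or $\rmR$; hence $\bar\caF_\rmS$ has an eigenvalue outside $\{0,1\}$ iff $\caF_\rmS$ does, and since both are positive semidefinite, $\bar\caF_\rmS$ is a projector iff $\caF_\rmS$ is (and similarly for the $\rmL$ and $\rmR$ versions). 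At this point the lemma has collapsed to the single assertion that $\caF_\rmS$, $\caF_\rmL$, $\caF_\rmR$ are projectors simultaneously.

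To finish I would collect three facts that hold for $\caF_\rmX$, $\rmX\in\{\rmS,\rmL,\rmR\}$, on equal footing: (i) $\|\caF_\rmX\|=1$, so every eigenvalue of $\caF_\rmX$ lies in $[0,1]$, by \lref{lem:FLRnorm}; (ii) $\rk\caF_\rmX=\rk\caF=\dim\spa(\scrA)$, because $\caF_\rmX=\caJ_\rmX^{-1/2}\caF\caJ_\rmX^{-1/2}$ with $\caJ_\rmX$ invertible (cf.\ \eqref{eq:FFbarS} and \lref{lem:FisherRank}); and (iii) $\tr\caF_\rmX=\sum_{j:A_j\neq 0}\tr(\rho A_j^2)/\tr(\rho A_j)$, which is visibly independent of the choice of $\rmS$, $\rmL$, $\rmR$, by \lref{lem:GMgen}. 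A positive semidefinite operator with eigenvalues in $[0,1]$ is a projector iff its trace equals its rank, since its $\rk$ nonzero eigenvalues then lie in $(0,1]$ and sum to $\rk$, forcing each to equal $1$. Because trace and rank agree across $\caF_\rmS,\caF_\rmL,\caF_\rmR$, they are projectors together, closing the circle of equivalences.

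The proof is mostly assembly of earlier lemmas; the one place a naive approach stalls is the passage between $\caF_\rmS$ and $\caF_\rmL$ (or $\caF_\rmR$). It is tempting to read it off the majorization $\caF_\rmS\preceq\caF_\rmL$ of \lref{lem:FSLRmajor}, but a projector spectrum is majorization-maximal among $[0,1]$-valued vectors of fixed trace, so that relation delivers only ``$\caF_\rmS$ a projector $\Rightarrow$ $\caF_\rmL$ a projector'' and nothing in the reverse direction. The trace-equals-rank criterion is what treats both directions uniformly, so it, rather than the majorization, is the genuinely load-bearing observation.
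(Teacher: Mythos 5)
Your proof is correct, and it shares the paper's overall skeleton---pair each $I_\rmX$ with the corresponding $\bar{\caF}$ via equal nonzero spectra, pass to the unbarred $\caF$'s, and then certify ``projector'' by matching scalar spectral invariants under the norm bound of \lref{lem:FLRnorm}---but the invariants you match are different, and this changes the dependency structure in a worthwhile way. The paper characterizes a projector by $\tr=\mu$ (multiplicity of the eigenvalue $1$) and therefore needs $\mu(\caF_\rmS)=\mu(\caF_\rmL)=\mu(\caF_\rmR)$, which it imports from \thref{thm:SharpProjIndex2}, i.e., from the irreducible-component decomposition together with the nondegeneracy part of \lref{lem:FLRnorm}. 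You instead use $\tr=\rk$, and equality of ranks across $\caF_\rmS,\caF_\rmL,\caF_\rmR$ is immediate because each is an invertible congruence $\caJ_\rmX^{-1/2}\caF\caJ_\rmX^{-1/2}$ of the same $\caF$; combined with the common trace $\sum_{j}\tr(\rho A_j^2)/\tr(\rho A_j)$ from \lref{lem:GMgen}, this closes the loop without invoking \thref{thm:SharpProjIndex2} at all. Your barred/unbarred passage via \lref{lem:FisherEigMul} is also legitimate, since for Hermitian positive semidefinite (super)operators ``projector'' is exactly ``no eigenvalue outside $\{0,1\}$''. What the paper's route buys in exchange is the explicit quantitative identity behind the equivalence (all the $\mu$'s equal $\gamma(\scrA)$ or $\gamma(\scrA)-1$), which it needs elsewhere anyway; what yours buys is economy and self-containedness. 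Your closing remark about \lref{lem:FSLRmajor} is accurate as well: the majorization $\caF_\rmS\preceq\caF_\rmL$ yields only the implication from $\caF_\rmS$ being a projector to $\caF_\rmL$ being one, so a two-sided criterion such as trace-equals-rank (or trace-equals-$\mu$) is indeed the load-bearing step.
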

\begin{proof}[Proof of \lref{lem:FIprojectors}]
	Note that $I_\rmS$ and $\bar{\caF}_\rmS$ have the same nonzero spectrum; $I_\rmL$ and $\bar{\caF}_\rmR$ have the same nonzero spectrum;  $I_\rmR$ and $\bar{\caF}_\rmL$ have the same nonzero spectrum (see \aref{app:RLD}). 
	In addition, 	
	all the operators/superoperators in \lref{lem:FIprojectors} are positive semidefinite and their eigenvalues lie in the closed interval $[0,1]$ according to \eref{eq:FIMmaQCRGMI} and \lref{lem:FLRnorm}. So $I_\rmS$ is a projector iff 
	\begin{align}
	\tr(I_\rmS)=\mu(I_\rmS),
	\end{align}
	and the same conclusion holds for the other eight operators/superoperators.

	Furthermore,  \lref{lem:GMgen} and \thref{thm:SharpProjIndex2} (cf. \lref{lem:FisherEigMul}) imply that
	\begin{align}
	\tr(I_\rmS)&=\tr(I_\rmL)=\tr(I_\rmR)=\tr(\bar{\caF}_\rmS)=\tr(\bar{\caF}_\rmL)=\tr(\bar{\caF}_\rmR)\nonumber\\
	&
	=\tr(\caF_\rmS)-1=\tr(\caF_\rmL)-1=\tr(\caF_\rmR)-1,\\
	\mu(I_\rmS)&=\mu(I_\rmL)=\mu(I_\rmR)=\mu(\bar{\caF}_\rmS)=\mu(\bar{\caF}_\rmL)=\mu(\bar{\caF}_\rmR)\nonumber\\
	&
	=\mu(\caF_\rmS)-1=\mu(\caF_\rmL)-1=\mu(\caF_\rmR)-1. 
	\end{align}
	Therefore, if one of the nine operators/superoperators in \lref{lem:FIprojectors} is a projector, then all of them are projectors, which completes the proof of \lref{lem:FIprojectors}.	
\end{proof}

\end{document}